
\documentclass[11pt,reqno]{amsart}

\usepackage{xcolor,mathtools,bbm}

 \usepackage{framed}



\usepackage{geometry}                
\geometry{letterpaper}

\setlength{\parskip}{.4cm}
\geometry{centering}
\usepackage{amsmath,amssymb,amsthm}

\usepackage{amsmath,amssymb,amsthm}
\usepackage[authoryear]{natbib}
\usepackage{graphicx}
\usepackage{comment}







\theoremstyle{plain}
\newtheorem{theorem}{Theorem}

\newtheorem{lemma}{Lemma}
\newtheorem{corollary}{Corollary}

 \newtheorem{assumption}{Assumption}
\newtheorem{algorithm}{Algorithm}
\theoremstyle{definition}

\newtheorem{remark}{Remark}

\DeclareMathOperator{\diag}{diag}

\newcommand{\mT}{\mathcal{T}}
\newcommand{\mK}{\mathcal{K}}

\newcommand{\mB}{\mathcal{B}}
\newcommand{\mD}{\mathcal{D}}
\renewcommand{\qed}{\hfill{\tiny \ensuremath{\blacksquare} }}%

\newcommand{\lto}{\leftarrow}
\newcommand{\Ep}{{\mathbb{E}}}

\renewcommand{\Pr}{{\mathrm{P}}}

\newcommand{\mC}{\mathcal{C}}

\newcommand{\mA}{\mathcal{A}}
\newcommand{\mY}{\mathcal{Y}}
\newcommand{\mX}{\mathcal{X}}

\newcommand{\plim}{\operatorname*{plim}}
\newcommand{\argmax}{\operatorname*{argmax}}

\oddsidemargin -0.25in \evensidemargin -0.25in \marginparwidth 1in
\marginparsep 0pt \topmargin 0pt \headheight 0in \headsep 0.5in
\textheight 8.75in \textwidth 7in \topskip 0pt \footskip 1cm



 




\usepackage{tikz}
\usetikzlibrary {positioning}

\usepackage{bm}
\usepackage{epstopdf}

\begin{document}

\title[Network and Panel Distribution Regression]{ Network and Panel Quantile Effects \\ Via Distribution Regression}
\thanks{Initial Discussion: 7/8/2015. We thank the editor Xiaohong Chen, two anonymous referees, Manuel Arellano, Joao Santos Silva and seminar participants at Bonn, Boston College, Bristol, Cemfi, Chicago, 2017 EEA-ESEM Conference, Humboldt,  Northwestern, Princeton, Surrey, UC-Davis, UC-Irvine,  USC, UConn New Frontiers in Econometrics Conference, York,  UvA-Econometrics Panel Data Workshop, and Berkeley/CeMMAP Conference on Networks for comments. Gianluca Russo and Siyi Luo provided capable research assistance.  
Financial support from the National Science Foundation, 
 Economic and Social Research Council through the ESRC Centre for Microdata Methods and Practice grant RES-589-28-0001,
and European Research Council grants ERC-2014-CoG-646917-ROMIA and ERC-2018-CoG-819086-PANEDA
 is gratefully   acknowledged.}
\author{Victor Chernozhukov \and Ivan Fernandez-Val  \and Martin Weidner }
\date{\today}
\begin{abstract}  This paper provides a method to construct simultaneous confidence bands for quantile functions and quantile effects in nonlinear network and panel models with unobserved two-way effects, strictly exogenous covariates, and possibly discrete outcome variables. The method is based upon projection of simultaneous confidence bands for distribution functions constructed from fixed effects distribution regression estimators. These fixed effects estimators are debiased to deal with the incidental parameter problem. Under asymptotic sequences where both dimensions of the data set grow at the same rate, the confidence bands for the quantile functions and effects have correct joint coverage in large samples. An empirical application to gravity models of trade illustrates the applicability of the methods to network data.
\end{abstract}

\maketitle 

\medskip
\noindent
{\bf Keywords:}  
Quantile Effects, Counterfactual Distributions, Fixed Effects, Incidental Parameter Problem, Long Panels

%
%
%
%
%
%

\section{Introduction}

Standard regression analyzes average    effects of covariates  on outcome variables. In many applications it is equally important to consider distributional effects. For example, a policy maker might be interested in the effect of an education reform not only on the mean but also the entire distribution of test scores or wages. Availability of panel data is very useful to identify \textit{ceteris paribus} average and distributional effects because it allows the researcher to control for multiple sources of unobserved heterogeneity that might cause endogeneity or omitted variable problems. The idea is to use variation of the covariates over time for each individual or over individuals for each time period to account for unobserved individual and time effects. In this paper we develop inference methods for distributional effects in nonlinear models with  two-way unobserved effects. They apply not only to traditional panel data models where the unobserved effects correspond to individual and time fixed effects, but also to models for other types of data where the unobserved effects reflect some grouping structure such as unobserved sender and receiver effects in network data models.
 The unobserved effects will be treated as fixed effects, i.e. parameters to be estimated, leaving their relation to observed covariates unrestricted.

We develop inference methods for quantile functions and effects. The quantile function corresponds to the marginal distribution of the outcome in  a counterfactual scenario where the treatment covariate of interest is set exogenously at a desired level and the rest of the covariates and unobserved effects are held fixed, extending the construction of Chernozhukov Fernandez-Val and Melly~\citeyearpar{ChernozhukovFernandezValMelly2013} for the cross section case. The quantile effect is the difference of quantile functions at two different treatment levels. Our methods apply to continuous and discrete treatments by appropriate choice of the treatment levels, and have causal interpretation under standard unconfoundedness assumptions for panel data.
The inference is based upon the generic method of Chernozhukov, Fernandez-Val, Melly, and Wuthrich \citeyearpar{CFVMW2015} that projects joint confidence bands for distributions into joint confidence bands for quantile functions and effects. This method has the appealing feature that applies without modification to any type of outcome, let it be continuous, discrete or mixed.  

The key input for the inference method is a joint confidence band for the counterfactual distributions at the treatment levels of interest. We construct this band from fixed effects distribution regression (FE-DR) estimators of the conditional distribution of the outcome given the observed covariates and unobserved effects. In doing so, we extend the distribution regression approach to model conditional distributions  with unobserved effects. This version of the DR model is semiparametric because not only  the DR coefficients can vary with the level of the outcome as in the cross section case, but also the distribution of the unobserved effects is left unspecified. We show that the FE-DR estimator can be obtained as a sequence of binary response fixed effects estimators where the binary response is an indicator of the outcome passing some threshold.    To deal with the incidental parameter problem associated with the estimation of the unobserved effects (Neyman and Scott~\citeyearpar{NeymanScott1948}), we extend the analytical bias corrections of \cite{FernandezValWeidner2016} for single binary response estimators to multiple  (possibly a continuum) of binary response estimators. In particular,  the main technical contribution is to establish  functional central limit theorems for the fixed effects estimators of the DR coefficients and associated counterfactual distributions, and show the validity of the bias corrections under asymptotic sequences where the two dimensions of the data set pass to infinity at the same rate.  As in the single binary response model, the bias corrections remove the asymptotic bias of the fixed effects estimators without increasing their asymptotic variances.  

We implement the inference method using multiplier bootstrap \citep{gz-84}. This version of bootstrap constructs draws of an estimator as weighted averages of its influence function, where the weights are independent from the data. Compared to empirical bootstrap, multiplier bootstrap has the computational advantage that it does not involve any parameter reestimation. This advantage is particularly convenient in our setting because the parameter estimation require multiple nonlinear optimizations that can be highly dimensional due to the fixed effects. Multiplier bootstrap is also convenient to account for data dependencies. In network data, for example, it might be important to account for reciprocity or pairwise clustering. Reciprocity arises because observational units corresponding to the same pair of agents but reversing their roles as sender and receiver might be dependent even after conditioning on the unobserved effects. By setting the weights of these observational units equal, we account for this dependence  in the multiplier bootstrap. In addition to the previous practical reasons, there are some theoretical reasons for choosing multiplier bootstrap. Thus, \cite{cck-16} established bootstrap functional central limit theorems for multiplier bootstrap in high dimensional settings that cover  the network and panel models that we consider.

The methods developed in this paper apply to models that include unobserved effects to capture  grouping or clustering structures in the data such as models for panel and network data. These effects allow us to control for unobserved group heterogeneity that might be related to the covariates causing endogeneity or omitted variable bias. They also serve to parsimoniously account for dependencies in the data. We illustrate the wide applicability with an empirical example to gravity models of trade.  In this case the outcome is the volume of trade between two countries and each observational unit corresponds to a country pair indexed by  exporter country (sender) and importer country (receiver). We estimate the distributional effects of gravity variables such as the geographical distance controlling for exporter and importer country effects that pick up  unobserved heterogeneity possibly correlated with the gravity variables. We uncover significant heterogeneity in the effects of distance and other gravity variables across the distribution, which is missed by traditional mean methods. We also find that the Poisson model, which is commonly used in the trade literature to deal with zero trade in many country pairs,  does not provide a good approximation to the distribution of the volume of trade due to heavy tails.


%


\paragraph{\textbf{Literature review}} Unlike mean effects, there are different ways to define distributional and quantile effects. For example, we can distinguish conditional effects versus unconditional or marginalized effects, or  quantile effects versus quantiles of the effects.  Here we give a brief review of the recent literature on distributional and quantile effects in panel data models emphasizing  the following aspects: (1) type of effect considered;  (2) type of unobserved effects in the model; and (3) asymptotic approximation. For the unobserved effects, we distinguish models with one-way effects versus two-way effects. For the asymptotic approximation we distinguish  short panels with large $N$ and fixed $T$ versus long panels with large $N$ and large $T$ , where $N$ and $T$ denote the dimensions of the panel.  We focus mainly on fixed effects approaches where the unobserved effects are treated as parameters to be estimated, but also mention some correlated random effects approaches that impose restrictions on the distribution of the unobserved effects. This paper deals with inference on marginalized quantile effects in large panels with two-way effects, which has not been previously considered in the literature. Indeed, to the best of our knowledge, it is the first paper to provide inference methods for quantile treatment effects from panel and network models with two-way fixed effects. 

Koenker~\citeyearpar{Koenker2004} introduced fixed effects quantile regression estimators of conditional quantile effects in large panel models with one-way individual effects using shrinkage to control the variability in the estimation of the unobserved effects. \cite{Lamarche2010} discussed the optimal choice of a tuning parameter in Koenker's method. In the same framework, Kato, Galvao, and Montes-Rojas~\citeyearpar{KatoGalvaoMontesRojas2012},
Galvao, Lamarche and Lima~\citeyearpar{GalvaoLamarcheLima2013},
Galvao and Kato~\citeyearpar{GalvaoKato2016} and Arellano and Weidner~\citeyearpar{ArellanoWeidner2016} considered fixed effects quantile regression estimators without shrinkage and developed bias corrections.  All these papers require that $T$ pass to infinity faster than $N$, making it difficult to extend the theory to models with two-way individual and time effects.  Graham, Hahn and Powell~\citeyearpar{GrahamHahnPowell2009} found a special case where the fixed effects quantile regression estimator does not suffer of  incidental parameter problem. Machado and Santos Silva~\citeyearpar{mss18} has recently proposed a method to estimate conditional quantile effects in a location-scale model via moments. 

In short panels, Rosen~\citeyearpar{Rosen2012} showed that a linear quantile restriction is not sufficient to point identify conditional effects in 
a panel linear  quantile regression model with unobserved individual effects.   Chernozhukov, Fernandez-Val, Hahn and Newey \citeyearpar{ChernozhukovFernandezValHahnNewey2013} and \cite{chernozhukov2015nonparametric} discussed identification and estimation of marginalized quantile effects in nonseparable panel models with unobserved individual effects and location and scale time effects under a time homogeneity assumption. They showed that the effects are point identified only for some subpopulations and characterized these subpopulations. Graham, Hahn, Poirier and Powell~\citeyearpar{GrahamHahnPoirierPowell2015} considered quantiles of effects  in linear quantile regression models with two-way effects.  Finally, Abrevaya and Dahl~\citeyearpar{AbrevayaDahl2008} and Arellano and Bonhomme~\citeyearpar{ArellanoBonhomme2016} developed estimators for conditional quantile effects in linear quantile regression model with unobserved individual effects using correlated random effects approaches. None of the previous quantile regression based methods apply to discrete outcomes.

Finally, we review previous applications of panel data methods to network data. These include \cite{candelaria16}, \cite{charbonneau17}, \cite{CFW-16}, \cite{Dzemski2017},  \cite{FernandezValWeidner2016}, \cite{gao20}, \cite{Graham2016, Graham2017},  \cite{jochmans18}, \cite{toth17}, and  \cite{Yan2016statistical}, which developed methods for models of network formation with unobserved sender and receiver effects for directed and undirected networks.\footnote{We refer to \cite{paula19} for an excellent up to date review on this topic.} None of these papers consider estimation of quantile effects as the outcome variable is binary, whether or not a link is formed between two agents.

\paragraph{\textbf{Plan of the paper}} Section \ref{sec:model} introduces the distribution regression model with unobserved effects for network and panel data, and describes the quantities of interest including model parameters, distributions, quantiles and quantile effects. Section \ref{sec:estimation} discusses fixed effects estimation, bias corrections to deal with the incidental parameter problem, and uniform inference methods. Section \ref{sec:asymptotics} provides asymptotic theory for the fixed effects estimators, bias corrections, and multiplier bootstrap. Section \ref{sec:trade} and \ref{sec:mc} report results of the empirical application to the gravity models of trade and a Monte Carlo simulation calibrated to the application, respectively. The proofs of the main results are given in the Appendix, and additional technical results are provided in the Supplementary Appendix. 

\paragraph{\textbf{Notation}} 
For any two real
numbers $a$ and $b$, $a\vee b = \max\{a,b\}$ and $a\wedge b =
\min\{a,b\}$.  For a real number $a$, $\lfloor a \rfloor$ denotes the integer part of $a$. For a set $\mA$, $| \mA|$ denotes the cardinality or number of elements of $\mA$.
 
\section{Model and Parameters of Interest}\label{sec:model}

\subsection{Distribution Regression Model with Unobserved Effects} We observe the data set $\{(y_{ij}, x_{ij}) : (i,j) \in \mD \}$, where $y_{ij}$ is a scalar outcome variable with region of interest $\mY$, and $x_{ij}$ is a vector of covariates with support $\mX \subseteq \mathbb{R}^{d_x}$.\footnote{If $y_{ij}$ has unbounded support, then the region $\mY$ is usually a subset of the support to avoid tail estimation.}  The variable $y_{ij}$ can be discrete,  continuous or mixed. The subscripts $i$ and $j$ index individuals and  time periods in traditional panels, but they might index other dimensions in more general data structures. In our empirical application, for example, we use a panel where $y_{ij}$ is the volume of trade between country $i$ and country $j$, and $x_{ij}$ includes gravity variables such as the distance between country $i$ and country $j$.  Both $i$ and $j$  index countries as exporters and importers respectively. The set  $\mD$ contains the indexes of the pairs $(i,j)$ that are observed.  It is a subset of the set of all possible pairs $\mD_0  := \{(i,j) :  i = 1,\dots,I; j = 1, \dots, J \}$, where  $I$ and $J$ are the dimensions of the panel. 
We introduce $\mD$ to  allow for certain forms of missing data that are common in panel and network applications, see Assumption \ref{ass:baseline}(v) in Section \ref{sec:asymptotics}. For example, in the trade application  $I=J$ and $\mD = \mD_0 \setminus  \{(i,i) :  i = 1,\dots,I \}$ because we do not observe trade of a country with itself. We denote the total number of observed units by $n$, i.e. $n = |\mD|$. 

Let $v_i$ and $w_j$ denote vectors of unspecified dimension that contain unobserved random variables or effects that might be related to the covariates $x_{ij}$. In traditional panels, $v_i$ are  individual effects that capture unobserved individual heterogeneity and $w_j$ are time effects that account for aggregate shocks. More generally, these variables serve  to capture some forms of endogeneity and group dependencies in a parsimonious fashion. We specify the conditional distribution of $y_{ij}$ given $(x_{ij},v_i,w_{j})$ using the \textit{distribution regression (DR) model with unobserved effects} 
\begin{equation}\label{eq:pdr}
F_{y_{ij} }(y \mid x_{ij}, v_i, w_{j}) = \Lambda_y(P(x_{ij})'\beta(y) +  \alpha(v_i,y) + \gamma(w_{j},y)), \ \ y \in \mY,  \ \ (i,j) \in \mD,
\end{equation}
where $\Lambda_y$ is a known link function such as the normal or logistic distribution, which may vary with $y$, $x \mapsto P(x)$ is a dictionary of transformations of $x$ such us polynomials, b-splines and tensor products, $\beta(y)$ is an unknown parameter vector, which can vary with $y$, and $(v,y) \mapsto \alpha(v,y)$ and $(w,y) \mapsto \gamma(w,y)$ are unspecified measurable functions.  
This DR model is a semiparametric model for the conditional distribution because $y \mapsto  \theta(y) := (\beta(y),\alpha(v_1,y), \ldots, \alpha(v_I,y), \gamma(w_1,y), \ldots,  \gamma(w_J,y))$ is a function-valued parameter and the dimension of $\theta(y)$ varies with $I$ and $J$, although we do not make this dependence explicit. We shall treat the dimension of $P(x)$ as fixed  and set $\Lambda_y$ equal to the logistic distribution for all $y$ in the asymptotic analysis.

When $y_{ij}$ is continuous, the model \eqref{eq:pdr} has the following representation as an implicit nonseparable model by the probability integral transform
$$
\Lambda_{y_{ij}}(P(x_{ij})'\beta(y_{ij}) + \alpha(v_i,y_{ij}) + \gamma(w_{j}, y_{ij})) = u_{ij}, \ \ u_{ij} \mid x_{ij}, v_i, w_{j} \sim U(0,1),
$$
where the error $u_{ij}$ represents the unobserved ranking of the observation $y_{ij}$ in the conditional distribution. 
The parameters of the model are related to derivatives of the conditional quantiles. Let $Q_{y_{ij}}(u \mid x_{ij}, v_{i}, w_{j})$ be the $u$-quantile of $y_{ij}$ conditional on $(x_{ij}, v_{i}, w_{j})$ defined as the left-inverse of $y \mapsto  F_{y_{ij} }(y \mid x_{ij}, v_i, w_{j})$ at $u$, namely
$$
Q_{y_{ij} }(u \mid x_{ij}, v_{i}, w_{j}) = \inf\{y \in \mY : F_{y_{ij} }(y \mid x_{ij}, v_i, w_{j}) \geq u\}  \wedge \sup \{ y \in {\mY}\},
$$
 and $x_{ij} = (x_{ij}^1,\ldots, x_{ij}^{d_x})$.\footnote{We use the convention $\inf \{\emptyset\} = + \infty$.} 
 Then, it can be shown that if $y \mapsto F_{y_{ij} }(y \mid x_{ij}, v_i, w_{j})$ is strictly increasing in the support of $y_{ij}$, $\partial \Lambda_y(z)/\partial z > 0$ for all $y$ in the support of  $y_{ij}$ and $x_{ij} \mapsto Q_{y_{ij} }(u \mid x_{ij}, v_{i}, w_{j})$ is differentiable,\footnote{Indeed, $\Lambda_y(P(x_{ij})'\beta(y) + \alpha(v_i,y) + \gamma(w_{j}, y)) =u$ at $y = Q_{y_{ij}}(u \mid x_{ij}, v_{i}, w_{j})$. Differencing this expression with respect to $x_{ij}^k$ yields $$\left.  \partial_{x_{ij}^{k}} P(x_{ij})'\beta(y) \right|_{y = Q_{y_{ij}}(u \mid x_{ij}, v_{i}, w_{j})} = - \left. \frac{\partial \Lambda_y\left(P(x_{ij})'\beta(y) + \alpha(v_i,y) + \gamma(w_{j}, y) \right)/\partial y}{\lambda_y\left(P(x_{ij})'\beta(y) + \alpha(v_i,y) + \gamma(w_{j}, y)\right)}  \right|_{y = Q_{y_{ij}}(u \mid x_{ij}, v_{i}, w_{j})}  \partial_{x_{ij}^{k}} Q_{y_{ij} }(u \mid x_{ij}, v_{i}, w_{j}),$$
where $\lambda_y(z) = \partial \Lambda_y(z)/\partial z$. Note that the first term of the right hand side does not depend on $k$ and is positive because $y \mapsto F_{y_{ij} }(y \mid x_{ij}, v_i, w_{j}) = \Lambda_y(P(x_{ij})'\beta(y) + \alpha(v_i,y) + \gamma(w_{j}, y))$ is strictly increasing at $y = Q_{y_{ij}}(u \mid x_{ij}, v_{i}, w_{j})$.}
$$
\left. \partial_{x_{ij}^{k}} P(x_{ij})'\beta(y) \right|_{y=Q_{y_{ij} }(u \mid x_{ij}, v_{i}, w_{j})}  \propto - \partial_{x_{ij}^{k}} Q_{y_{ij} }(u \mid x_{ij}, v_{i}, w_{j}), \ \ k = 1,\ldots,d_x, \ \ \partial_{x_{ij}^{k}} := \partial/ \partial x_{ij}^k.
$$
If $P(x_{ij}) = x_{ij}$, then $\partial_{x_{ij}^{k}} P(x_{ij})'\beta(y) = \beta_{k}(y)$  such that
$$
\left. \frac{\beta_{\ell}(y)}{\beta_k(y)} \right|_{y = Q_{y_{ij}}(u \mid x_{ij}, v_{i}, w_{j})} = \frac{\partial_{x_{ij}^{\ell}} Q_{y_{ij} }(u \mid x_{ij}, v_{i}, w_{j})}{\partial_{x_{ij}^k} Q_{y_{ij} }(u \mid x_{ij}, v_{i}, w_{j})}, \ \ \ell,k = 1,\ldots,d_x, 
$$
provided that $\partial_{x_{ij}^k} Q_{y_{ij} }(u \mid x_{ij}, v_{i}, w_{j}) \neq 0.$ The DR coefficients therefore are proportional to (minus) derivatives of the conditional quantile function, and ratios of DR coefficients correspond to ratios of derivatives.

\begin{remark}[Parametric models] There are many parametric models that are special cases of the DR model. Thus, \cite{ChernozhukovFernandezValMelly2013} and Chernozhukov, Fernandez-Val, Melly, and Wuthrich \citeyearpar{CFVMW2015} showed that the standard linear model, Cox proportional hazard model and Poisson regression model are encompassed by the DR model in the cross section case. These inclusions carry over to the panel versions of these models with two-way unobserved effects. \qed
\end{remark}

\subsection{Estimands} In addition to the model parameter $\beta(y)$, we are interested in measuring the effect on the outcome of changing one of the covariates holding the rest of the covariates and the unobserved effects fixed. Let $x = (t, z')'$, where $t$ is the covariate of interest or treatment and $z$ are the rest of the covariates that usually play the role of controls. One effect of interest is the quantile (left-inverse) function (QF)
$$
Q_k(\tau) = F_k^{\leftarrow}(\tau) := \inf \{y \in \mY : F_k(y) \geq \tau\}  \wedge \sup \{ y \in {\mY}\}, \ \ \tau \in (0,1),
$$
where 
$$
F_k(y) = n^{-1} \sum_{(i,j) \in \mD} \Lambda_y(P(t_{ij}^k,z_{ij}')'\beta(y) + \alpha(v_i,y) + \gamma(w_j,y)),
$$
$t_{ij}^k$ is a level of the treatment that may depend on $t_{ij}$, and $k \in \{0,1\}$.
  We provide examples below. Note that in the construction of the counterfactual distribution $F_k$, we marginalize $(x_{ij},v_i,w_j)$ using the empirical distribution. The resulting effects are \textit{finite population} effects.  We shall focus on these effects because conditioning on the covariates and unobserved effects is natural in the trade application.\footnote{The distinction between finite and infinite population effects does not affect estimation, but affects inference \citep{AAIW-14}. The estimators of infinite population effects need to account for the additional sampling variation coming from the estimation of the distribution of $(x_{ij},v_i,w_j)$.}
We construct the quantile effect function (QEF) by taking differences of the QF at two treatment levels
$$
\Delta(\tau) = Q_1(\tau) - Q_0(\tau), \ \ \tau \in (0,1).
$$ 


We can also obtain the average effect using the relationship between averages and distributions. Thus, the average effect is
$$
\Delta = \mu_1 - \mu_0,
$$ 
where $\mu_k$ is the counterfactual average obtained from $F_k$ as 
\begin{equation}\label{eq:ae}
\mu_k = \int   \, [1(y \geq 0) - F_k(y)] \, dy, \ \ k \in \{0,1\}.
\end{equation}
The integral in \eqref{eq:ae} is over the real line, but the formula nevertheless is applicable to the
case where the support of $dF_k$ is discrete or mixed.

The choice of the levels $t_{ij}^0$ and $t_{ij}^1$ is usually based on the scale of the treatment:
\begin{itemize}
\item If the treatment is binary, $\Delta(\tau)$ is the $\tau$-quantile treatment effect with $t_{ij}^0 = 0$ and $t_{ij}^1 = 1$.
\item If the treatment is continuous,  $\Delta(\tau)$ is the $\tau$-quantile effect of a unitary or one standard deviation increase in the treatment  with $t_{ij}^0 = t_{ij}$ and $t_{ij}^1 = t_{ij} + d$, where $d$ is $1$ or the standard deviation of $t_{ij}$.  
\item If the treatment is the logarithm of a continuous treatment,  $\Delta(\tau)$ is the $\tau$-quantile effect of doubling the treatment (100\% increase) with $t_{ij}^0 = t_{ij}$ and $t_{ij}^1 = t_{ij} +\log 2$. 
\end{itemize}
For example, in the trade application we use the levels $t_{ij}^0 = 0$ and $t_{ij}^1 = 1$ for binary covariates such as the indicators for common legal system and free trade area, and $t_{ij}^0 = t_{ij}$ and $t_{ij}^1 = t_{ij} +\log 2$ for the logarithm of distance.

All the previous estimands have causal interpretation under the standard unconfoundedness or conditional independence assumption for panel data where the conditioning set includes not only the observed controls but also the unobserved effects.


\section{Fixed Effects Estimation and  Uniform Inference} \label{sec:estimation}
To simplify the notation in this section we write  $P(x_{ij}) = x_{ij}$ without loss of generality, and define $\alpha_i(y) :=  \alpha(v_i,y)$ and $\gamma_{j}(y) :=  \gamma(w_{j},y)$. 

\subsection{Fixed Effects  Distribution Regression  Estimator} 
\label{sec:DRestimators}
The parameters of the DR model can be estimated from multiple binary regressions with two-way effects.  To see this, note that the conditional distribution  in \eqref{eq:pdr} can be expressed as 
$$
\Lambda_y(x_{ij}'\beta(y) + \alpha_i(y) + \gamma_{j}(y)) = \Ep[1\{y_{ij} \leq y \} \mid x_{ij}, v_i, w_{j}] .
$$
Accordingly, we can construct  a collection of  binary variables, 
$$1\{y_{ij} \leq y\},  \quad (i,j) \in \mD,  \quad  y \in  \mY,$$
and estimate the parameters  for each $y$ by conditional maximum likelihood with fixed effects. Thus, $\widehat \theta(y) := (\widehat \beta(y),$  $\widehat \alpha_1(y), \ldots,  \widehat \alpha_I(y),$ $\widehat \gamma_1(y), \ldots, \widehat \gamma_{J}(y))$, the \textit{fixed effects distribution regression estimator} of $\theta(y)  := (\beta(y),$  $\alpha_1(y), \ldots,  \alpha_I(y),$ $\gamma_1(y), \ldots,  \gamma_{J}(y))$, is obtained as
\begin{align}
\widehat \theta(y) \in \argmax_{\theta \in \mathbb{R}^{d_x + I + J}}  \sum_{(i,j) \in \mD} 
&
\bigg(1\{y_{ij} \leq y \} \log \Lambda_y(x_{ij}'\beta + \alpha_i + \gamma_{j}) 
\nonumber \\ & \qquad \qquad
+ 1\{y_{ij} > y \} \log [1 - \Lambda_y(x_{ij}'\beta + \alpha_i + \gamma_{j})]
\bigg),
\label{fe-cmle}
\end{align}
for $y \in \mY$.
When the link function is the normal or logistic distribution, the previous program is concave and smooth in parameters and therefore has good computational properties. See \cite{FernandezValWeidner2016}, \cite{CFW-16} and \cite{alpaca17} for a discussion on computation of logit and probit regressions with two-way effects and available software.

The quantile functions and effects are estimated via  plug-in rule, i.e.,
$$
\widehat Q_k(\tau) = \widehat F_k^{\leftarrow}(\tau) \wedge \sup \{ y \in {\mY}\}, \quad \tau \in (0,1), \quad k \in \{0,1\}, 
$$
where
$$
\widehat F_k(y) = n^{-1} \sum_{(i,j) \in \mD}  \Lambda_y((t_{ij}^k,z_{ij}')' \widehat \beta(y) + \widehat \alpha_i(y) + \widehat \gamma_{j}(y)), \ \ y \in {\mY},
$$
and
$$
\widehat \Delta(\tau) = \widehat Q_1(\tau) - \widehat Q_0(\tau) \quad \tau \in (0,1).
$$

{\begin{remark}[Computation]\label{remark:computation} When $\mY$ is not finite, we replace $\mY$ by a finite subset $\bar{\mY}$. Theoretically, this approximation works provided that the Hausdorff distance between $\bar{\mY}$ and $\mY$ goes to zero at a rate faster than $1/{\sqrt n}$. In practice,  if $\mY$ is an interval $[\underline{y}, \bar{y}]$,  $\bar \mY $ can be a fine mesh of $\sqrt{n} \log \log n$ equidistant points covering $\mY$, i.e., $\bar \mY = \{ \underline{y}, \underline{y}+d, \underline{y}+2d, \ldots,  \bar{y}\}$ for $d = (\bar y - \underline{y})/ (\sqrt{n} \log \log n)$. Alternatively, if $\mY$ is the support of $y_{ij}$, $\bar \mY$ can be a grid of $\sqrt{n} \log \log n$ sample quantiles with equidistant indexes.
\end{remark}

\subsection{Incidental Parameter Problem and Bias Corrections} Fixed effects estimators  can be severely biased in nonlinear models because of the incidental parameter problem (\citealp{NeymanScott1948}). These models include the binary regressions that we estimate to obtain the DR coefficients and estimands. We deal with the incidental parameter problem using the analytical bias corrections of
\cite{FernandezValWeidner2016} for parameters and average partial effects (APE) in binary regressions with two-way effects. We note here that  the distributions $F_0(y)$ and $F_1(y)$ can be seen as APE, i.e., they are averages of functions of the data, unobserved effects and parameters.    

The bias corrections are based  on expansions of the bias of the fixed effects estimators as $I,J \to \infty$. For example,  Theorem \ref{th:expansion} shows that
\begin{equation}\label{eq:bias}
\Ep[\widehat F_k(y) - F_k(y)] = \frac{I}{n} B_k^{(F)}(y)+ \frac{J}{n} D_k^{(F)}(y)+ R^{(F)}_k(y),
\end{equation}
where $n R^{(F)}_k(y) = o(I \vee J)$.\footnote{\cite{FernandezValWeidner2016} considered the case where $n = IJ$, i.e., there is no missing data, so that $I/n = 1/J$ and $J/n = 1/I$.}  In Section \ref{sec:asymptotics}  we establish that this expansion holds uniformly in $y \in \mY$ and $k \in \{0,1\}$, i.e.,  $$\sup_{k \in \{0,1\}, y \in \mY} \| n R^{(F)}_k(y) \| = o(I \vee J).$$ 
This result generalizes the analysis  of  \cite{FernandezValWeidner2016} from a single binary regression to multiple (possibly a continuum) of binary regressions. This generalization is required to implement our  inference methods for quantile functions and effects.

The expansion \eqref{eq:bias} is the basis for the bias corrections. 
Let $\widehat B^{(F)}_k(y)$ and $\widehat D^{(F)}_k(y)$ be  estimators of $B^{(F)}_k(y)$ and $D^{(F)}_k(y)$, which are uniformly consistent in $y \in \mY$ and $k \in \{0,1\}$. Bias corrected  fixed effects estimators of $F_k$ and $Q_k$ are formed as
\begin{eqnarray*}
\widetilde Q_k(\tau) &=& \widetilde F_k^{\leftarrow}(\tau) \wedge \sup \{ y \in {\mY}\}, \\ 
\widetilde F_k(y) &=& \widehat F_k(y) - \frac{I}{n} \widehat B_k(y) - \frac{J}{n} \widehat D_k(y), \ \ y \in {\mY}.
\end{eqnarray*}
We also use the corrected estimators $\widetilde F_k$ as the basis for inference and to form a bias corrected estimator of the average effect.
 

\begin{remark}[Shape Restrictions] If the bias corrected estimator $y \mapsto \widetilde F_k(y)$ is non-monotone on $ \mY$, we can rearrange it into a monotone function by simply sorting the values of
function in a nondecreasing order. \cite{ChernozhukovFernandezValGalichon2009} showed that the rearrangement improves the finite sample properties of the  estimator.  Similarly, if the $\widetilde F_k(y)$ takes values outside of $[0,1]$, winsorizing its range to this interval improves the finite sample properties of the estimator (\cite{ccfkl18}). \qed
\end{remark}

\subsection{Uniform Inference}  One inference goal is to construct confidence bands that cover the QF  $\tau \mapsto Q_k(\tau)$  and the QEF  $\tau \mapsto \Delta(\tau)$ simultaneously over a set of quantiles $\mT \subseteq [\varepsilon,1-\varepsilon]$, for some $0 < \varepsilon < 1/2$, and treatment levels $k \in \mK \subseteq \{0,1\}$. 
The set $\mT$ is chosen such that $Q_k(\tau) \in [\inf \{y \in \mY\}, \sup \{y \in \mY \}]$, for all $\tau \in \mT$ and $k \in \mK$. 

We use the generic method of \cite{CFVMW2015} to construct confidence bands for quantile functions and  effects from confidence bands for the corresponding distributions. Let $\mathbb{D}$ denote the space of weakly increasing functions,
mapping ${\mY}$ to $[0,1]$.   
 Assume we have a confidence band $I_k = [L_k,U_k]$ for $F_k$, with lower and upper endpoint functions $y \mapsto L_k(y)$ and $y \mapsto U_k(y)$ such that $L_k, U_k \in \mathbb{D}$ and $L_k(y) \leq U_k(y)$ for all $y \in \mY$.\footnote{If $[L_k',U_k']$ is a confidence band for $F_k$ that does not obey the constraint $L_k',U_k' \in \mathbb{D}$, we can transform $[L_k',U_k']$ into a new band $[L_k,U_k]$ such that $L_k,U_k \in \mathbb{D}$ using the rearrangement method of \cite{ChernozhukovFernandezValGalichon2009}.} 
We say that $I_k$ covers $F_k$ if $F_k \in I_k$ pointwise, namely $L_k(y) \leq F_k(y) \leq U_k(y)$ for all $y \in {\mY}$. If $U_k$ and $L_k$ are some data-dependent bands, we say that $I_k$ is a confidence
band for $F_k$ of level $p$, if $I_k$ covers $F_k$  with probability at least $p$. Similarly, we say that the set of bands $\{I_k : k \in \mK\}$ is a joint confidence band for the set of functions  $\{F_k : k \in \mK\}$ of level $p$, if $I_k$ covers $F_k$  with probability at least $p$ simultaneously over $k \in \mK$. The index set $\mK$ can be a singleton to cover individual confidence bands or  $\mK = \{0,1\}$ to cover joint confidence bands. In Section \ref{sec:asymptotics} we provide a multiplier bootstrap algorithm for
computing joint confidence bands based on the joint asymptotic distribution of the bias corrected estimators $\{\widetilde F_k : k \in \mK\}$.  


The following result provides a method to construct joint confidence bands for  $\{Q_k = F_k^\lto :  k \in \mK\}$, from joint confidence bands for  $\{ F_k : k \in \mK \}$.
\begin{lemma}[\citeauthor{CFVMW2015} (2016, Thm. 2(1))] \label{theorem:bquant} Consider
 a set of distribution functions $\{F_k : k \in \mK\}$   and endpoint functions $ \{L_k : k \in \mK \}$ and $\{U_k : k \in \mK\}$ with components in the class $\mathbb{D}$. If $\{F_k : k \in \mK\}$ is jointly covered by $ \{I_k : k \in \mK \}$
with probability  $p$, then $\{Q_k = F_k^\lto :  k \in \mK\}$ is jointly covered by $ \{I_k^{\lto} : k \in \mK \}$
  with probability $p$, where
$$
I_k^\lto(\tau): =  [U_k^\lto(\tau), L_k^\lto(\tau)], \ \ \tau \in \mT, \ \ k \in \mK.
$$
\end{lemma}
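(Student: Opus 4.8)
The plan is to reduce the probabilistic claim to a deterministic, pointwise order-reversal property of the left-inverse map and then conclude by a set-inclusion argument between the relevant events. First I would introduce the joint coverage event for the distributions, $A := \{L_k(y) \le F_k(y) \le U_k(y) \text{ for all } y \in \mY \text{ and all } k \in \mK\}$, and the joint coverage event for the quantiles, $B := \{U_k^\lto(\tau) \le Q_k(\tau) \le L_k^\lto(\tau) \text{ for all } \tau \in \mT \text{ and all } k \in \mK\}$. By hypothesis $\Pr(A) \ge p$, so it suffices to prove the inclusion $A \subseteq B$, after which the conclusion $\Pr(B) \ge \Pr(A) \ge p$ is immediate.

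The heart of the argument, and the only step requiring real care, is the order-reversal property of the left-inverse on $\mathbb{D}$: for $G, H \in \mathbb{D}$ with $G \le H$ pointwise on $\mY$, one has $H^\lto(\tau) \le G^\lto(\tau)$ for all $\tau \in (0,1)$. I would prove this by fixing $\tau$ and noting that $G \le H$ gives the set inclusion $\{y \in \mY : G(y) \ge \tau\} \subseteq \{y \in \mY : H(y) \ge \tau\}$, since $G(y) \ge \tau$ forces $H(y) \ge G(y) \ge \tau$. Taking infima reverses the inclusion, giving $\inf\{y : H(y) \ge \tau\} \le \inf\{y : G(y) \ge \tau\}$ under the convention $\inf\{\emptyset\} = +\infty$; the remaining point is that the truncation at $\sup\{y \in \mY\}$ built into $(\cdot)^\lto$ is a common nondecreasing transformation of both sides, hence cannot reverse the inequality, so $H^\lto(\tau) \le G^\lto(\tau)$ even at levels $\tau$ where one of the functions never reaches $\tau$ within $\mY$. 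This handles the boundary cases uniformly and uses no property of $\mathbb{D}$ beyond monotonicity.

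Finally I would apply this property on the event $A$, separately for each $k \in \mK$. From $L_k \le F_k$, the property with $(G,H) = (L_k, F_k)$ gives $Q_k(\tau) = F_k^\lto(\tau) \le L_k^\lto(\tau)$; from $F_k \le U_k$, the property with $(G,H) = (F_k, U_k)$ gives $U_k^\lto(\tau) \le F_k^\lto(\tau) = Q_k(\tau)$. Combining the two bounds yields $Q_k(\tau) \in [U_k^\lto(\tau), L_k^\lto(\tau)] = I_k^\lto(\tau)$ for every $\tau \in \mT$, and since this holds simultaneously for all $k \in \mK$ on $A$, we obtain $A \subseteq B$. As no reverse implication is needed, the argument gives exactly the one-sided probability bound in the statement.
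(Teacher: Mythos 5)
Your proof is correct. Note that the paper itself does not prove this lemma at all: it is imported verbatim from Chernozhukov, Fern\'andez-Val, Melly, and W\"uthrich (2016, Thm.\ 2(1)), and the appendix merely points back to that reference when the bands $I_F$ are inverted. Your argument—reducing joint coverage to the event inclusion $A \subseteq B$ via the order-reversal property of the left-inverse map on $\mathbb{D}$ (with the truncation $\wedge \sup\{y \in \mY\}$ handled as a common nondecreasing transformation, so that the empty-set convention $\inf\{\emptyset\} = +\infty$ causes no trouble)—is precisely the standard argument underlying the cited result, and it is complete: the pointwise inequalities $U_k^\lto(\tau) \le Q_k(\tau) \le L_k^\lto(\tau)$ hold deterministically on the coverage event for the distributions, simultaneously over $k \in \mK$ and $\tau \in \mT$, which delivers the one-sided probability bound without any further conditions on $\{F_k\}$ beyond membership in $\mathbb{D}$.
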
This Lemma establishes that we can construct confidence bands for quantile functions by inverting the endpoint functions of confidence bands for distribution functions. The geometric intuition is that the inversion amounts to rotate and flip the bands, and these operations  preserve coverage.  


We next construct simultaneous confidence bands for the quantile effect function $\tau \mapsto \Delta(\tau)$
defined by
$$\Delta(\tau) = Q_1(\tau) - Q_0(\tau) = F_1^\lto(\tau) - F_0^\lto(\tau), \quad \tau \in \mT.$$
The basic idea is to take appropriate differences of the bands for the quantile  functions $Q_1$ and $Q_0$ as the confidence band for the quantile effect.  Specifically,
suppose  we have the set of confidence bands $\{I^\lto_k = [U^\lto_k, L^\lto_k] : k = 0,1\}$ for the set of functions $\{F_k^\lto: k = 0,1\}$ of level $p$.
  \cite{CFVMW2015} showed that a confidence band  for the difference $Q_1 - Q_0$ of size $p$ can be constructed as
 $[U^\lto_1 - L^\lto_0, L^\lto_1 - U^\lto_0]$, i.e., $I^\lto_1 \ominus I^\lto_0$ where $\ominus$ is the pointwise Minkowski difference. 

\begin{lemma}[\citeauthor{CFVMW2015} (2016, Thm. 2(2))] \label{theorem:bqte} Consider a set of distribution functions $\{ F_k : k = 0,1\}$  and endpoint functions $\{ L_k : k = 0,1\}$ and  $\{U_k : k = 0,1\}$, with components in the class $\mathbb{D}$. If  the set of distribution functions $\{ F_k : k = 0,1\}$ is jointly covered by the set of bands $\{I_k : k = 0,1\}$ with probability $p$, then
the quantile effect function $\Delta= F_1^\lto- F_0^\lto$ is covered by $I^{\lto}_{\Delta}$ with probability
at least $p$, where $ I^{\lto}_{\Delta}$ is defined by:
$$
  I^\lto_\Delta(\tau)   :=  
 [U^\lto_1(\tau),  L^\lto_1(\tau)]  \ominus  [ U^\lto_0(\tau),  L^\lto_0(\tau)]  =
 [  U^\lto_1(\tau) -L^\lto_0(\tau),   L^\lto_1(\tau)- U^\lto_0(\tau)], \ \ \tau \in \mT.
 $$
\end{lemma}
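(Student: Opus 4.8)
The plan is to reduce everything to a single deterministic containment that holds on the event that the distributions are jointly covered, and then read off the probability. Let $E$ denote the event $\{L_k(y) \le F_k(y) \le U_k(y) \text{ for all } y \in \mY, \ k = 0,1\}$, which by hypothesis satisfies $\Pr(E) \ge p$. I would carry out the whole argument pointwise in $\tau$ on $E$, so that no union bound is ever needed and the coverage level for $\Delta$ is inherited directly as at least $p$.

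First I would recall the monotonicity fact underlying Lemma~\ref{theorem:bquant}: the left-inverse $F \mapsto F^\lto$ is order-reversing, because if $F \le G$ pointwise on $\mY$ then $\{y : G(y) \ge \tau\} \supseteq \{y : F(y) \ge \tau\}$, and taking infima gives $G^\lto(\tau) \le F^\lto(\tau)$. Applying this to the chain $L_k \le F_k \le U_k$ valid on $E$, and using $L_k, U_k \in \mathbb{D}$ so that their left-inverses are well defined, yields
$$
U_k^\lto(\tau) \le Q_k(\tau) \le L_k^\lto(\tau), \qquad \tau \in \mT, \ k = 0,1,
$$
which is exactly the containment $Q_k(\tau) \in I_k^\lto(\tau)$ furnished by Lemma~\ref{theorem:bquant}. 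The key structural point, which I would emphasize, is that both containments, for $k=0$ and for $k=1$, are consequences of the single event $E$.

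Next I would form the Minkowski difference. On $E$, combining the lower bound for $Q_1$ with the upper bound for $Q_0$ gives $\Delta(\tau) = Q_1(\tau) - Q_0(\tau) \ge U_1^\lto(\tau) - L_0^\lto(\tau)$, while combining the upper bound for $Q_1$ with the lower bound for $Q_0$ gives $\Delta(\tau) \le L_1^\lto(\tau) - U_0^\lto(\tau)$. Together these give $\Delta(\tau) \in [U_1^\lto(\tau) - L_0^\lto(\tau),\, L_1^\lto(\tau) - U_0^\lto(\tau)] = I^\lto_\Delta(\tau)$ for every $\tau \in \mT$, which is the asserted containment $\Delta \in I^\lto_\Delta$. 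Since this holds on all of $E$, I conclude $\Pr(\Delta \in I^\lto_\Delta) \ge \Pr(E) \ge p$.

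I do not expect a substantive obstacle; the content is the bookkeeping of inequalities. The one point to handle carefully is the order reversal induced by the left-inverse, which is precisely what swaps the roles of the endpoints so that the lower endpoint of $I^\lto_\Delta$ pairs $U_1^\lto$ with $L_0^\lto$ (rather than the naive $L_1^\lto$ with $U_0^\lto$); getting these swaps right is what makes the Minkowski-difference formula come out correctly. The other thing worth flagging is why the level is ``at least $p$'' and not something weaker: because the two quantile containments both follow from the same event $E$, there is no Bonferroni loss, and the difference inherits the full joint coverage level.
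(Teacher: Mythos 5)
Your proof is correct and is essentially the intended argument: the paper does not prove this lemma itself but imports it from \cite{CFVMW2015}, and the proof there proceeds exactly as yours does---use the order-reversing property of the left-inverse to invert the joint band into quantile bands $U_k^\lto(\tau)\le Q_k(\tau)\le L_k^\lto(\tau)$ (this is Lemma~\ref{theorem:bquant}), then difference the endpoints pointwise in $\tau$ on the single joint coverage event, so the level $p$ is inherited with no Bonferroni loss. Nothing is missing, and your emphasis on the endpoint swap and on working on one event is precisely the content of the result.
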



\section{Asymptotic Theory}\label{sec:asymptotics}

This section derives  the  asymptotic properties of the fixed effect estimators of $y \mapsto\beta(y)$
and $\{ F_k : k \in {\mathcal K} \}$, as both dimensions $I$ and $J$ grow to infinity. We focus on the case where the link function is the logistic distribution at all levels, $\Lambda_y = \Lambda$, where $\Lambda(\xi) = (1+ \exp(-\xi))^{-1}$.   We choose the  logistic distribution for analytical convenience. In this case the Hessian of the log-likelihood function does not depend on $y_{it}$, leading to several simplifications in the asymptotic expansions. In particular, there are various terms that drop out from the  second order expansions that we use to characterize the structure of the incidental parameter bias of the estimators $\widehat \beta(y)$
and $\widehat F(y)$.  For the case of single binary regressions, \cite{FernandezValWeidner2016} showed that the properties of fixed effects estimators are similar for the logistic distribution and other smooth log-concave distributions such as the normal distribution. Accordingly, we expect that our results can be extended to other  link functions,  but at the cost of more complicated proofs and derivations to account for additional terms.

We make the following assumptions:
\begin{assumption}[Sampling and Model Conditions]~
\label{ass:baseline}
\begin{itemize}
    \item[(i)] Sampling: The outcome variable $y_{ij}$ is independently distributed over $i$ and $j$  conditional on all the observed and unobserved covariates $ \mC_B:= \{(x_{ij}, v_i, w_j): (i,j) \in \mD\}$.
        
   \medskip
    
    \item[(ii)] Model: For all $y \in  \mY$,
\begin{align*}
     F_{y_{ij}}(y  \mid \mC_B)  = F_{y_{ij}}(y  \mid x_{ij}, v_i, w_j) =  \Lambda(x_{ij}' \beta(y) + \alpha(v_{i},y) + \gamma(w_{j},y) ),
\end{align*}
where $y \mapsto \beta(y)$, $y \mapsto \alpha(\cdot,y)$ and $y \mapsto \gamma(\cdot,y)$ are measurable functions.
   \medskip
     

     \item[(iii)]  Compactness:
     the support $\mX$ of $x_{ij}$ is compact, and $\alpha(v_i,y)$ and $\gamma(w_j,y)$
     are bounded uniformly over $i$, $j$, $I$, $J$ and $y \in \mY$.

     \medskip


    \item[(iv)] Compactness and smoothness: Either $\mY$ is a discrete finite set, or $\mY \subset \mathbb{R}$ is a bounded interval. In the latter case, we assume that the conditional density
function $f_{y_{ij}}(y \mid x_{ij}, v_i, w_j)$ exists, is uniformly bounded above and away from zero, and is uniformly continuous
in $y$ on  the interior of $\mY$, uniformly over the support of $(x_{ij}, v_i, w_j)$.
  \medskip

%
  \item[(v)] Missing data: There is only a fixed number of missing observations for every $i$ and $j$, that is,
      $\max_{i} ( J-|\{(i',j') \in \mD : i' = i\}| ) \leq c_2$
     and $\max_{j}( I - |\{(i',j') \in \mD : j' = j\}| ) \leq c_2$
     for some constant $c_2<\infty$ that is independent of the sample size.
     
   \medskip
     
     \item[(vi)]  Non-collinearity: The regressors $x_{ij}$ are non-collinear after projecting out the two-way fixed effects,
     that is, there exists a constant
        $c_3>0$, independent of the sample size, such that
     \begin{align*} 
         \min_{\{ \delta \in \mathbb{R}^{d_x} \, : \, \| \delta \|=1 \}} \; \;  \min_{(a,b) \in \mathbb{R}^{I+J}} 
         \left[ 
         \frac 1 n  \sum_{(i,j) \in \mD}     (x'_{ij} \delta  -  a_i - b_j )^2 \right]    \geq \; c_3 .
    \end{align*} 
    
    \item[(vii)] Asymptotics: We consider asymptotic sequences where $I_n, J_n \to \infty$ with $I_n/J_n \to c$ for some positive and finite $c$, as the total sample size $n \to \infty$.  We drop the indexing by $n$ from $I_n$ and $J_n$, i.e. we shall write $I$ and $J$.
     \end{itemize}
 
 \end{assumption}
 
 \begin{remark}[Assumption \ref{ass:baseline}]
Part (i) holds if  $(y_{ij},x_{ij})$ is i.i.d. over $i$ and  $j$, $v_i$ is i.i.d. over $i$, and $w_j$ is i.i.d. over $j$; but it is more general as it does not restrict the distribution of  $(x_{ij},v_i,w_j)$ nor its dependence across $i$ and $j$.  We show how to relax this assumption allowing for a form of weak conditional dependence in Section \ref{subsec:clustering}. Part (ii) holds if the observed covariates are strictly exogenous conditional on the unobserved effects and the conditional distribution is correctly specified for all $y \in \mY$.  We expect that our theory carries over to predetermined or weakly exogenous covariates that are relevant in panel data models, following the analysis \cite{FernandezValWeidner2016}. We focus on the strict exogeneity assumption because it is applicable to both panel and network data, and leave the extension to weak exogeneity  to future research.   Part (iii) imposes that the covariates $x_{ij}$ and  
unobserved effects $\alpha(v_i,y)$ and $\gamma_j(w_j,y)$ are all uniformly bounded.
For fixed values $y$ it is possible to obtain asymptotic results of our estimators without
the  compact  support assumption, see e.g.~\cite{Yan2016statistical}, but deriving  empirical process results that hold uniformly over $y$ is much more involved without this assumption. The compact support assumption guarantees that the conditional probabilities
of the  events $\{y_{ij} \leq y\}$ are bounded away from zero and one, that is, the network of binarized outcomes
$1\{y_{ij} \leq y\}$ is assumed to be dense. In the network econometrics literature
 \cite{charbonneau17}, \cite{Graham2017} and \cite{jochmans18} provide methods 
that are also applicable to sparse networks.
Part (iv) can be slightly weakened to   Lipschitz continuity with uniformly bounded Lipschitz constant, instead of differentiability. It covers discrete, continuous, and  mixed outcomes with mass points at the boundary of the support such as censored variables. For the mixed outcomes, the data generating process for the mass points can be arbitrarily different from the rest of the support because the density $y \mapsto f_{y_{ij}}(y \mid \cdot)$ only needs to be continuous in the interior of $\mY$. 
Part (v) of the assumption allows for a finite (and asymptotically bounded) number of missing observations for each unit $i$,
and each unit $j$. For example, in the trade network example only the observations with $i=j$ are missing,
implying that there is  one missing observation for every $i$ and for every $j$, i.e. $c_2=1$.
If the panel is balanced, part (vi) can be stated  as
        \begin{align*}
              \frac 1 {I J} \sum_{i=1}^I \sum_{j=1}^J   \widetilde x_{ij}  \widetilde x'_{ij} \;  \geq \; c_3 \;  \mathbb{I}_{d_x},
        \end{align*}
        where $\widetilde x_{ij} = x_{ij} - x_{i \cdot} - x_{\cdot j} + x_{\cdot \cdot}$,
         $x_{i \cdot} =  J^{-1}  \sum_{j=1}^J x_{ij}$, $x_{\cdot j} = I^{-1} \sum_{i=1}^I x_{ij}$,
        and $x_{\cdot \cdot} =  (I J)^{-1} \sum_{i=1}^I \sum_{j=1}^J  x_{ij}$. This is the typical condition in linear panel models requiring that all the covariates display variation in both dimensions. The asymptotic sequences considered in part (vii)  exactly balance the order of the bias and standard deviation of the fixed effect estimator yielding a non-degenerate asymptotic distribution. \qed
\end{remark}

\subsection{Asymptotic Distribution of the Uncorrected Estimator}
\label{sec:AsyDistrUncorrected}
We introduce first some further notation.
Denote the $q^{th}$ derivatives of the cdf $\Lambda$ by $\Lambda^{(q)}$,
and define $\Lambda^{(q)}_{ij}(y) = \Lambda^{(q)}(  x_{ij}'\beta(y) + \alpha_{i}(y) + \gamma_{j}(y) )$
and $\Lambda^{(q)}_{ij,k}(y) = \Lambda^{(q)}(  \mathbbm{x}_{ij,k}'\beta(y) + \alpha_{i}(y) + \gamma_{j}(y) )$ with $\mathbbm{x}_{ij,k}:=(t_{ij}^k,z_{ij}')'$ and $q = 1,2, \ldots$. 
For $\ell \in \{1,\ldots, d_x\}$
define the following projections of the $\ell$'th covariate $x^\ell_{ij}$,
\begin{align}
  \left( \alpha^\ell_x(y) , \gamma^\ell_x(y) \right)
  &\in
    \arg \min_{(a,c) \in \mathbb{R}^{I+J}} 
       \left[ 
       \sum_{(i,j) \in \mD}  \Lambda^{(1)}_{ij}(y) \, \left(x^{\ell}_{ij}  -  a_i - c_j \right)^2 \right] ,
   \label{DefProjectX}    
\end{align}
and let $ \alpha_{x,i}(y)$ and $\gamma_{x,j}(y)$ be the $d_x$-vectors with components 
$\alpha^\ell_{x,i}(y)$ and $\gamma^\ell_{x,j}(y)$, where $\alpha^\ell_{x,i}(y)$ is the $i$th component of $\alpha^\ell_{x}(y)$ and $\gamma^\ell_{x,j}(y)$ 
is the $j$th component of $\gamma^\ell_{x}(y)$.
Also define $\widetilde x_{ij}(y) =  x_{ij} -  \alpha_{x,i}(y) -  \gamma_{x,j}(y)$ 
and  $\widetilde{\mathbbm{x}}_{ij,k}(y) =  \mathbbm{x}_{ij,k} -  \alpha_{x,i}(y) -  \gamma_{x,j}(y)$.
Notice that $\widetilde{\mathbbm{x}}_{ij,k}(y)$ is defined using  projections  of $x_{ij}$ instead of ${\mathbbm{x}}_{ij,k}$.
Also,  while the locations of $\alpha_{x,i}(y)$ and $\gamma_{x,j}(y)$ are not identified, 
$\widetilde x_{ij}(y)$ and $\widetilde{\mathbbm{x}}_{ij,k}(y)$ are uniquely defined.
Analogous to the projection of $x_{ij}^{\ell}$ above, we define 
$ \Psi_{ij,k}(y) = \alpha^\Psi_{i}(y) + \gamma^\Psi_{j}(y) $, where
\begin{align}
    \left( \alpha^\Psi(y) , \gamma^\Psi(y) \right)
  &\in
    \arg \min_{(a,c) \in \mathbb{R}^{I+J}} 
       \left[ 
       \sum_{(i,j) \in \mD}  \Lambda^{(1)}_{ij}(y) 
       \, \left(  \frac{\Lambda^{(1)}_{ij,k}(y)} {\Lambda^{(1)}_{ij}(y)} 
         -  a_i - c_j \right)^2
        \right] .
    \label{DefPsi}        
\end{align}
For example, if $\mathbbm{x}_{ij,k} = x_{ij}$, then  $\Psi_{ij,k}(y)=1$.
Furthermore, we define\footnote{\label{footnote1}
The FOC of  problem \eqref{DefProjectX}
imply that $\sum_{(i,j) \in \mD}  \Lambda^{(1)}_{ij,k}(y)  \, \widetilde{x}_{ij}(y)^{\, \prime} =0$,
and we can therefore equivalently write
$
    \partial_{\beta} F_k(y) 
   = \frac 1 {n} \sum_{(i,j) \in \mD}  \Lambda^{(1)}_{ij,k}(y)  \, 
   \left[ \widetilde{\mathbbm{x}}_{ij,k}(y) -  \widetilde{x}_{ij}(y) \right]^{\, \prime}
   = \frac 1 {n} \sum_{(i,j) \in \mD}  \Lambda^{(1)}_{ij,k}(y)  \, 
   \left[  {\mathbbm{x}}_{ij,k}(y) -   {x}_{ij}(y) \right]^{\, \prime}.
$
}
\begin{align*}
     W(y) &=  \frac 1 {n} \sum_{(i,j) \in \mD} \Lambda^{(1)}_{ij}(y) \, \widetilde x_{ij}(y) \, \widetilde x_{ij}(y)',
   &
  \partial_{\beta} F_k(y) 
   &= \frac 1 {n} \sum_{(i,j) \in \mD}  \Lambda^{(1)}_{ij,k}(y)  \, \widetilde{\mathbbm{x}}_{ij,k}(y)^{\, \prime} ,
\end{align*}
and
\begin{align*}
     B^{(\beta)}(y) &=  -  \frac 1 2   W^{-1}(y)  \left[ \frac 1 {I}  \sum_{i=1}^I  
               \frac{\sum_{j \in \mD_i}   \Lambda^{(2)}_{ij}(y) \, \widetilde x_{ij}(y)  }
                      { \sum_{j \in \mD_i}   \Lambda^{(1)}_{ij}(y) } \right] ,
    \\
     D^{(\beta)}(y)   &=  - \frac 1 2   W^{-1}(y)  \left[   \frac 1 {J}  \sum_{j=1}^J  
               \frac{\sum_{i \in \mD_j}   \Lambda^{(2)}_{ij}(y) \, \widetilde x_{ij}(y)  }
                      { \sum_{i \in \mD_j}   \Lambda^{(1)}_{ij}(y) } \right] ,       
   \\
     B^{(\Lambda)}_{k}(y) &=  \frac 1 {2 \, I}  \sum_{i=1}^I  
               \frac{\sum_{j \in \mD_i} \left[   \Lambda^{(2)}_{ij,k}(y)  - \Lambda^{(2)}_{ij}(y) \Psi_{ij,k}(y) \right] }
                      { \sum_{j \in \mD_i}   \Lambda^{(1)}_{ij}(y) } ,
    \\
     D^{(\Lambda)}_{k}(y)   &=   \frac 1 {2 \, J}  \sum_{j=1}^J  
               \frac{\sum_{i \in \mD_j} \left[   \Lambda^{(2)}_{ij,k}(y)  - \Lambda^{(2)}_{ij}(y) \Psi_{ij,k}(y) \right]  }
                      { \sum_{i \in \mD_j}   \Lambda^{(1)}_{ij}(y) },      
\end{align*}
where $\mD_i := \{(i',j') \in \mD : i' = i\}$ and $\mD_j := \{(i',j') \in \mD : j' = j\}$ are the subsets of observational units that contain the index $i$ and $j$, respectively. In the previous expressions,  $\partial_{\beta} F_k(y)$ is a $1 \times d_x$ vector
for each $k \in {\mathcal K}$ that we stack in the  $|{\mathcal K}| \times d_x$ matrix
$\partial_{\beta} F(y)
  = [\partial_{\beta} F_{k}(y) \, : \, k \in {\mathcal K}]$.
Similarly, $F_{k}(y)$, $B^{(\Lambda)}_{k}(y)$, $ D^{(\Lambda)}_{k}(y)$
and $\Psi_{ij,k}(y)$ are scalars for each $k \in {\mathcal K}$,
that we stack in the
$|{\mathcal K}| \times 1$ vectors $ F(y) =  [F_{k}(y) \, : \, k \in {\mathcal K}]$,
 $ B^{(\Lambda)}(y) =  [B^{(\Lambda)}_{k}(y) \, : \, k \in {\mathcal K}]$,
$ D^{(\Lambda)}(y) =  [D^{(\Lambda)}_{k}(y) \, : \, k \in {\mathcal K}]$,
$ \Psi_{ij}(y) =  [\Psi_{ij,k}(y) \, : \, k \in {\mathcal K}]$.

Let $\ell^{\infty}(\mY)$ be the space of real-valued bounded functions on $\mY$ equipped with the sup-norm $\| \cdot \|_{\mY}$, and $\rightsquigarrow$ denote weak convergence (in distribution). We establish a functional central limit theorem for the fixed effects estimators of $y \mapsto \beta(y)$ and $y \mapsto F(y)$ in $\mY$. All stochastic statements are conditional on $  \{(x_{ij}, v_i, w_j): (i,j) \in \mD\}$.
\begin{theorem}[FCLT for Fixed Effects DR Estimators]
    \label{th:expansion}
    Let Assumption~\ref{ass:baseline} hold.
     For all $y_1,y_2 \in  \mY$ with $y_1 \geq y_2$ we assume the existence of
    \begin{align*}
         \overline V(y_1,y_2)
         &=
         \plim_{n \rightarrow \infty}
          \frac 1 n \sum_{(i,j) \in \mD}
           \Lambda_{ij}(y_1)  \left[  1 -   \Lambda_{ij}(y_2) \right]
            \;  \widetilde x_{ij}(y_1) \; \widetilde x_{ij}(y_2)' ,
        \\   
        \overline \Omega(y_1,y_2)
         &=
         \plim_{n \rightarrow \infty}
          \frac 1 n \sum_{(i,j) \in \mD}
           \Lambda_{ij}(y_1)  \left[  1 -   \Lambda_{ij}(y_2) \right] \; \Xi_{ij}(y_1) \Xi_{ij}(y_2)' ,
    \end{align*}    
    where $\Xi_{ij}(y) = \Psi_{ij}(y)
            +  \partial_{\beta} F(y)   W^{-1}(y)  \, \widetilde x_{ij}(y)$. Let $\overline V(y_2,y_1) := \overline V(y_1,y_2)'$,
    $\overline \Omega(y_2,y_1) := \overline \Omega(y_1,y_2)'$, and  $\overline W(y_1) := \overline V(y_1,y_1)$. 
Then, in the metric space $\ell^{\infty}(\mY)^{d_x}$,
    \begin{align*}
       \sqrt{n}\left[  \widehat \beta(y)  - \beta(y)
                    -  \frac I {n}    B^{(\beta)}(y)
                   -  \frac J {n}     D^{(\beta)}(y) 
                \right]     
          \rightsquigarrow  Z^{(\beta)}(y) ,
    \end{align*}
    and, in the metric space $\ell^{\infty}(\mY)^{|\mK|}$,
    \begin{multline*}
       \sqrt{n}\left\{  \widehat F(y) -  F(y)
                    -  \frac I {n} 
          \underset{B^{(F)}(y)}{\underbrace{\left[ 
          B^{(\Lambda)}(y) 
         + (\partial_{\beta} F(y))    B^{(\beta)}(y)
        \right]}}
                  - \frac J {n} 
         \underset{D^{(F)}(y)}{\underbrace{ \left[
       D^{(\Lambda)}(y)
      + (\partial_{\beta} F(y))      D^{(\beta)}(y) 
      \right]}}            
                \right\}     \\
          \rightsquigarrow  Z^{(F)}(y) ,
    \end{multline*}
    as stochastic processes indexed by $y \in \mY$, where $y \mapsto Z^{(\beta)}(y)$
    and $y \mapsto Z^{(F)}(y)$ 
    are tight zero-mean Gaussian processes with covariance functions
   $(y_1,y_2) \mapsto \overline W^{-1}(y_1) \;  \overline V(y_1,y_2) \;  \overline W^{-1}(y_2)$
     and      
   $(y_1,y_2) \mapsto  \overline \Omega(y_1,y_2)$, respectively.
\end{theorem}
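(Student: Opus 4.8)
The plan is to lift the single--binary--regression analysis of \cite{FernandezValWeidner2016} to a statement that holds uniformly in $y \in \mY$, and then transfer it from $\widehat\beta(\cdot)$ to $\widehat F(\cdot)$ by a functional delta method. For each fixed $y$, the estimator $\widehat\theta(y)$ solves a smooth concave logistic MLE with the two--way incidental parameters $\alpha_i,\gamma_j$, so first I would write the first--order conditions and expand the profiled score around $(\beta(y),\alpha(y),\gamma(y))$ to second order. Because $\Lambda_y=\Lambda$ is logistic, the Hessian of the log--likelihood depends on the data only through the linear index via $\Lambda^{(1)}_{ij}(y)$ and not on $y_{ij}$; this makes the conditional expected Hessian nonrandom and is what permits a clean expansion. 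Inverting the $(d_x+I+J)\times(d_x+I+J)$ expected Hessian with its two--way block structure produces the weighted within--transformation: the $\Lambda^{(1)}_{ij}(y)$--weighted projections \eqref{DefProjectX} are exactly the partialling--out of the fixed effects, so that $\widetilde x_{ij}(y)=x_{ij}-\alpha_{x,i}(y)-\gamma_{x,j}(y)$ emerges as the effective regressor and $W(y)$ as the effective information.

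Second, I would split the resulting expansion into the mean--zero leading term of order $n^{-1/2}$ and the deterministic incidental--parameter bias of order $I/n$ and $J/n$. The bias originates from the quadratic $\Lambda^{(2)}$ term in the score expansion together with the cross--products of the estimation errors in $\alpha_i$ and $\gamma_j$; matching these terms reproduces $B^{(\beta)}(y)$ and $D^{(\beta)}(y)$, and the remainder must be shown to be $o_P(n^{-1/2})$. For the distributions, $\widehat F_k(y)$ is a smooth average functional of $\widehat\theta(y)$, so a linearization gives an influence representation whose two pieces --- the direct dependence on the fixed effects through the average, captured by $\Psi_{ij,k}(y)$, and the indirect dependence through $\widehat\beta(y)$, captured by $\partial_{\beta} F(y)\,W^{-1}(y)\,\widetilde x_{ij}(y)$ --- combine into $\Xi_{ij}(y)$; the bias pieces combine analogously into $B^{(F)}(y)$ and $D^{(F)}(y)$, which is the decomposition already announced in \eqref{eq:bias}.

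Third, for the distributional limit I would establish finite--dimensional convergence and then asymptotic tightness. Finite--dimensional convergence at any grid $y_1,\dots,y_m$ follows from a Lindeberg/Lyapunov CLT for the triangular array of summands, which are independent across $(i,j)$ by Assumption~\ref{ass:baseline}(i) and uniformly bounded by the compactness in Assumption~\ref{ass:baseline}(iii)--(iv); Cram\'er--Wold then yields joint Gaussianity, and summing the conditional covariances of the influence functions identifies the limits $\overline V(y_1,y_2)$ and $\overline\Omega(y_1,y_2)$, whose form is inherited from the covariance of the indicator process $y\mapsto 1\{y_{ij}\le y\}$. Weak convergence in $\ell^{\infty}(\mY)$ then requires stochastic equicontinuity of the score process, for which the crucial inputs are that $\{1\{y_{ij}\le y\}:y\in\mY\}$ is a monotone (hence VC--type) class and that the $y$--dependent weights $\widetilde x_{ij}(y)$ and $W^{-1}(y)$ vary continuously in $y$ through $\Lambda^{(1)}_{ij}(y)$, using the density bound in Assumption~\ref{ass:baseline}(iv).

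I expect the main obstacle to be controlling the incidental--parameter estimation error \emph{uniformly} over the continuum $\mY$. In the single--$y$ analysis one bounds $\max_i|\widehat\alpha_i(y)-\alpha_i(y)|$ and $\max_j|\widehat\gamma_j(y)-\gamma_j(y)|$ to justify the expansion; here these bounds must hold simultaneously over all $y\in\mY$ while $I,J\to\infty$ at the same rate (Assumption~\ref{ass:baseline}(vii)). The remedy is a maximal inequality that combines the empirical--process control in $y$ (via the VC/monotone structure of the indicators and the Lipschitz continuity of the index in $y$) with the $\max$ over the $I+J$ growing incidental parameters, so that the uniform remainder $\sup_{k,y}\|nR^{(F)}_k(y)\|=o(I\vee J)$ of \eqref{eq:bias} can be established. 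Once this uniform expansion and the equicontinuity bound are in place, the two weak--convergence statements follow from the continuous mapping and delta methods applied to the map $\widehat\theta(\cdot)\mapsto(\widehat\beta(\cdot),\widehat F(\cdot))$; the discrete--$\mY$ case is a simpler special instance in which tightness is automatic and only the joint CLT over finitely many levels is needed.
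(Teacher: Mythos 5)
Your proposal follows essentially the same route as the paper: a second-order expansion of the fixed-effects logit score whose quadratic $\Lambda^{(2)}$ terms generate the $I/n$ and $J/n$ biases, an influence-function representation of $\widehat F_k$ combining $\Psi_{ij,k}(y)$ with $\partial_{\beta}F(y)\,W^{-1}(y)\,\widetilde x_{ij}(y)$, finite-dimensional CLTs plus empirical-process tightness exploiting the VC structure of the indicators $1\{y_{ij}\le y\}$, and uniform-in-$y$ maximal inequalities for the incidental parameter errors --- exactly the content of the paper's technical lemmas on the score expansion, the projector $Q_y$, and the score averages. The only caveat is terminological: the paper never literally applies a functional delta method to $\widehat\theta(\cdot)$ (whose dimension grows with $n$, so it admits no FCLT); it instead expands $\Lambda(\widehat\pi_{y,ij,k})$ around $\Lambda(\pi^0_{y,ij,k})$ directly and tracks the fixed-effects contribution through the projection $Q^{(\rm FE)}_y$, which is what your influence-representation description amounts to in substance.
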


Assumption \ref{ass:baseline}(vi) guarantees the invertibility of $W(y)$ and $\overline W(y) $.
Notice that $\overline W(y) $ is equal to the limit of  $W(y)$ because
 $\Lambda^{(1)}_{ij}(y) = \Lambda_{ij}(y)  \left[  1 -   \Lambda_{ij}(y) \right]$ by the properties of the logistic distribution. This information equality follows by the correct specification condition in Assumption~\ref{ass:baseline}(ii).  
By Assumption \ref{ass:baseline}(v), we could have used
$\sqrt{IJ}$ instead of $\sqrt{n}$, $1/J$ instead of $I/n$, and $1/I$ instead of $J/n$.
However, if the panel is not balanced, then we expect 
the expressions in the theorem to provide a more accurate finite-sample approximation, because the 
standard deviation of the estimates will generally be of order $1/\sqrt{n}$ for unbalanced panels,
and the leading order incidental parameter biases are generally proportional to the number of incidental parameters
($I$ and $J$ here) divided by the total sample size $n$, see e.g. \cite{FernandezValWeidner2018}.

\begin{remark}[Comparison with binary response models] \cite{FernandezValWeidner2016} derived central limit theorems (CLTs) for the fixed effects estimators of coefficients and APEs in panel regressions with two-way effects.  Pointwise, for given $y \in \mY$, Theorem~\ref{th:expansion} yields these CLTs. Moreover, it covers multiple binary regressions  by establishing the limiting distribution of  $\widehat \beta(y) $ and $ \widehat F(y)$ treated as stochastic processes indexed by $y \in \mY$. This generalization is key for our inference results and does not follow from well-known empirical process results. We need to deal with a double asymptotic approximation where both $I$ and $J$ grow to infinity, and to bound all the remainder terms in the second order expansions used by \cite{FernandezValWeidner2016}  uniformly over $y \in \mY$. 
       We refer to the appendix and supplementary material for more details. \qed
\end{remark}

\begin{remark}[Case $\mathbbm{x}_{ij,k} = x_{ij}$]  
When $\mathbbm{x}_{ij,k} = x_{ij}$, that is, when the counterfactual values are equal to the observed values, then
 the asymptotic bias of $\widehat F_k$ vanishes, because  $B_k^{(\Lambda)}(y) = D_k^{(\Lambda)}(y) = 0$, and $\partial_{\beta} F_k(y)=0$ (see footnote \ref{footnote1}). In fact, 
in that case
$\widehat F_k$ is equal to the empirical distribution function, namely
$$
\widehat F_k(y) = \frac{1}{n} \sum_{(i,j) \in \mD}
           \Lambda(x_{ij}'\widehat \beta(y) + \widehat \alpha_i(y) + \widehat \gamma_j(y)) = \frac{1}{n} \sum_{(i,j) \in \mD} 1\{y_{ij} \leq y \},
$$
by the first order conditions of the fixed effects logit DR estimator with respect to the fixed effect parameters. This property provides another appealing feature to choose  the logistic distribution. \qed
\end{remark}

\subsection{Bias Corrections} 
Theorem \ref{th:expansion} shows that the fixed effects DR estimator has asymptotic bias of the same order as the asymptotic standard deviation under the  approximation that we consider. The finite-sample implications are that this estimator can have substantial bias and that confidence regions constructed around it can have severe undercoverage. We deal with these problems by removing the first order bias of the estimator.

We estimate the bias components using the plug-in rule. Define
$\widehat \Lambda^{(q)}_{ij}(y) = \Lambda^{(q)}(  x_{ij}' \widehat \beta(y) + \widehat \alpha_{i}(y) + \widehat \gamma_{j}(y) )$
and $\widehat \Lambda^{(q)}_{ij,k}(y) = \widehat \Lambda^{(q)}(  \mathbbm{x}_{ij,k}' \widehat \beta(y) + \widehat \alpha_{i}(y) + 
\widehat \gamma_{j}(y) )$.
Replacing $\Lambda^{(1)}_{ij}(y)$ and $\Lambda^{(1)}_{ij,k}(y)$
by $\widehat \Lambda^{(1)}_{ij}(y)$ and $\widehat \Lambda^{(1)}_{ij,k}(y)$
in the definitions of  $\alpha^\ell_x(y)$, $\gamma^\ell_x(y)$, $\alpha^\Psi(y)$, and $\gamma^\Psi(y)$ yields the
corresponding estimators. We plug-in these estimators to obtain
$\widehat x_{ij}(y) =  x_{ij} -  \widehat \alpha_{x,i}(y) -  \widehat \gamma_{x,j}(y)$,
$\widehat  {\mathbbm x}_{ij,k}(y) =  \mathbbm{x}_{ij,k} -  \widehat \alpha_{x,i}(y) -  \widehat \gamma_{x,j}(y)$,
and $\widehat \Psi_{ij,k}(y) = \widehat \alpha^\Psi_{i}(y) + \widehat \gamma^\Psi_{j}(y) $.
Then we construct
\begin{align*}
     \widehat W(y) &=  \frac 1 {n} \sum_{(i,j) \in \mD} \widehat \Lambda^{(1)}_{ij}(y) \, \widehat x_{ij}(y) \, \widehat x'_{ij}(y) ,
   &
  \partial_{\beta} \widehat F_k(y) 
   &= \frac 1 {n} \sum_{(i,j) \in \mD}  \widehat \Lambda^{(1)}_{ij,k}(y)  \, \widehat{\mathbbm{x}}_{ij,k}(y)^{\, \prime} ,
\end{align*}
and
\begin{align*}
     \widehat B^{(\beta)}(y) &=  -  \frac 1 2   \widehat W^{-1}(y)  \left[ \frac 1 {I}  \sum_{i=1}^I  
               \frac{\sum_{j \in \mD_i}   \widehat \Lambda^{(2)}_{ij}(y) \, \widehat x_{ij}(y)  }
                      { \sum_{j \in \mD_i}   \widehat \Lambda^{(1)}_{ij}(y) } \right] ,
    \\
     \widehat D^{(\beta)}(y)   &=  - \frac 1 2   \widehat W^{-1}(y)  \left[   \frac 1 {J}  \sum_{j=1}^J  
               \frac{\sum_{i \in \mD_j}   \widehat \Lambda^{(2)}_{ij}(y) \, \widehat x_{ij}(y)  }
                      { \sum_{i \in \mD_j}   \widehat \Lambda^{(1)}_{ij}(y) } \right] ,       
   \\
     \widehat B^{(\Lambda)}_{k}(y) &=  \frac 1 {2 \, I}  \sum_{i=1}^I  
               \frac{\sum_{j \in \mD_i} \left[   \widehat \Lambda^{(2)}_{ij,k}(y)  - \widehat \Lambda^{(2)}_{ij}(y) \widehat \Psi_{ij,k}(y) \right] }
                      { \sum_{j \in \mD_i}   \widehat \Lambda^{(1)}_{ij}(y) } ,
    \\
     \widehat D^{(\Lambda)}_{k}(y)   &=   \frac 1 {2 \, J}  \sum_{j=1}^J  
               \frac{\sum_{i \in \mD_j} \left[   \widehat \Lambda^{(2)}_{ij,k}(y)  - \widehat \Lambda^{(2)}_{ij}(y) \widehat \Psi_{ij,k}(y) \right]  }
                      { \sum_{i \in \mD_j}   \widehat \Lambda^{(1)}_{ij}(y) }  .     
\end{align*}
We also define the $|{\mathcal K}| \times d_x$ matrix
$\partial_{\beta} \widehat F(y)
  = [(\partial_{\beta} \widehat F_{k}(y)) \, : \, k \in {\mathcal K}]$,
and the  $|{\mathcal K}| \times 1$ vectors 
 $ \widehat B^{(F)}(y) =  [\widehat B^{(F)}_{k}(y) \, : \, k \in {\mathcal K}]$,
$ \widehat D^{(F)}(y) =  [\widehat D^{(F)}_{k}(y) \, : \, k \in {\mathcal K}]$,
$ \widehat \Psi_{ij}(y) =  [\widehat \Psi_{ij,k}(y) \, : \, k \in {\mathcal K}]$. 
Finally, we also construct the estimator of the asymptotic variance of $\widehat F(y)$
\begin{align*}
        \widehat  \Omega(y)
         &=
          \frac 1 n \sum_{(i,j) \in \mD}
           \widehat  \Lambda^{(1)}_{ij}(y)   \; 
       \widehat \Xi(y) \;  \widehat \Xi(y)' .
\end{align*}  
where $\widehat \Xi(y) = \widehat  \Psi_{ij}(y)
            +  (\partial_{\beta} \widehat  F(y))   \widehat  W^{-1}(y)  \, \widehat  x_{ij}(y)$.
            
 Lemma~\ref{lemma:bias_estimators} in the Appendix   shows that the estimators of the asymptotic bias are consistent, uniformly in $y \in \mY$.          
Bias corrected estimators of $\beta(y)$ and $F(y)$ can then be formed as
\begin{equation}\label{eq:bc}
     \widetilde \beta(y) = \widehat \beta(y)  
                    -  \frac I {n}    \widehat B^{(\beta)}(y)
                   -  \frac J {n}      \widehat D^{(\beta)}(y)  ,
\end{equation}
and
\begin{align*}
   \widetilde F(y) =   \widehat F(y)  
                    -  \frac I {n} 
          \underset{\widehat B^{(F)}(y)}{\underbrace{\left[ 
          \widehat B^{(\Lambda)}(y) 
         + (\partial_{\beta} \widehat F(y))    \widehat B^{(\beta)}(y)
        \right]}}
                  - \frac J {n} 
         \underset{\widehat D^{(F)}(y)}{\underbrace{ \left[
       \widehat D^{(\Lambda)}(y)
      + (\partial_{\beta} \widehat F(y))      \widehat D^{(\beta)}(y) 
      \right] }}.
\end{align*}
Alternatively, we could define the bias corrected version of $ \widehat F(y)  $ as
\begin{align*}
\widetilde F^*_k(y) &=\left[  \frac 1 n  \sum_{(i,j) \in \mD} 
 \Lambda \left( \mathbbm{x}_{ij,k}' \, \widetilde \beta(y) + \widetilde \alpha_i(y) + \widetilde \gamma_{j}(y) \right) \right]
  -  \frac I {n} 
          \widehat B^{(\Lambda)}_k(y)
 - \frac J {n} 
       \widehat D^{(\Lambda)}_k(y) ,
\end{align*}
where $\widetilde \xi(y) := (\widetilde \alpha_1(y), \ldots, \widetilde \alpha_I(y), \widetilde \gamma_{1}(y), \ldots, \widetilde \gamma_J(y))$ is a solution to
$$
   \max_{\xi \in \mathbb{R}^{I+J}}  \sum_{(i,j) \in \mD}
      (1\{y_{ij} \leq y \} \log \Lambda(x_{ij}'\widetilde \beta(y) + \alpha_i + \gamma_{j}) + 1\{y_{ij} > y \} \log [1 - \Lambda(x_{ij}'\widetilde \beta(y) + \alpha_i + \gamma_{j})]).
$$
It can be shown that  $\sup_{y \in  \mY} \sqrt{n} \left| \widetilde F^*_k(y) - \widetilde F_k(y) \right| = o_P(1)$,
that is, the difference between those alternative bias corrected estimators is asymptotically negligible. There is no obvious
reason to prefer one over the other, and we present result for $\widetilde F_k$ in the following, which equivalently
hold for $\widetilde F^*_k$.\footnote{We use the estimator $\widetilde F^*_k$ in the numerical examples for computational convenience as the bias correction involves estimating less terms. } 

\begin{remark}[Alternative Approaches] The conditional approach of \cite{charbonneau17} and \cite{jochmans18}  for the logit model with two-way effects could be also adopted to estimate the coefficient $\beta(y)$. However, this approach does not produce estimators of $F(y)$ as it is based on differencing-out the unobserved effects. The bias correction method proposed is analytical in that it requires explicit characterization and estimation of the bias. A natural alternative is a correction based on Jackknife or bootstrap following the analysis of  \cite{CFW-16},  \cite{DhaeneJochmans2015}, \cite{FernandezValWeidner2016}, \cite{Hahn:2004p882}, and \cite{KimSun2016} for nonlinear panel models. We do not consider any of  these corrections because they require repeated parameter estimation that can be  computationally expensive in this case.   \qed
\end{remark}


 The following main result establishes the functional central limit theorem for the bias corrected  estimators and uniform consistency of the estimators of the variance function.
\begin{theorem}[FCLT for Bias Corrected Fixed Effects DR Estimators]
    \label{th:bc}
    Let Assumption~\ref{ass:baseline} hold.
 Then, in the metric space $\ell^{\infty}( \mY)^{d_x}$,
    \begin{align*}
       \sqrt{n}\left[  \widetilde \beta(y)  - \beta(y)
                \right]     
          \rightsquigarrow  Z^{(\beta)}(y) ,
    \end{align*}
    and, in the metric space $\ell^{\infty}(\mY)^{|\mK|}$,
   $$
       \sqrt{n}\left[ \widetilde F(y) -  F(y)
      \right]            
          \rightsquigarrow  Z^{(F)}(y) ,
    $$
        as stochastic processes indexed by $y \in \mY$,    where $Z^{(\beta)}(y)$
    and $Z^{(F)}(y)$ 
    are the same Gaussian processes
    that appear in Theorem~\ref{th:expansion}. Moreover,
    $$
     \sup_{y \in  \mY} \left\|  \widehat W(y)^{-1} - \overline W(y)^{-1} \right\| = o_P(1) \ \  \text{ and } \ \ \sup_{y \in  \mY} \left\|   \widehat  \Omega(y) - \overline \Omega(y) \right\|= o_P(1).
    $$
\end{theorem}

\subsection{Uniform Confidence Bands and Bootstrap}
\label{sec:UniformBands}
We show how to  construct pointwise and uniform confidence bands for $y \mapsto \beta(y)$ and $y \mapsto F(y)$ on $ \mY$ using 
 Theorem~\ref{th:bc}. The uniform bands for $F$ can be used as inputs in Lemmas \ref{theorem:bquant} and \ref{theorem:bqte} to construct uniform bands for the QFs $\tau \mapsto Q_k(\tau) = F_k^{\lto}(\tau),$ $k \in \mK$, and the QEF $\tau \mapsto \Delta(\tau)$ on $\mT$.

Let $\mB \subseteq \{1, \ldots, d_x\}$ be the set of indexes for the coefficients of interest.  For given $y \in  \mY$, $\ell \in \mB$, $k \in \mK$, and $p \in (0,1)$,  a pointwise $p$-confidence interval
for $\beta_{\ell}(y)$, the $\ell$'th component of $\beta(y)$, is
\begin{equation}\label{eq:ci}
[ \widetilde \beta_{\ell}(y) \pm \Phi^{-1}(1-p/2) \widehat{\sigma}_{\beta_\ell}(y)],
\end{equation}
and a pointwise $p$-confidence intervals
for $F_k(y)$ is
$$[ \widetilde F_k(y) \pm \Phi^{-1}(1-p/2) \widehat{\sigma}_{F_k}(y)],$$ 
where $\Phi$ denotes the cdf of the standard normal distribution, $\widehat{\sigma}_{\beta_\ell}(y)$
is the standard error of $\widetilde \beta_{\ell}(y)$ given in \eqref{eq:se_beta}, and $\widehat{\sigma}_{F_k}(y)$
is the standard error of $\widetilde F_k(y)$ given in \eqref{eq:se_F}.  These intervals have coverage $p$ in large samples by 
 Theorem~\ref{th:bc}.


We construct  joint uniform bands for  the coefficients and distributions using Kolmogorov-Smirnov type critical values,  instead of quantiles from the normal distribution. A uniform $p$-confidence band
joint for the vector of functions $\{ \beta_{\ell}(y) : \ell \in \mB, y \in  \mY\}$ is
\begin{equation}\label{ConfBandCoeff}
I_{\beta} = \{ [ \widetilde \beta_{\ell}(y) \pm t_{\mB, \mY}^{(\beta)}(p) \widehat{\sigma}_{\beta_{\ell}}(y)] : \ell \in \mB, y \in  \mY\},
\end{equation}
where $t_{\mB, \mY}^{(\beta)}(p)$ is the $p$-quantile of the maximal $t$-statistic 
\begin{align}\label{eq:maxt-coeff}
    t_{\mB, \mY}^{(\beta)} = \sup_{y \in  \mY, \, \ell \in \mB} \frac{\big| Z^{(\beta)}_{\ell}(y) \big|}{\sigma^{(\beta)}_{\ell}(y)},
\end{align} 
where $\sigma^{(\beta)}_{\ell}(y) = [\overline{W}(y)^{-1}]_{\ell,\ell}^{1/2},$ the square root of the $(\ell,\ell)$ element of the matrix  $\overline{W}(y)^{-1}$. 
Similarly, a uniform $p$-confidence band joint
for the set of distribution functions $\{ F_k(y) : k \in \mK, y \in  \mY\}$ is
\begin{align}
    I_{F} = \{[ \widetilde F_k(y) \pm t_{\mK,  \mY}^{(F)}(p) \widehat{\sigma}_{F_k}(y)] : k \in \mK, y \in  \mY\},
    \label{ConfBandCDF}
\end{align}    
where  $t_{\mK,  \mY}^{(F)}(p)$ is the $p$-quantile of the maximal $t$-statistic 
\begin{align}\label{eq:maxt-dist}
    t_{\mK,  \mY}^{(F)} = \sup_{y \in  \mY, \, k \in \mK} \frac{\big| Z^{(F)}_{k}(y) \big|}{\sigma^{(F)}_{k}(y)},
\end{align} 
where $\sigma^{(F)}_{k}(y) = [\overline{\Omega}(y)]_{k,k}^{1/2}$, the square root of the $(k,k)$ element of the matrix $\overline{\Omega}(y,y)$.
The previous confidence bands also have coverage $p$ in large samples by Theorem~\ref{th:bc}.


The maximal t-statistics used to construct the bands  $I_{\beta}$  and $I_{F}$  are not pivotal, but  their distributions can be approximated by simulation after replacing the variance functions of the limit processes by uniformly consistent estimators. In practice, however, we find it more convenient to use resampling methods.   We consider a multiplier bootstrap scheme that resamples the efficient scores or influence functions of the fixed effects estimators $\widehat \beta(y)$ and $\widehat F(y)$. This scheme is computationally convenient because it does not need to solve the high dimensional nonlinear fixed effects conditional maximum likelihood program \eqref{fe-cmle} or making any bias correction in each bootstrap replication. In these constructions we rely on the uncorrected fixed effects estimators instead of the bias corrected estimators, because they have the same influence functions and the uncorrected estimators are consistent under the asymptotic approximation that we consider.

To describe the standard errors and multiplier bootstrap we need to introduce some notation for the influence functions of $\widehat \theta(y)$ and $\widehat F(y)$.  Let $\theta = (\beta, \alpha_1, \ldots, \alpha_I, \gamma_1, \ldots, \gamma_J)$ be a generic value for the parameter $\theta(y)$, the influence function of $\widehat \theta(y)$ is the $(d_x+I+J)$-vector $\psi_{ij}^y(\theta(y))$, where 
$$
\psi_{ij}^y(\theta) = H(\theta)^{\dagger} [\boldsymbol{1}\{y_{ij} \leq y \} - \Lambda(x_{ij}'\beta + \alpha_i + \gamma_j)] w_{ij}, \ \ w_{ij} = (x_{ij}, e_{i,I}, e_{j,J}) , \ \ y \in  \mY,
$$
$e_{i,I}$ is a unit vector of dimension $I$ with a one in the position $i$, $e_{j,J}$ is defined analogously, $H(\theta)^{\dagger}$ is the Moore-Penrose pseudo-inverse of $H(\theta)$, and
$$
H(\theta) = \frac{1}{n} \sum_{(i,j)\in \mD} \Lambda^{(1)}(x_{ij}'\beta + \alpha_i + \gamma_j) w_{ij} w_{ij}', 
\ \ \Lambda^{(1)}(z) = \Lambda(z) \Lambda(-z),
$$ 
is minus the Hessian of the log-likelihood with respect to $\theta$, which does not depend on $y$ in the case of the logistic distribution.\footnote{We use the Moore-Penrose pseudo-inverse because $H(\theta)$ is singular if we do not impose a normalization on the location of $\alpha_i(y)$ and $\gamma_j(y)$.}
  The influence function of $\widehat F_k(y)$ is $\varphi_{ij,k}^y(\theta(y))$, where
$$
\varphi_{ij,k}^y(\theta) =  J_k(\theta)'  \psi_{ij}^y(\theta),
$$
and
$$
J_k(\theta) = \frac{1}{n} \sum_{(i,j)\in \mD}  \Lambda^{(1)}(\mathbbm{x}_{ij,k}'\beta + \alpha_i + \gamma_j) \mathbbm{w}_{ij,k}, \ \ \mathbbm{w}_{ij,k} = (\mathbbm{x}_{ij,k}, e_{i,I}, e_{j,J}).
$$

The standard error of $\widetilde \beta_{\ell}(y)$ is constructed as
\begin{equation}\label{eq:se_beta}
\widehat{\sigma}_{\beta_\ell}(y) =  n^{-1} \left[ \sum_{(i,j) \in \mD} \psi_{ij}^y(\widehat \theta(y)) \psi_{ij}^y(\widehat \theta(y))'\right]^{1/2}_{\ell,\ell},
\end{equation}
the square root of the $(\ell,\ell)$ element of the sandwich matrix $n^{-2} \sum_{(i,j) \in \mD} \psi_{ij}^y(\widehat \theta(y)) \psi_{ij}^y(\widehat \theta(y))'$. Similarly, the standard error of $\widetilde F_k(y)$ is constructed as
\begin{equation}\label{eq:se_F}
\widehat{\sigma}_{F_k}(y) =  n^{-1}  \left[\sum_{(i,j) \in \mD} \varphi_{ij,k}^y(\widehat \theta(y))^2 \right]^{1/2}.
\end{equation}


The following algorithm describes a multiplier bootstrap scheme to obtain the critical values for a set of parameters indexed by $\ell \in  \mB \subseteq \{1, \ldots, d_x\}$ and a set of distributions indexed by $k \in \mK \subseteq \{0,1\}$. This scheme is based on perturbing the first order conditions of the fixed effects estimators with random multipliers independent from the data.
\begin{algorithm}[Multiplier Bootstrap]\label{alg:mb} (1) Let $\bar \mY$ be some grid that satisfies the conditions of Remark \ref{remark:computation}. (2)  Draw the bootstrap multipliers $\{\omega_{ij}^m : (i,j) \in \mD \}$ independently from the data as  
$
\omega_{ij}^m = \tilde \omega_{ij}^m - \sum_{(i,j) \in \mD} \tilde  \omega_{ij}^m/n, \ \ \tilde \omega_{ij}^m \sim \text{ i.i.d. } \mathcal{N}(0,1)$.
Here we have normalized the multipliers to have zero mean as a finite-sample adjustment. 
(3) For each $y \in  \bar \mY$, obtain the bootstrap draws of $\widehat \theta(y)$ as $\widehat \theta^m(y) = \widehat \theta(y) + n^{-1} \sum_{(i,j) \in \mD} \omega_{ij}^m \psi_{ij}^y(\widehat \theta(y)),$ and of $\widehat F_k(y)$ as 
$
\widehat F_k^m(y) = \widehat F_k(y) + n^{-1} \sum_{(i,j) \in \mD} \omega_{ij}^m \varphi_{ij,k}^y(\widehat \theta(y)),$ $k \in \mK.$
(4) Construct the bootstrap draw of the maximal t-statistic for the parameters,
$t^{(\beta),m}_{\mB, \bar{\mY}} = \max_{y \in \bar{\mY}, \ell \in  \mB} |\widehat \beta^m_{\ell}(y) - \widehat \beta_{\ell}(y)| / \widehat \sigma_{\beta_{\ell}}(y)$, where $\widehat \sigma_{\beta_{\ell}}(y)$ is defined in \eqref{eq:se_beta}, 
and $\psi_{ij,\ell}^y(\theta)$ is the component of $\psi_{ij}^y(\theta)$ corresponding to $\beta_{\ell}$.
Similarly, construct the bootstrap draw of the maximal t-statistic for the distributions,
$
t_{\mK, \bar{\mY}}^{(F),m} = \max_{y \in \bar{\mY}, k \in \mK} |\widehat F_k^m(y) - \widehat F_k(y)|/\widehat \sigma_{F_k}(y),
$
where $\widehat \sigma_{F_k}(y)$ is defined in \eqref{eq:se_F}.
(5) Repeat steps (1)--(3) $M$ times and index the bootstrap draws by $m \in \{1, \ldots, M\}$. In the numerical examples we set $M = 500$. 
(6) Obtain the bootstrap estimators of the critical values as   
\begin{eqnarray*}
\widehat{t}^{(\beta)}_{\mB,  \mY}(p) &=&  p-\text{quantile of }  \{t_{\mB, \bar{\mY}}^{(\beta),m} : 1 \leq m \leq M\}, \\
\widehat{t}^{(F)}_{\mK,  \mY}(p) &=&  p-\text{quantile of }  \{t_{\mK, \bar{\mY}}^{(F),m} : 1 \leq m \leq M\}.
\end{eqnarray*}
\end{algorithm}

The next result shows that the multiplier bootstrap provides consistent estimators of the critical values of the inferential statistics.  The proof follows from Theorem 2.2 of \cite{cck-16}.

\begin{theorem}[Consistency of Multiplier Bootstrap Inference]
    \label{cor:bias_correction}
    Let Assumption~\ref{ass:baseline} hold.
 Then, conditional on the data $\{(y_{ij}, x_{ij}) : (i,j) \in \mD \}$,  as $n \to \infty$ and $M \to \infty$
$$
       \widehat{t}^{(\beta)}_{\mB,  \mY}(p) \to_{\Pr}  t^{(\beta)}_{\mB,  \mY}(p)  \ \text{ and } \
       \widehat{t}^{(F)}_{\mK,  \mY}(p) \to_{\Pr} t^{(F)}_{\mK,  \mY}(p),
    $$
  where $t^{(\beta)}_{\mB,  \mY}(p)$
    and $t^{(F)}_{\mK,  \mY}(p)$ 
    are defined in \eqref{eq:maxt-coeff} and \eqref{eq:maxt-dist}, respectively.
\end{theorem}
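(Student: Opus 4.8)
The plan is to recognize that each maximal $t$-statistic in \eqref{eq:maxt-coeff}--\eqref{eq:maxt-dist} is a supremum-type functional of a standardized empirical process, and that the multiplier scheme of Algorithm~\ref{alg:mb} reproduces the conditional law of this process. I would then invoke the high-dimensional Gaussian approximation and comparison results of \cite{cck-16} to show that, conditionally on the data and in probability, the law of the bootstrap statistic is uniformly close to the law of the limiting statistic $\sup_{y,\ell}|Z^{(\beta)}_\ell(y)|/\sigma^{(\beta)}_\ell(y)$ (respectively $\sup_{y,k}|Z^{(F)}_k(y)|/\sigma^{(F)}_k(y)$), and finally transfer this closeness from distribution functions to quantiles.

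First I would use the uniform influence-function representations underlying Theorem~\ref{th:expansion}: uniformly in $y\in\mY$ and for each $\ell\in\mB$, $k\in\mK$, $\sqrt n(\widehat\beta_\ell(y)-\beta_\ell(y)) = n^{-1/2}\sum_{(i,j)\in\mD}\psi^y_{ij,\ell}(\theta(y)) + o_P(1)$ and $\sqrt n(\widehat F_k(y)-F_k(y)) = n^{-1/2}\sum_{(i,j)\in\mD}\varphi^y_{ij,k}(\theta(y)) + o_P(1)$, where by Assumption~\ref{ass:baseline}(i) the summands are independent across $(i,j)$ conditional on $\mC_B$, and by Assumption~\ref{ass:baseline}(iii),(vi) they are uniformly bounded with variances bounded away from zero. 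The bootstrap draws are by construction $n^{-1/2}\sum_{(i,j)\in\mD}\omega^m_{ij}\psi^y_{ij}(\widehat\theta(y))$ and $n^{-1/2}\sum_{(i,j)\in\mD}\omega^m_{ij}\varphi^y_{ij,k}(\widehat\theta(y))$ with independent standard normal weights, so conditionally on the data they are mean-zero Gaussian processes whose covariance kernels are plug-in analogs of those of $Z^{(\beta)}$ and $Z^{(F)}$. Theorem~\ref{th:bias_estimators} gives uniform consistency of these plug-in kernels and of the standard errors $\widehat\sigma_{\beta_\ell}(y)$, $\widehat\sigma_{F_k}(y)$ for $\sigma^{(\beta)}_\ell(y)$, $\sigma^{(F)}_k(y)$, so the standardization in \eqref{eq:se_beta}--\eqref{eq:se_F} renders each marginal of the bootstrap process asymptotically standard normal, matching the limiting statistic.

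Next I would replace the continuum $\mY$ by the grid $\bar\mY$ of Remark~\ref{remark:computation}, so each statistic becomes a maximum of $|\bar\mY|\times|\mB|$ (respectively $|\bar\mY|\times|\mK|$) standardized sums; tightness of $Z^{(\beta)},Z^{(F)}$ from Theorem~\ref{th:expansion} ensures the discretization error is asymptotically negligible. With the index set of cardinality $d_n = O(\sqrt n\log\log n)$, so that $(\log d_n)^7/n\to 0$, the moment and uniformity conditions of \cite[Theorem~2.2]{cck-16} are met, and that result yields that, conditionally on the data and in probability, the Kolmogorov distance between the law of the bootstrap maximal $t$-statistic and the law of the corresponding Gaussian-process supremum tends to zero; since the multipliers are Gaussian, this step may equivalently be read as a Gaussian comparison inequality controlled by the uniform covariance error of Theorem~\ref{th:bias_estimators}. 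Finally, the suprema $t^{(\beta)}_{\mB,\mY}$ and $t^{(F)}_{\mK,\mY}$ of these non-degenerate Gaussian processes have distribution functions that are continuous and strictly increasing at their $p$-quantiles (by the anti-concentration inequality of \cite{cck-16}), so convergence in Kolmogorov distance upgrades to convergence of the $p$-quantiles; letting $M\to\infty$ removes the Monte Carlo error between the empirical quantile of the $M$ bootstrap draws and its conditional population value, yielding $\widehat t^{(\beta)}_{\mB,\mY}(p)\to_{\Pr} t^{(\beta)}_{\mB,\mY}(p)$ and $\widehat t^{(F)}_{\mK,\mY}(p)\to_{\Pr} t^{(F)}_{\mK,\mY}(p)$.

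I expect the main obstacle to be the uniformity over the growing grid: the effective dimension $d_n$ diverges with $n$, so pointwise bootstrap consistency does not suffice and one must control the Gaussian approximation and the anti-concentration simultaneously over all $O(\sqrt n\log\log n)$ indices. Two supporting difficulties feed into this, namely showing that the plug-in influence functions $\psi^y_{ij}(\widehat\theta(y))$, $\varphi^y_{ij,k}(\widehat\theta(y))$ approximate their population versions uniformly in $y$ on $\bar\mY$ so that the conditional covariance is correctly centered, and verifying that the conditional-on-$\mC_B$ independence of the scores across $(i,j)$, despite the shared two-way effects $\alpha_i,\gamma_j$, places the problem within the independent-summands framework required by \cite{cck-16}.
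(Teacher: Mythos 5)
Your proposal is correct and follows essentially the same route as the paper: the paper's own proof is a one-line appeal to Theorem~2.2 of \cite{cck-16}, applied once Theorems~\ref{th:expansion} and \ref{th:bias_estimators} deliver the uniform influence-function representation and the uniform consistency of the standard errors and covariance estimators. Your write-up simply makes explicit the steps the paper leaves implicit (discretization to $\bar\mY$, verification of the moment and dimension conditions, anti-concentration to pass from Kolmogorov distance to quantiles, and the $M\to\infty$ Monte Carlo step), all of which are consistent with the paper's intended argument.
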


 Theorem \ref{cor:bias_correction} together with Theorem  \ref{th:expansion} 
guarantee the asymptotic validity of the confidence bands $I_{\beta}$ and $I_F$ defined in \eqref{ConfBandCoeff} and \eqref{ConfBandCDF} with the critical values $t^{(\beta)}_{\mB,  \mY}(p)$ and  $t^{(F)}_{\mK,  \mY}(p)$ replaced by the bootstrap estimators $\widehat{t}^{(\beta)}_{\mB,  \mY}(p)$ and $\widehat{t}^{(F)}_{\mK,  \mY}(p)$.

\subsection{Pairwise Clustering Dependence or Reciprocity}\label{subsec:clustering} The conditional independence of Assumption~\ref{ass:baseline}(i) can be relaxed to allow for some forms of conditional weak dependence. A form of dependence that is  relevant for network data is \textit{pairwise clustering} or \textit{reciprocity} where the observational units with symmetric indexes $(i,j)$ and $(j,i)$ might be dependent due to unobservable factors not accounted by unobserved effects.\footnote{\cite{CM-14} consider other patterns of dependence in linear models for dyadic data.}  In the trade application, for example, these factors may include distributional channels or multinational firms operating in both countries. Formally, pairwise clustering means that  $(y_{ij},y_{ji})$ is independently distributed across $(i,j) \in \mD$ with $i \leq j$,  conditional on all the observed and unobserved covariates $ \mC_B:= \{(x_{ij}, v_i, w_j): (i,j) \in \mD\}$.

The presence of reciprocity does not change the bias of the fixed effects estimators, but affects the standard errors and the implementation of the multiplier bootstrap. The standard error of $\widetilde \beta_{\ell}(y)$ becomes
\begin{equation}\label{eq:cse_beta}
\widehat{\sigma}_{\beta_\ell}(y) =  n^{-1} \left[ \sum_{(i,j) \in \mD}  \left\{ \psi_{ij}^y(\widehat \theta(y)) +   \psi_{ji}^y(\widehat \theta(y)) \right\} \psi_{ij}^y(\widehat \theta(y))'  \right]^{1/2}_{\ell,\ell}.
\end{equation}
Similarly, the standard error of $\widetilde F_k(y)$ needs to be adjusted to
\begin{equation}\label{eq:cse_F}
\widehat{\sigma}_{F_k}(y) =  n^{-1}  \left[\sum_{(i,j) \in \mD} \left\{ \varphi_{ij,k}^y(\widehat \theta(y)) + \varphi_{ji,k}^y(\widehat \theta(y)) \right\} \varphi_{ij,k}^y(\widehat \theta(y))  \right]^{1/2}.
\end{equation}
In the previous expressions  we assume that if $(i,j) \in \mD$ then $(j,i) \in \mD$ to simplify the notation. The modified multiplier bootstrap algorithm becomes:

\begin{algorithm}[Multiplier Bootstrap with Pairwise Clustering]\label{alg:cmb} (1)  Let $\bar \mY$ be some grid that satisfies the conditions of Remark \ref{remark:computation}.  (2)  Draw the bootstrap multipliers $\{\omega_{ij}^m : (i,j) \in \mD \}$ independently from the data as   
$
\omega_{ij}^m = \tilde \omega_{ij}^m - \sum_{(i,j) \in \mD} \tilde  \omega_{ij}^m/n$, $\tilde \omega_{ij}^m \sim \text{ i.i.d. } \mathcal{N}(0,1)$ if  $i \leq j$, and  $\tilde \omega_{ij}^m = \tilde \omega_{ji}^m$ if $i > j$.
(3) For each $y \in \bar \mY$, obtain the bootstrap draws of $\widehat \theta(y)$ as $\widehat \theta^m(y) = \widehat \theta(y) + n^{-1} \sum_{(i,j) \in \mD} \omega_{ij}^m \psi_{ij}^y(\widehat \theta(y)),$ and of $\widehat F_k(y) $ as 
$
\widehat F_k^m(y) = \widehat F_k(y) + n^{-1} \sum_{(i,j) \in \mD} \omega_{ij}^m \varphi_{ij,k}^y(\widehat \theta(y)),$ $k \in \mK.$
(4) Construct the bootstrap draw of the maximal t-statistic for the parameters,
$t^{(\beta),m}_{\mB, \bar{\mY}} = \max_{y \in \bar{\mY}, \ell \in  \mB} |\widehat \beta^m_{\ell}(y) - \widehat \beta_{\ell}(y)| / \widehat \sigma_{\beta_{\ell}}(y)$, where $\widehat \sigma_{\beta_{\ell}}(y)$ is defined in \eqref{eq:cse_beta}, and and $\psi_{ij,\ell}^y(\theta)$ is the component of $\psi_{ij}^y(\theta)$ corresponding to $\beta_{\ell}$.
Similarly, construct the bootstrap draw of the maximal t-statistic for the distributions,
$
t_{\mK, \bar{\mY}}^{(F),m} = \max_{y \in \bar{\mY}, k \in \mK} |\widehat F_k^m(y) - \widehat F_k(y)|/\widehat \sigma_{F_k}(y),
$
where $\widehat \sigma_{F_k}(y)$ is defined in \eqref{eq:cse_F}. 
(5) Repeat steps (1)--(3) $M$ times and index the bootstrap draws by $m \in \{1, \ldots, M\}$. In the numerical examples we set $M = 500$. 
(6) Obtain the bootstrap estimators of the critical values as   
\begin{eqnarray*}
\widehat{t}^{(\beta)}_{\mB,  \mY}(p) &=&  p-\text{quantile of }  \{t_{\mB, \bar{\mY}}^{(\beta),m} : 1 \leq m \leq M\}, \\
\widehat{t}^{(F)}_{\mK,  \mY}(p) &=&  p-\text{quantile of }  \{t_{\mK, \bar{\mY}}^{(F),m} : 1 \leq m \leq M\}.
\end{eqnarray*}
\end{algorithm}

The clustered multiplier bootstrap preserves the dependence in the symmetric pairs $(i,j)$ and $(j,i)$ by assigning the same multiplier to each of these pairs. 

\subsection{Average Effect} 
A bias corrected estimator of the average effect can be formed as
\begin{equation}\label{eq:DRae}
\widetilde \Delta = \widetilde \mu_1 - \widetilde \mu_0,
\end{equation}
where
$$
\widetilde \mu_k = \int [1(y \geq 0) - \mathbf{C} \widetilde F_k(y)]  dy, \ \ k \in \{0,1\}.
$$
Here the integral is over the real line, and $\mathbf{C}$ is an operator that extends $\widetilde F_k(y)$ from $\mY$ to $\mathbb{R}$ as a step function, that is, it maps any $f: \mY \to \mathbb{R}$ to $\mathbf{C} f: \mathbb{R} \to \mathbb{R}$, where
$\mathbf{C} f(y) = 0$ for $y\leq \inf \mY$, $\mathbf{C} f(y) = 1$ for $y\geq \sup \mY$,
and $\mathbf{C} f(y) = f(\sup\{ y' \in \mY : y' \leq y \})$ otherwise.
The following central limit theorem for the bias corrected estimator of the average effect  is a corollary of 
 Theorem~\ref{th:expansion}
together with the functional delta method.

\begin{corollary}[CLT for Bias Corrected Fixed Effects Estimators of Average Effect]
    \label{cor:bc2}
    Let Assumption~\ref{ass:baseline} hold and $\int_{\mY}  dF_k(y) = 1,$  $k \in \{0,1\}$.
 Then, 
   \begin{equation}\label{eq:ae2}
       \sqrt{n}\left( \widetilde \Delta -  \Delta
      \right)           
          \to_d - \int  \left[  \mathbf{C} Z_1^{(F)}(y) -  \mathbf{C} Z_0^{(F)}(y)\right] dy =: Z^{(\Delta)},
    \end{equation}
 where $Z^{(F)}(y) = [ Z_0^{(F)}(y),  Z_1^{(F)}(y)]'$ is the same Gaussian process
    that appears in Theorem~\ref{th:expansion} with $\mK = \{0,1\}$.
\end{corollary}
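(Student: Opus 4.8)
The plan is to obtain the statement from the functional central limit theorem for $\widetilde F$ in Corollary~\ref{cor:bc} (which in turn rests on Theorems~\ref{th:expansion} and~\ref{th:bias_estimators}) by applying the functional delta method to the linear map that sends a distribution function to its mean. Since $\Delta = \mu_1 - \mu_0$ and each $\mu_k$ is obtained from $F_k$ through the integral functional in \eqref{eq:ae}, it suffices to show that this functional is continuous and affine linear on the relevant subspace of $\ell^\infty(\mY)$, so that the delta method reduces to the continuous mapping theorem.

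First I would record the exact form of the functional. For a generic $f \in \ell^\infty(\mY)$ write $\phi_k(f) = \int_{\R}[1(y \geq 0) - \mathbf{C} f(y)]\, dy$, so that $\mu_k = \phi_k(F_k)$ and $\widetilde\mu_k = \phi_k(\widetilde F_k)$. By the definition of the extension operator $\mathbf{C}$, the integrand equals $0$ for $y \leq \inf\mY$ and for $y \geq \sup\mY$; together with the compactness of $\mY$ in Assumption~\ref{ass:baseline}(iii), this confines the support of the integrand to the bounded set $[\,0 \wedge \inf\mY,\ 0 \vee \sup\mY\,]$ and guarantees that $\phi_k$ is finite. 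The hypothesis $\int_\mY dF_k(y) = 1$ ensures that $\mathbf{C} F_k$ is a genuine distribution function on $\R$ reaching $1$ at $\sup\mY$, so that $\phi_k(F_k)$ is indeed the mean of $F_k$ and the boundary values of $\mathbf{C}\widetilde F_k$ and $\mathbf{C} F_k$ coincide asymptotically.

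Next I would exploit that $\mathbf{C}$ is affine: on $(\inf\mY,\sup\mY)$ it acts as $f \mapsto f(\sup\{y' \in \mY : y' \leq y\})$, while the values $0$ and $1$ imposed outside $\mY$ are constant in $f$. Consequently the constant parts cancel in the difference, and
\begin{equation*}
\widetilde\mu_k - \mu_k \;=\; -\int_{\R} \mathbf{C}(\widetilde F_k - F_k)(y)\, dy \;=\; \Psi_k\big(\widetilde F_k - F_k\big),
\end{equation*}
where $\Psi_k(h) = -\int_{\R} \mathbf{C} h(y)\, dy$ is a linear functional and $\mathbf{C} h$ now denotes the step extension of $h$ by zero outside $\mY$. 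Because $|\Psi_k(h)| \leq (\sup\mY - \inf\mY)\,\|h\|_{\mY}$, the map $\Psi_k$ is bounded, hence continuous, on $\ell^\infty(\mY)$; when $\mY$ is discrete it is merely a finite linear combination of the values of $h$. A continuous linear map is Hadamard differentiable with derivative equal to itself, so the functional delta method (equivalently, the continuous mapping theorem for weakly convergent sequences in $\ell^\infty$) applies. Stacking $k = 0, 1$ and using the joint convergence $\sqrt{n}(\widetilde F - F) \rightsquigarrow Z^{(F)}$ from Corollary~\ref{cor:bc} with $\mK = \{0,1\}$ yields
\begin{equation*}
\sqrt{n}\big(\widetilde\Delta - \Delta\big) \;=\; \Psi_1\big(\sqrt{n}(\widetilde F_1 - F_1)\big) - \Psi_0\big(\sqrt{n}(\widetilde F_0 - F_0)\big) \;\rightsquigarrow\; -\int_{\R}\big[\mathbf{C} Z_1^{(F)}(y) - \mathbf{C} Z_0^{(F)}(y)\big]\, dy \;=\; Z^{(\Delta)},
\end{equation*}
which is the claim; the limit is a continuous linear image of a tight Gaussian process and is therefore a scalar normal random variable, justifying the notation $\to_d$.

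I expect the only delicate point to be the bookkeeping around the extension operator $\mathbf{C}$ and the unbounded domain of integration: one must verify that the integrand is genuinely supported on a bounded set (this is where $\int_\mY dF_k = 1$ and the compactness of $\mY$ enter), that the $0/1$ affine parts of $\mathbf{C}$ cancel cleanly in the difference so that $\Psi_k$ is linear, and that the sample paths of the tight Gaussian limit $Z_k^{(F)}$ are bounded and measurable on the compact $\mY$ so that $\int_{\R}\mathbf{C} Z_k^{(F)}$ is well defined almost surely. The weak-convergence step itself is routine once continuity of $\Psi_k$ is established.
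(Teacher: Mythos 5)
Your proposal is correct and follows essentially the same route as the paper: the paper proves this corollary by combining the FCLT for the bias-corrected estimators (Theorems~\ref{th:expansion} and~\ref{th:bias_estimators}, i.e.\ Corollary~\ref{cor:bc} with $\mK=\{0,1\}$) with the functional delta method, which is exactly your argument. Your additional bookkeeping---the cancellation of the constant parts of $\mathbf{C}$ in the difference, the resulting bounded linear functional $\Psi_k$ on $\ell^\infty(\mY)$, and the role of $\int_{\mY} dF_k(y)=1$ in confining the integrand to a bounded set---is precisely the detail the paper leaves implicit.
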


\begin{remark}[Support of $Y$] The condition that $\int_{\mY}  dF_k(y) = 1$ guarantees that $\mY$ is the support of the potential outcome corresponding to the distribution $F_k$, so that  \eqref{eq:ae} yields the average potential outcome under $F_k$. Together with Assumption~\ref{ass:baseline}, this condition is satisfied when $Y$ is discrete with finite support $\mY$, or continuous  or mixed with bounded support $\mY$ and conditional density bounded away from zero in the interior of $\mY$.  This support condition is not required for the estimation of the  quantile effects.
\end{remark}

We can construct confidence intervals for the average effect using Corollary \ref{cor:bc2}. Let 
$$
\widehat{\sigma}_{\Delta} =  n^{-1}  \left[\sum_{(i,j) \in \mD} \widehat \varphi_{ij}^2 \right]^{1/2}, \ \ \widehat \varphi_{ij} = - \int  \left[\mathbf{C}\varphi_{ij,1}^y(\widehat \theta(y)) - \mathbf{C}\varphi_{ij,0}^y(\widehat \theta(y)) \right] dy.
$$
 Then, $\widehat{\sigma}_{\Delta}$ is  an estimator of $\sigma_{\Delta}$, the standard deviation of the limit process $Z^{(\Delta)}$ in \eqref{eq:ae2}, and
$$
I_{\Delta} = [\widetilde{\Delta} \pm \Phi^{-1}(1-p/2) \widehat{\sigma}_{\Delta}],
$$
is an asymptotic $p$-confidence interval for  $\Delta$.  The normal critical value $\Phi^{-1}(1-p/2)$ can be replaced by a multiplier bootstrap critical value $\widehat{t}^{(\Delta)}(p)$ obtained from Algorithm \ref{alg:mb} as
$$
\widehat{t}^{(\Delta)}(p) = p-\text{quantile of }  \{t^{(\Delta),m} : 1 \leq m \leq M\}
$$
where $
t^{(\Delta),m} = |\widehat \Delta^m- \widehat \Delta|/\widehat \sigma_{\Delta}
$ and
$
\widehat \Delta^m = \widehat \Delta  + n^{-1} \sum_{(i,j) \in \mD}  \omega_{ij}^m \widehat \varphi_{ij}.
$

The standard errors and critical values of the average effects can be adjusted to account for pairwise clustering following the  procedure described in Section \ref{subsec:clustering}. Thus, the pairwise clustering robust standard error is
$$
\widehat{\sigma}_{\Delta} =  n^{-1}  \left[\sum_{(i,j) \in \mD} \left\{ \widehat \varphi_{ij} + \widehat \varphi_{ji}\right\} \widehat \varphi_{ij} \right]^{1/2}.
$$

\section{Quantile Effects in Gravity Equations for International Trade}\label{sec:trade}
We consider an empirical application to gravity equations for bilateral trade between countries. We use data from \cite{Helpman01052008}, extracted from the Feenstra's World Trade Flows, CIA's World Factbook and Andrew Rose's web site. These data contain information on bilateral trade flows and other trade-related variables for 157 countries in 1986.\footnote{The original data set includes 158 countries. We exclude Congo because it did not export to any other country in 1986.} The data set contains network data where both $i$ and $j$ index countries as senders (exporters) and receivers (importers), and therefore $I = J = 157$. The outcome $y_{ij}$ is the volume of trade in thousands of constant 2000 US dollars from country $i$ to country $j$, and the covariates $P(x_{ij}) = x_{ij}$ include determinants of bilateral trade flows such as the logarithm of the distance in kilometers between country  $i$'s capital and country $j$'s capital and  indicators for common colonial ties, currency union, regional free trade area (FTA),  border,  legal system,  language, and religion. Following \cite{AndersonWincoop2003}, we include unobserved importer and exporter country effects.\footnote{See \cite{Harrigan1994} for an earlier empirical international trade application that includes unobserved country effects.} These effects control for other country specific characteristics that may affect trade such as GDP, tariffs,  population, institutions, infrastructures or natural resources. We allow for these characteristics to affect differently the imports and exports of each country, and be arbitrarily related with the observed covariates.

Table \ref{table:ds} reports descriptive statistics of the variables used in the analysis. There are $157 \times 156 = 24,492$ observations corresponding to different pairs of countries. The observations with $i = j$ are missing because we do not observe trade flows from a country to itself.  The trade variable in the first row is an indicator for positive volume of trade. There are no  trade flows for  55\% of the country pairs.  The volume of trade variable exhibits much larger standard deviation than the mean. Since this variable is bounded below at zero, this indicates the presence of a very heavy upper tail in the distribution. This feature also makes quantile methods specially well-suited for this application on robustness grounds.\footnote{In results not reported, we find that estimates of average effects are very sensitive to the trimming of outliers at the top of the distribution.}

\begin{table}[ht]\caption{Descriptive Statistics}\label{table:ds}
\centering
\begin{tabular}{lcc}
  \hline\hline
 & \text{Mean} & \text{Std. Dev.} \\ 
  \hline
  Trade & 0.45 & 0.50 \\ 
  Trade Volume & 84,542 & 1,082,219 \\ 
  Log Distance & 4.18 & 0.78 \\ 
  Legal & 0.37 & 0.48 \\ 
  Language & 0.29 & 0.45 \\ 
  Religion & 0.17 & 0.25 \\ 
  Border & 0.02 & 0.13 \\ 
  Currency & 0.01 & 0.09 \\ 
  FTA & 0.01 & 0.08 \\ 
  Colony & 0.01 & 0.10 \\ \hline
  \multicolumn{1}{l}{Country Pairs} &  \multicolumn{2}{c}{24,492} \\ 
   \hline\hline
   \multicolumn{3}{l}{\footnotesize{Source: Helpman, Melitz and Rubinstein (08) }}
\end{tabular}
\end{table}

The previous  literature  estimated nonlinear parametric models such as Poisson, Negative Binomial, Tobit and Heckman-selection models to deal with the large number of zeros in the volume of trade (e.g., \citealp{EatonKortum2001}, \citealp{SantosSilvaTenreyro2006}, and \citealp{Helpman01052008}).\footnote{See \cite{HeadMayer2014} for a recent survey on gravity equations in international trade.} These models impose strong conditions on the process that generates the zeros and/or on the conditional heteroskedasticity of the volume of trade.  The DR model deals with zeros and any other fixed censoring points in a very flexible and natural fashion as it specifies the conditional distribution separately at the mass point. In particular, the model coefficients at zero can be arbitrarily different from the model coefficients at other values of the volume of trade.   Moreover, the DR model can also accommodate   conditional heteroskedasticity.

Figure \ref{fig:coeffs} shows estimates and 95\% pointwise confidence intervals for the DR coefficients of log distance and common legal system plotted against the quantile indexes of the volume of trade.  We report uncorrected and bias corrected fixed effects estimates obtained from \eqref{fe-cmle} and \eqref{eq:bc}, respectively. The confidence intervals are constructed using \eqref{eq:ci}. The x-axis starts at .54, the maximum quantile index corresponding to zero volume of trade. The region  of interest $\mY$ corresponds to the interval between zero and the $0.95$-quantile of the volume of trade. 
 The difference between the uncorrected and bias corrected estimates is the same order of magnitude as the width of the confidence intervals for the coefficient of log distance. We find the largest estimated biases for both coefficients at highest quantiles of the volume of trade, where the  indicators $1\{y_{ij} \leq y\}$ take on many ones. The signs of the DR coefficients indicate that increasing distance has a negative effect and having a common legal system has a positive effect on the volume of trade throughout the distribution. Recall that the sign of the effect in terms of volume of trade, $y_{ij},$ is the opposite to the sign of the DR coefficient. 

    \begin{figure}[h!]
        \begin{center}
            \includegraphics[width=\textwidth,height= .5\textwidth,angle=0]{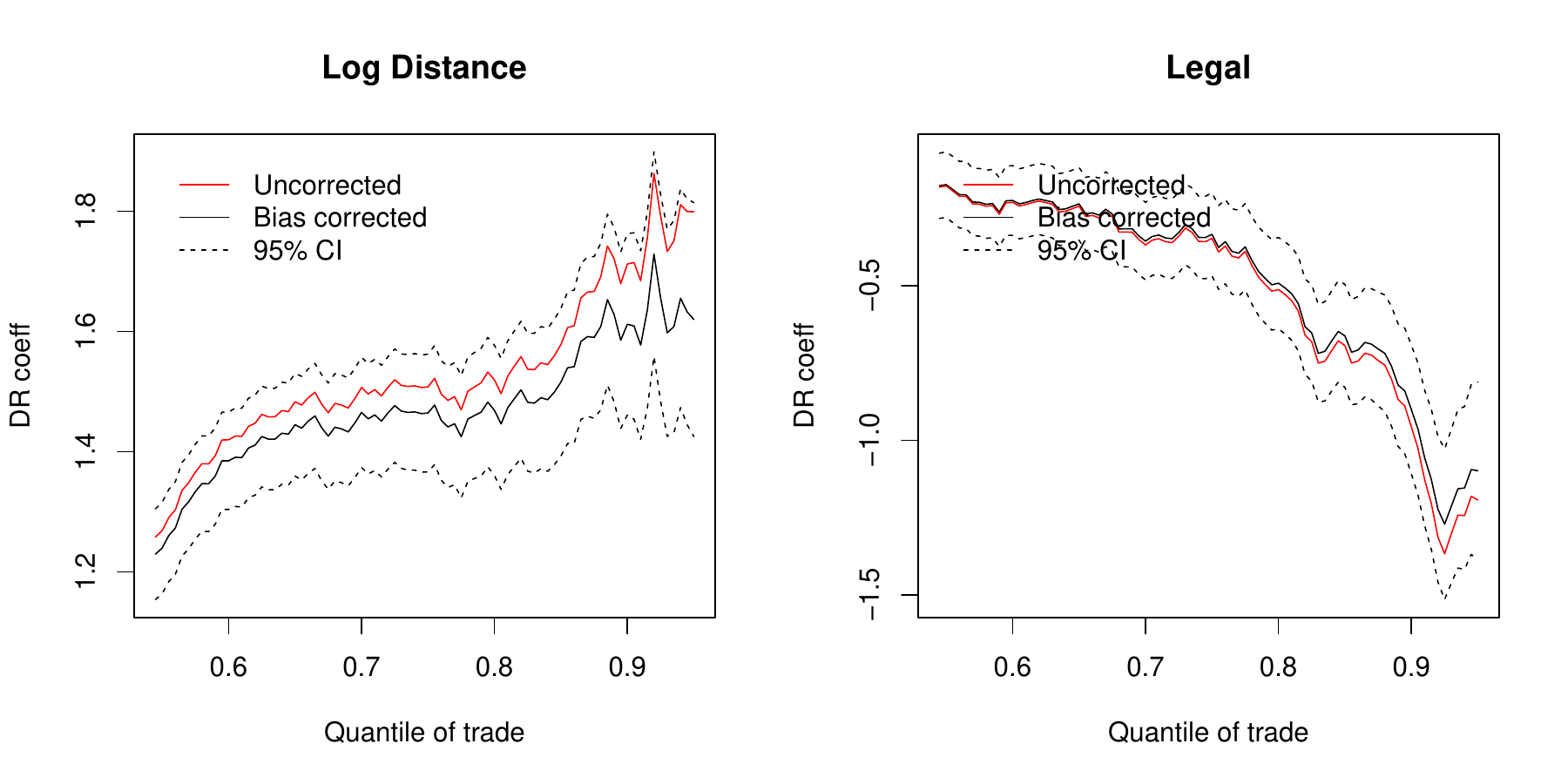}
        \end{center}\caption{Estimates and 95\% pointwise confidence intervals for the DR-coefficients of log distance and common legal system.}\label{fig:coeffs}
    \end{figure}

Figures \ref{fig:dist} and \ref{fig:quant} show estimates and 95\% uniform confidence bands for distribution and quantile functions of the volume of trade at different values of the log of distance and the common legal system. The left panels plot the functions when  distance takes the observed levels ($\text{dist}$) and two times the observed values $(\text{2*dist})$, i.e. when we counterfactually double all the distances between the countries. The right panels plot the functions when all the countries have the same legal system (legal=1) and different systems (legal=0). 
The confidence bands for the distribution are obtained by Algorithm \ref{alg:mb} with 500 bootstrap replications and standard normal multipliers, and a grid of values $\bar \mY$ that includes the sample quantiles of the volume of trade with indexes $\{.54, .55, \ldots, .95\}$. The bands are joint for the two functions displayed in each panel. The confidence bands for the quantile functions are obtained by inverting and rotating the bands for the corresponding distribution functions using Lemma \ref{theorem:bquant}.

    \begin{figure}[h!]
        \begin{center}
            \includegraphics[width=.48\textwidth,height= .48\textwidth,angle=0,page=1]{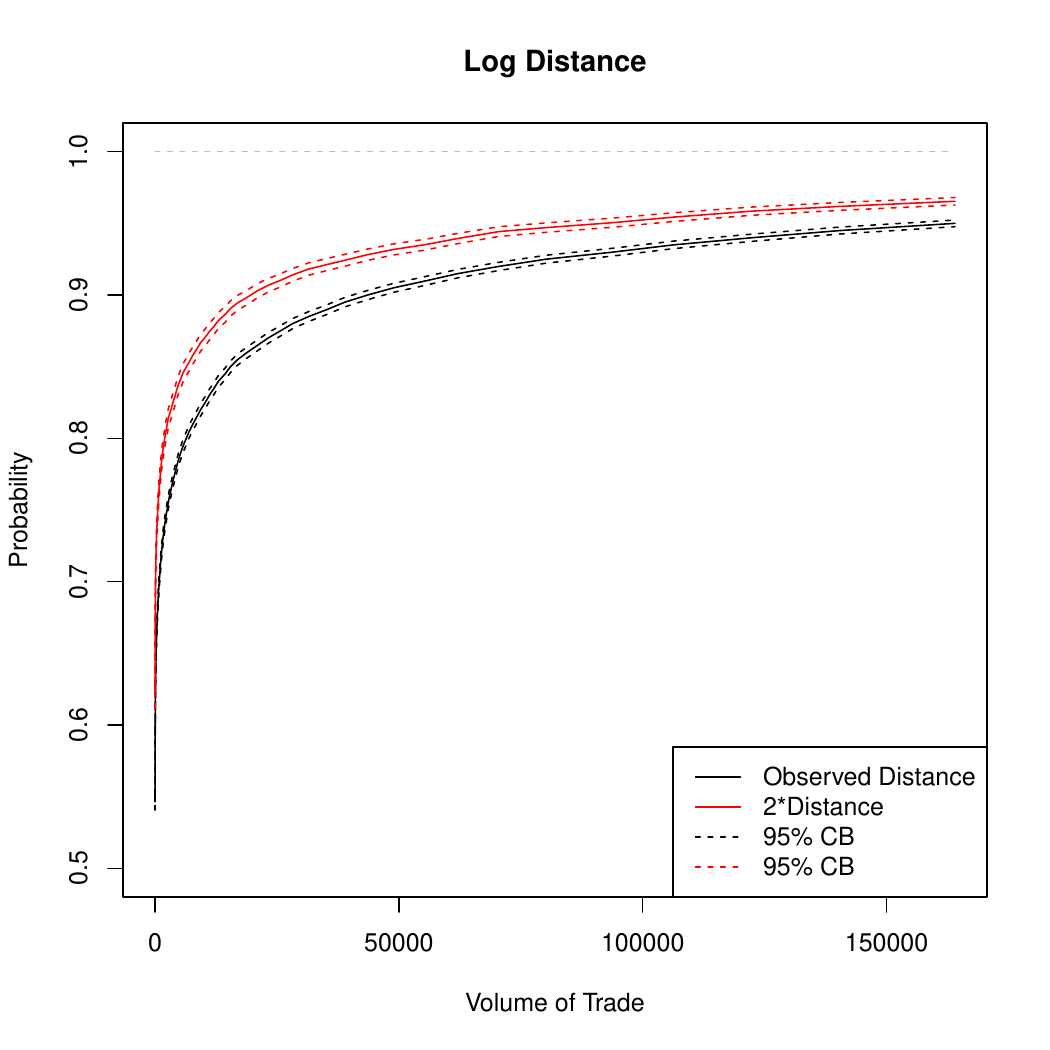}
            \includegraphics[width=.48\textwidth,height= .48\textwidth,angle=0,page=2]{Trade_dfqf.pdf}
        \end{center}\caption{Estimates and 95\% uniform confidence bands for distribution functions of the volume of trade.}\label{fig:dist}
    \end{figure}

    \begin{figure}[h!]
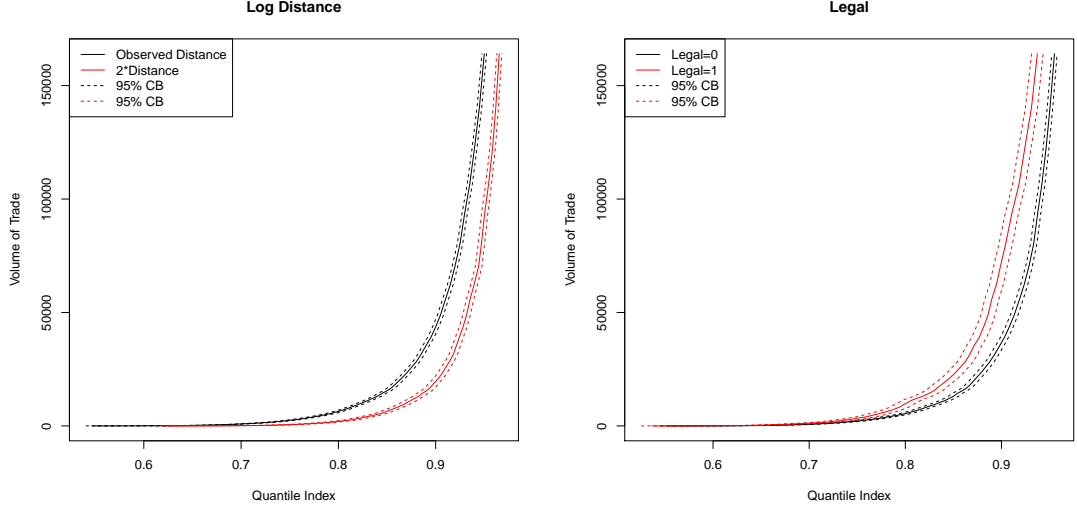

        \begin{center}
            \includegraphics[width=.48\textwidth,height= .48\textwidth,angle=0,page=3]{Trade_dfqf.pdf}
            \includegraphics[width=.48\textwidth,height= .48\textwidth,angle=0,page=4]{Trade_dfqf.pdf}
        \end{center}\caption{Estimates and 95\% uniform confidence bands for quantile functions of the volume of trade.}\label{fig:quant}
    \end{figure}
    
Figure \ref{fig:qte} displays estimates and 95\% uniform confidence bands for the quantile effects of the log of distance and the common legal system on the volume of trade, constructed using Lemma  \ref{theorem:bqte}. For comparison, we also include estimates from a Poisson model. Here,  we replace the  DR estimators of the distributions by
\begin{equation}\label{eq:dpoisson}
\widehat F_k(y) = \frac{1}{n} \sum_{(ij) \in \mD} \exp \lambda_{ij,k} \sum_{\tilde y = 0}^{\lfloor y \rfloor} \frac{\lambda_{ij,k}^{\tilde y}}{\tilde y!}, \ \ k \in \mK,
\end{equation}
where $\lfloor y \rfloor$ is the integer part of $y$, $\lambda_{ij,k} = \exp(\mathbbm{x}_{ij,k}'\widehat \beta + \widehat \alpha_i + \widehat \gamma_j)$, and $\widehat \theta = (\widehat \beta, \widehat \alpha_1, \ldots, \widehat \alpha_I, \widehat \gamma_1, \dots, \widehat \gamma_J)$ is the Poisson fixed effects conditional maximum likelihood estimator
$$
\widehat \theta \in \arg \max_{\theta \in \mathbb{R}^{d_x + I + J}} \sum_{(ij) \in \mD} [y_{ij} (x_{ij}'\beta + \alpha_i + \gamma_j) - \exp (x_{ij}'\beta + \alpha_i + \gamma_j)].
$$
We find that distance and common legal system have heterogeneously increasing effects along the distribution. For example, the negative effects of doubling the distance grows more than proportionally as we move up to the upper tail of the distribution of volume of trade. Putting all the countries under the same legal system has little effects in the extensive margin of trade, but has a strong positive effect at the upper tail of the distribution. The Poisson estimates lie outside the DR confidence bands  reflecting heavy tails in the conditional distribution of the volume of trade that is missed by the Poisson model.\footnote{This misspecification problem with the Poisson model is well-known in the international trade literature. The Poisson estimator is  treated as a quasi-likelihood estimator and standard errors robust to misspecification are reported (\citealp{SantosSilvaTenreyro2006}).} Figure \ref{fig:cqte} shows confidence bands of the quantile effects  that account for pairwise clustering. The bands are constructed from confidence bands from the distributions using Algorithm \ref{alg:cmb} with $500$ bootstrap draws and standard normal multipliers. Accounting for unobservables that affect symmetrically to the country pairs has very little effect on the width of the bands in this case.

    \begin{figure}[h!]
        \begin{center}
            \includegraphics[width=.48\textwidth,height= .48\textwidth,angle=0,page=1]{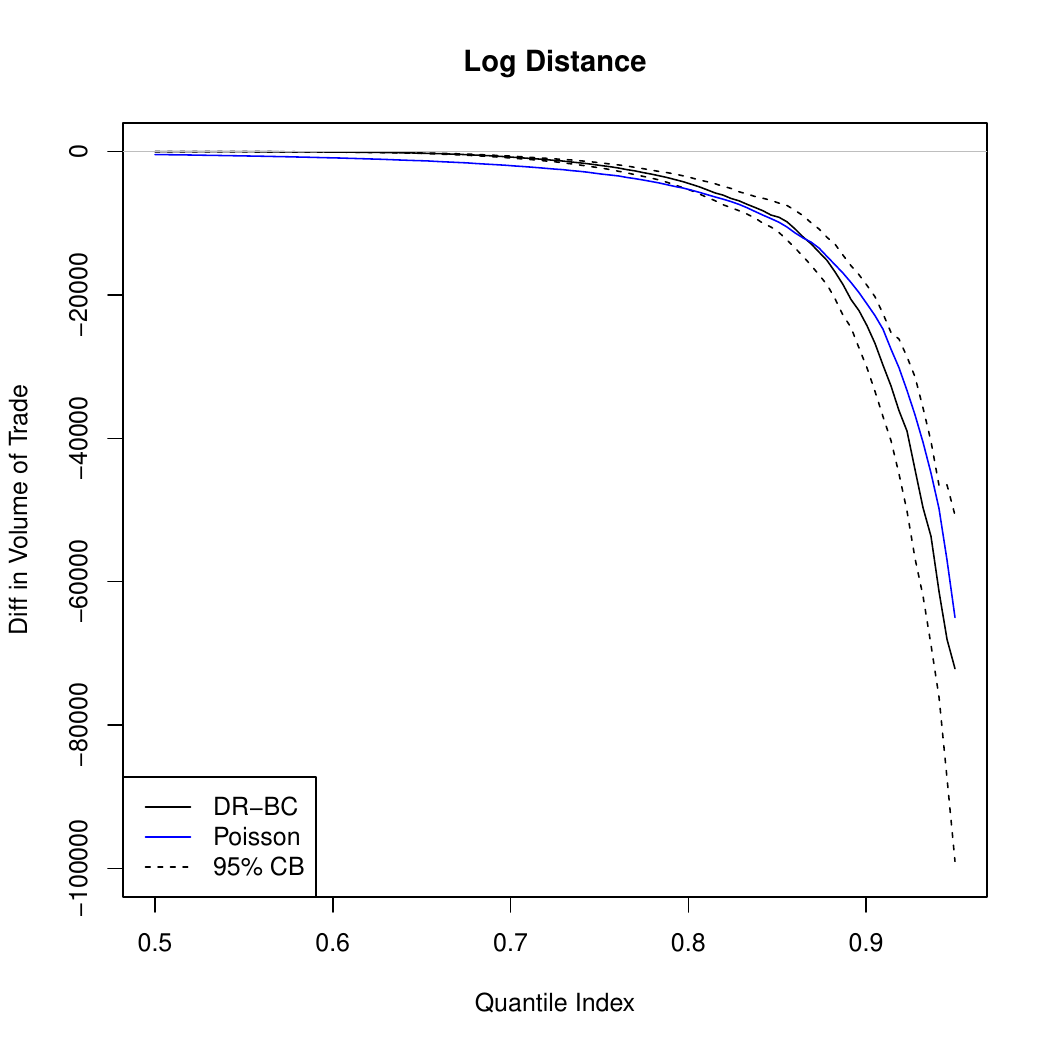}
            \includegraphics[width=.48\textwidth,height= .48\textwidth,angle=0,page=2]{Trade_qef.pdf}
        \end{center}\caption{Estimates and 95\% uniform confidence bands for the quantile effects of log distance and common legal system on the volume of trade.}\label{fig:qte}
    \end{figure}

    \begin{figure}[h!]
        \begin{center}
            \includegraphics[width=.48\textwidth,height= .48\textwidth,angle=0,page=3]{Trade_qef.pdf}
            \includegraphics[width=.48\textwidth,height= .48\textwidth,angle=0,page=4]{Trade_qef.pdf}
        \end{center}\caption{Estimates and 95\% uniform confidence bands for the quantile effects of log distance and common legal system on the volume of trade.}\label{fig:cqte}
    \end{figure}

\section{Montecarlo Simulation}\label{sec:mc}

We conduct a Montecarlo simulation calibrated to the empirical application of Section \ref{sec:trade}. The outcome is generated by the censored logistic process
$$
y^s_{ij} = \max\{x_{ij}'\widehat \beta + \widehat \alpha_i + \widehat \gamma_{j} + \widehat{\sigma} \Lambda^{-1}(u^s_{ij})/\sigma_{L}, 0 \},  \ \ (i,j) \in \mD,
$$
where $\mD = \{(i,j): 1 \leq i,j \leq 157, i \neq j\}$, $x_{ij}$ is the value of the covariates for the observational unit $(i,j)$ in the trade data set, $\sigma_L = \pi/\sqrt{3}$, the standard deviation of the logistic distribution, and $(\widehat \beta,  \widehat \alpha_1, \ldots, \widehat \alpha_I, \widehat \gamma_{1}, \ldots, \widehat \gamma_{J}, \widehat \sigma)$ are Tobit fixed effect estimates of the parameters in the trade data set with lower censoring point at zero.\footnote{We upper winsorize the volume of trade $y_{ij}$ at the $95.5\%$ quantile to reduce the effect of outliers in the Tobit estimation of the parameters.} We consider two designs: independent errors with $u^s_{ij} \sim \text{ i.i.d } \mathcal{U}(0,1),$ and pairwise dependent errors with $u^s_{ij} = \Phi(0.75 e^s_{ij} + \sqrt{1-0.75^2} e^s_{ji}),$ where $e^s_{ij} \sim \text{ i.i.d } \mathcal{N}(0,1)$ and $\Phi$ is the standard normal CDF.\footnote{The Spearman rank correlation between $u^s_{ij}$ and $u^s_{ji}$ in the design with pairwise-dependent errors is $0.73$.} In both cases the conditional distribution function of $y_{ij}^s$ is a special case of the DR  model \eqref{eq:pdr} with link function $\Lambda_y = \Lambda$, the logistic distribution, for all $y$,
$$
\beta(y) = \sigma_L (e_1 y - \widehat \beta)/\widehat \sigma, \ \  \alpha_i(y) = -\sigma_L \widehat \alpha_i/\widehat \sigma, \ \text{and}  \ \gamma_j(y) = - \sigma_L \widehat \gamma_j/\widehat \sigma,
$$
where $e_1$ is a unit vector of dimension $d_x$ with a one in the first component. As in the empirical application, the region  of interest $\mY$ is the interval between zero and the $0.95$-quantile of the volume of trade in the data set. All the results are based on 500 simulated panels  $\{(y^s_{ij}, x_{ij}) : (i,j) \in \mD\}$.

Figures \ref{fig:mc} and \ref{fig:mc2} report the biases, standard deviations and root mean square errors (rmses) of the fixed effects estimators of the DR coefficients of log-distance and legal system as a function of the quantiles of $y_{ij}$ in the design with independent errors.\footnote{The design with pairwise dependent errors produces similar results, which are not reported for the sake of brevity.}   All the results are in percentage of the true value of the parameter. As predicted by the large sample theory, the fixed effects estimator displays a bias of the same order of magnitude as the standard deviation. As in fig. \ref{fig:coeffs}, the bias is more severe for the coefficient of log distance. The bias correction removes most of the bias and does not increase the standard deviation, yielding a reduction in rmse of about 5\% for the coefficient of log distance at the highest quantile indexes.

    \begin{figure}[h!]
        \begin{center}
            \includegraphics[width=.32\textwidth,height= .5\textwidth,angle=0,page=1]{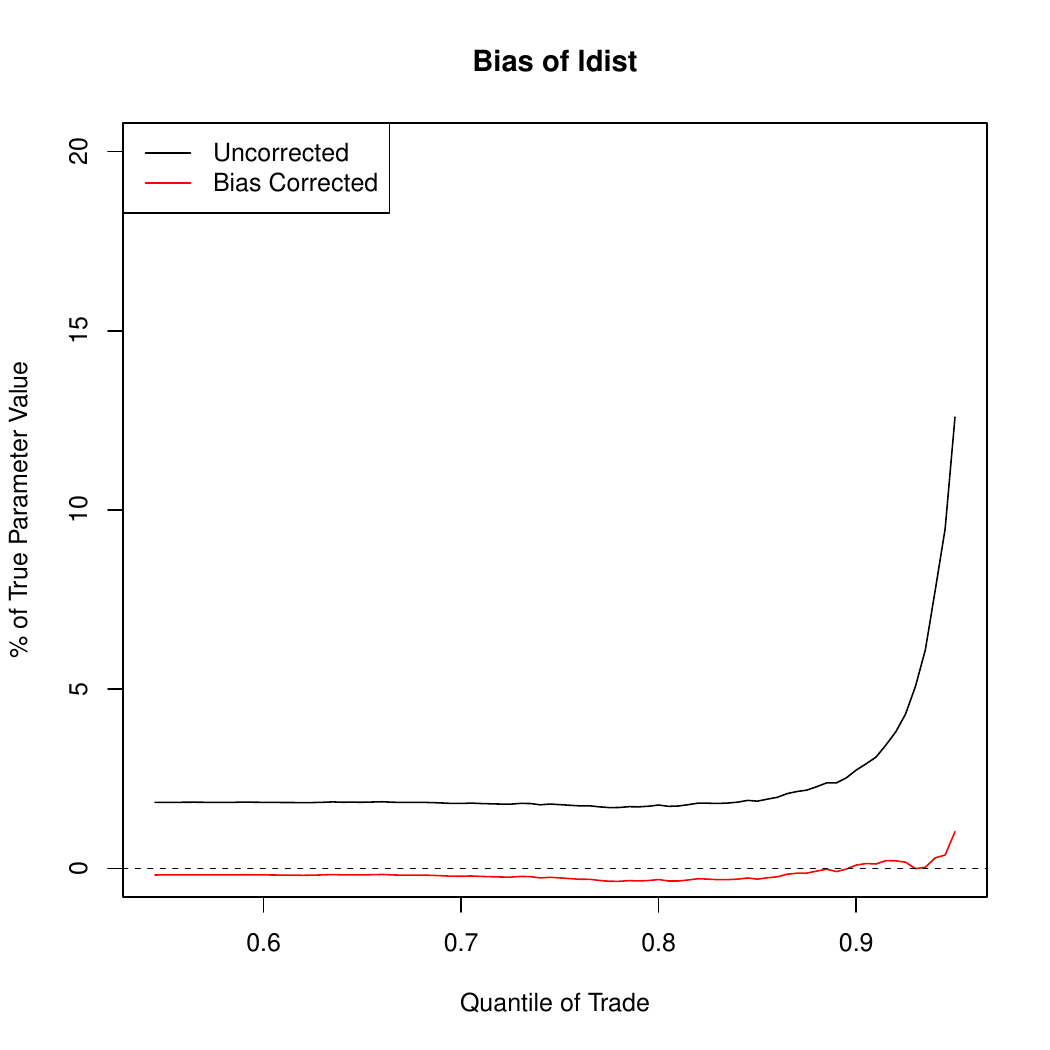}
            \includegraphics[width=.32\textwidth,height= .5\textwidth,angle=0,page=2]{Sim_tobit_coef.pdf}
            \includegraphics[width=.32\textwidth,height= .5\textwidth,angle=0,page=3]{Sim_tobit_coef.pdf}
        \end{center}\caption{Bias, standard deviation and root mean squared error for the estimators of the DR-coefficients of log-distance.}\label{fig:mc}
    \end{figure}

    \begin{figure}[h!]
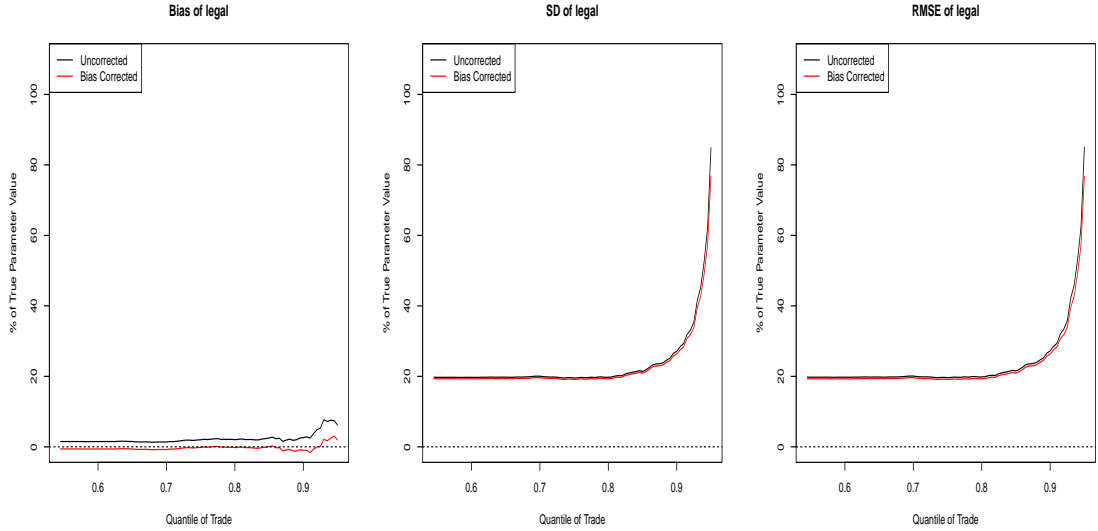

        \begin{center}
            \includegraphics[width=.32\textwidth,height= .5\textwidth,angle=0,page=6]{Sim_tobit_coef.pdf}
            \includegraphics[width=.32\textwidth,height= .5\textwidth,angle=0,page=7]{Sim_tobit_coef.pdf}
            \includegraphics[width=.32\textwidth,height= .5\textwidth,angle=0,page=8]{Sim_tobit_coef.pdf}
        \end{center}\caption{Bias, standard deviation and root mean squared error for the estimators of the DR-coefficients of same legal system.}\label{fig:mc2}
    \end{figure}

Figure \ref{fig:mc3} reports the biases, standard deviations and rmses of the estimators of the counterfactual distributions at two levels of log-distance as a function of the quantiles of $y_{ij}$ in the design with independent errors. The levels of distance in these distributions are the same as in the empirical application, i.e. $k=0$ and $k=1$ correspond to the observed values and two times the observed values, respectively.  All the results are in percentage of the true value of the functions. In this case we find that  the uncorrected and bias corrected estimators display small biases relative to their standard deviations, and have similar standard deviations and rmses at both treatment levels. Indeed the standard deviations and rmses are difficult to distinguish in the figure as they are almost superposed.   In results not reported, we find very similar patterns in the design with pairwise dependent errors and for the estimators of the counterfactual distributions at the same two levels of legal as in the empirical application.  

    \begin{figure}[h!]
        \begin{center}
            \includegraphics[width=.32\textwidth,height= .5\textwidth,angle=0,page=1]{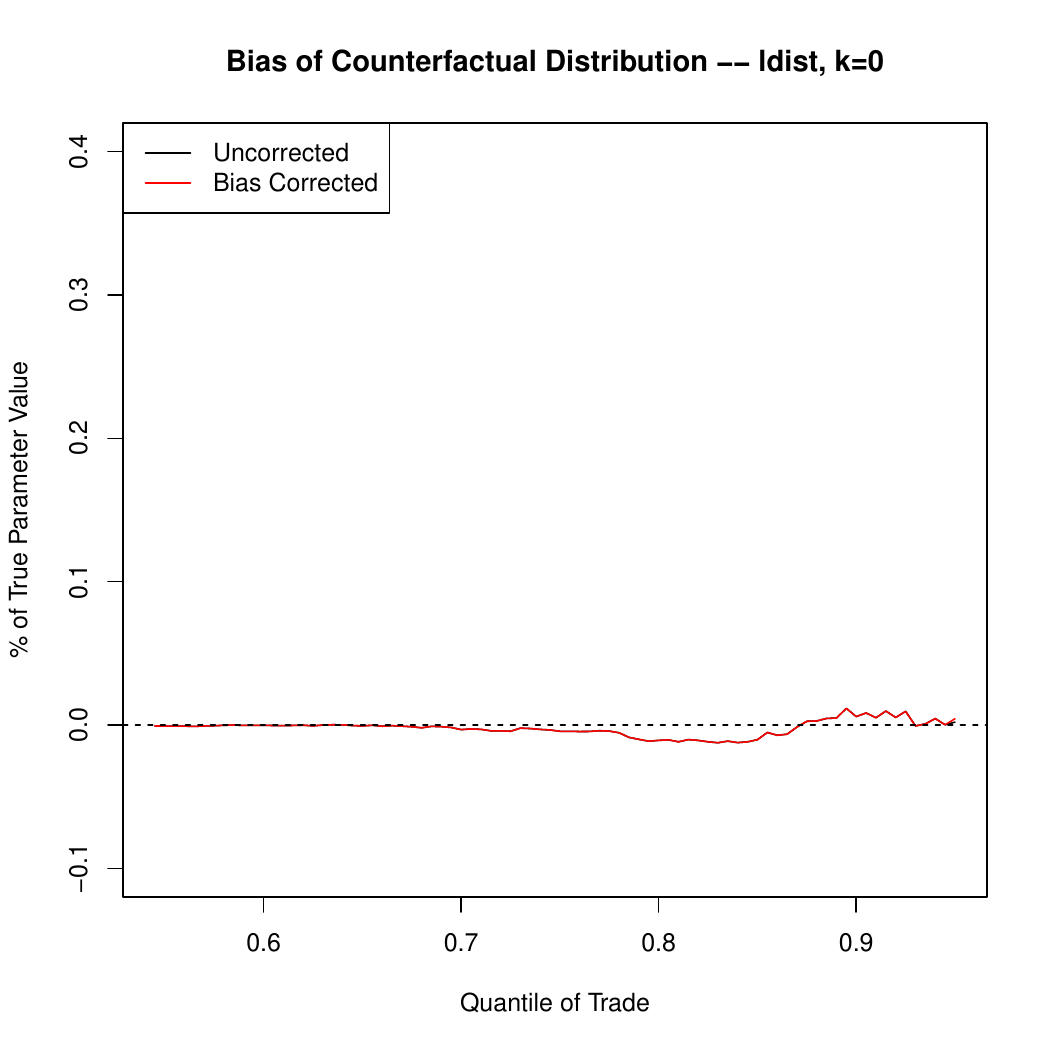}
            \includegraphics[width=.32\textwidth,height= .5\textwidth,angle=0,page=2]{Sim_tobit_df.pdf}
            \includegraphics[width=.32\textwidth,height= .5\textwidth,angle=0,page=3]{Sim_tobit_df.pdf}
            \includegraphics[width=.32\textwidth,height= .5\textwidth,angle=0,page=6]{Sim_tobit_df.pdf}
            \includegraphics[width=.32\textwidth,height= .5\textwidth,angle=0,page=7]{Sim_tobit_df.pdf}
            \includegraphics[width=.32\textwidth,height= .5\textwidth,angle=0,page=8]{Sim_tobit_df.pdf}
        \end{center}\caption{Bias, standard deviation and root mean squared error for the estimators of the counterfactual distributions of log-distance.}\label{fig:mc3}
    \end{figure}

Table \ref{table:mc} shows results on the finite sample properties of 95\% confidence bands for the DR coefficients and counterfactual distributions in the design with independent errors. The confidence bands are constructed by multiplier bootstrap with 500 draws, standard normal weights, and a grid of values $\bar \mY$ that includes the sample quantiles of the volume of trade with indexes $\{.54, .55, \ldots, .95\}$ in the trade data set. For the coefficients, it reports the average length of the confidence bands integrated over threshold values, the average value of the estimated critical values, and the empirical coverages of the confidence bands. For the distributions, it reports the same measures averaged also over the two treatment levels and where the coverage of the bands is joint for the two counterfactual distributions.\footnote{The joint coverage of the bands for the quantile functions and quantile effect is determined by the joint coverage of the bands of the distribution functions in our construction. We refer to \cite{CFVMW2015} for a numerical analysis on the marginal coverage of the bands for the quantile effects.} For comparison, it also reports the coverage of pointwise confidence bands using the normal distribution, i.e. with critical value equal to 1.96. The last row computes the ratio of the standard error averaged across simulations to the simulation standard deviation, integrated over threshold values for the coefficients and over thresholds and treatment levels for the distributions. We consider standard errors and confidence bands with and without accounting for pairwise clustering. All the results are computed for confidence bands centered at the uncorrected fixed effects estimates and at the bias corrected estimates. For the coefficients, we find that the bands centered at the uncorrected estimates undercover the true coefficients, whereas the bands centered at the bias corrected estimates have coverages close to the nominal level.  The joint coverage of the bands for the  distributions is close to the nominal level regardless of whether they are centered at the uncorrected or bias corrected estimates. We attribute this similarity in coverage to the small biases in the uncorrected estimates of the distributions  found in fig. \ref{fig:mc3}.   As expected, pointwise bands severely undercover the  entire functions. The standard errors based on the asymptotic distribution provide a good approximation to the sampling variability of both the uncorrected and bias corrected estimators. Accounting for pairwise clustering in this design where it is not necessary has very little effect on the quality of the inference.

\begin{table}
\caption{95\% Confidence Bands -- Design with Independent Errors}\label{table:mc}
\begin{center}
    \begin{tabular}{lcccccccc}
        \hline\hline
        & \multicolumn{4}{c}{Uncorrected} & \multicolumn{4}{c}{Bias Corrected} \\
        & $\beta_{ldist}$ & $\beta_{legal}$ & $F_{ldist}$ & $F_{legal}$ & $\beta_{ldist}$ & $\beta_{legal}$ & $F_{ldist}$ & $F_{legal}$ \\
        \hline
        \multicolumn{9}{l}{\textbf{Unclustered Inference}} \\
        \hspace{3mm} Average Length & 0.24 & 0.35 & 0.01 & 0.02 & 0.24 & 0.35 & 0.01 & 0.02 \\
        \hspace{3mm} Average Critical Value & 2.90 & 2.89 & 3.10 & 3.13 & 2.90 & 2.89 & 3.10 & 3.13 \\
        \hspace{3mm} Coverage uniform band (\%) & 83 & 91 & 94 & 93 & 95 & 94 & 94 & 94 \\
        \hspace{3mm} Coverage pointwise band (\%) & 35 & 58 & 35 & 29 & 60 & 64 & 35 & 29 \\
        \hspace{3mm} Average SE/SD & 0.97 & 1.01 & 0.99 & 1.01 & 1.00 & 1.04 & 0.99 & 1.01 \\ 
        \hline
        \multicolumn{9}{l}{\textbf{Pairwise Clustered Inference}} \\
        \hspace{3mm} Average Length & 0.23 & 0.35 & 0.01 & 0.02 & 0.23 & 0.35 & 0.01 & 0.02 \\
        \hspace{3mm} Average Critical Value & 2.89 & 2.89 & 3.09 & 3.12 & 2.89 & 2.89 & 3.09 & 3.12 \\
        \hspace{3mm} Coverage uniform band (\%) & 82 & 92 & 93 & 93 & 94 & 93 & 93 & 93 \\
        \hspace{3mm} Coverage pointwise band (\%) & 35 & 57 & 35 & 30 & 59 & 63 & 36 & 29 \\
        \hspace{3mm} Average SE/SD & 0.97 & 1.01 & 0.99 & 1.01 & 1.00 & 1.04 & 0.99 & 1.01 \\ 
        \hline\hline
        \multicolumn{9}{l}{\footnotesize Notes: Nominal level of critical values is 95\%. 500 simulations with 500 multiplier bootstrap draws.}
    \end{tabular}
\end{center}
\end{table}

Table \ref{table:mc2} reports the same results as table \ref{table:mc} for the design with pairwise dependent errors. The bands that do not account for pairwise clustering undercover the functions because the standard errors underestimate the standard deviations of the estimators. Compared to the design with independent errors,  the critical values are similar but the bands that account for clustering are    wider  due to the increase in the standard errors. To sum up, inference methods robust to pairwise clustering perform well in both designs, whereas inference methods that do not account for clustering undercover in the presence of pairwise dependence. The bias corrections are effective in reducing  bias and bringing the coverage probabilities of the bands close to their nominal level for the coefficients, whereas they have little effect for the distributions.

\begin{table}
\caption{95\% Confidence Bands -- Design with Pairwise Dependent Errors}\label{table:mc2}
\begin{center}
    \begin{tabular}{lcccccccc}
        \hline\hline
        & \multicolumn{4}{c}{Uncorrected} & \multicolumn{4}{c}{Bias Corrected} \\
        & $\beta_{ldist}$ & $\beta_{legal}$ & $F_{ldist}$ & $F_{legal}$ & $\beta_{ldist}$ & $\beta_{legal}$ & $F_{ldist}$ & $F_{legal}$ \\
        \hline
        \multicolumn{9}{l}{\textbf{Unclustered Inference}} \\
        \hspace{3mm} Average Length & 0.24 & 0.35 & 0.01 & 0.02 & 0.24 & 0.35 & 0.01 & 0.02 \\
        \hspace{3mm} Average Critical Value & 2.90 & 2.89 & 3.10 & 3.13 & 2.90 & 2.89 & 3.10 & 3.13 \\
        \hspace{3mm} Coverage uniform band (\%) & 64 & 73 & 73 & 68 & 80 & 78 & 74 & 68 \\
        \hspace{3mm} Coverage pointwise band (\%) & 21 & 27 & 11 & 8 & 32 & 36 & 12 & 8 \\
        \hspace{3mm} Average SE/SD & 0.77 & 0.76 & 0.77 & 0.77 & 0.79 & 0.78 & 0.77 & 0.77 \\  
        \hline
        \multicolumn{9}{l}{\textbf{Pairwise Clustered Inference}} \\
        \hspace{3mm} Average Length & 0.30 & 0.44 & 0.02 & 0.02 & 0.30 & 0.44 & 0.02 & 0.02 \\
        \hspace{3mm} Average Critical Value & 2.82 & 2.82 & 3.02 & 3.05 & 2.82 & 2.82 & 3.02 & 3.05 \\
        \hspace{3mm} Coverage uniform band (\%) & 86 & 92 & 93 & 92 & 96 & 93 & 93 & 92 \\
        \hspace{3mm} Coverage pointwise band (\%) & 47 & 59 & 44 & 37 & 67 & 66 & 43 & 37 \\
        \hspace{3mm} Average SE/SD & 1.00 & 0.99 & 1.00 & 0.99 & 1.03 & 1.01 & 1.00 & 0.99 \\ 
        \hline\hline
        \multicolumn{9}{l}{\footnotesize{Notes: Nominal level of critical values is 95\%. 500 simulations with 500 multiplier bootstrap draws.}}
    \end{tabular}
\end{center}
\end{table}

\section{Conclusion}
We have constructed confidence bands for quantile functions and quantile effects in nonlinear network and panel models with two-way unobserved effects. Our construction relies on the generic method of \cite{CFVMW2015} to convert confidence bands for distributions into confidence bands for quantiles. The same method can be applied to more complicated models such as nonlinear models with interactive unobserved effects or factor structure, provided that  confidence bands for distributions  in these models are supplied. Such bands are not currently available, but could be obtained by extending the central limit theorem of Chen, Fern\'andez-Val and Weidner (in press)
 to a functional central limit theorem. We leave such extension to future work. 

\vspace{-0.3cm}
\phantom{\cite{cfw14}} 

\bibliographystyle{chicagoMOD}
\bibliography{references}  

\newpage

\begin{appendix}

\renewcommand{\theequation}{A.\arabic{equation}}
\label{appendixA}
\setcounter{equation}{0}

\section{Proofs of Main Text Results}
We present the proofs of  Theorems~\ref{th:expansion} and \ref{th:bc}, and relegate
various technical details   
 to the on-line supplementary appendix.
Once Theorems~\ref{th:expansion} and \ref{th:bc} are shown,
the proof of Theorem~\ref{cor:bias_correction} for the multiplier bootstrap follows from Theorem 2.2 in \cite{cck-16}.
The uniform confidence bands $I_F$ for the cdfs in \eqref{ConfBandCDF} obtained by the multiplier bootstrap
can then be inverted and differenced to obtain uniform confidence bands for the quantile function and quantile effects,
see \cite{CFVMW2015} and also Lemma~\ref{theorem:bquant} and \ref{theorem:bqte} above.
This appendix thus contains the proofs of all the main results that are new to the current paper.
 The proofs for all of the lemmas below are given in the supplementary appendix.
 All stochastic statements in the following are conditional on $  \{(x_{ij}, v_i, w_j): (i,j) \in \mD\}$.

As explained in Section~\ref{sec:asymptotics}, we consider the logistic cdf $\Lambda_y(\pi) = \Lambda(\pi) = (1+\exp(-\pi))^{-1}$
 for all our theorems.
In the following we  indicate the dependence on $y \in \mY$  as a subscript, for example,
we write $\theta_y$ instead of $\theta(y)$ from now on.
We use the column vector $w_{ij} = (x_{ij}', e_{i,I}', e_{j,J}')'$, as in Section~\ref{sec:UniformBands},
and can then write the single index 
$\pi_{y,ij} := x_{ij}' \beta_y + \alpha_{y,i} + \gamma_{y,j}$ simply as
$\pi_{y,ij} = w_{ij}' \theta_y$. The corresponding estimator is $\widehat \pi_{y,ij} = w_{ij}'  \widehat \theta_y$.
We also define minus the log-likelihood function as
$ \ell_{y,ij}(\pi) := - 1\{y_{ij} \leq y \} \log \Lambda(\pi) - 1\{y_{ij} > y \} \log [1 - \Lambda(\pi)]$.
Let $\pi_y$ be a $n$-vector containing $\pi_{y,ij}$, $(i,j) \in \mD$.
For a given $y \in \mY$ we can then rewrite the estimation problem in~\eqref{fe-cmle} as
\begin{align}
\widehat \pi_y &= \arg \min_{\pi_y \in \mathbb{R}^n}  \sum_{(i,j) \in \mD}   \ell_{y,ij}(\pi_{y,ij})
,
&& \text{s.t.}
&
\exists\,  \theta \in \mathbb{R}^{d_x + I + J}: \, \pi_{y,ij} = w_{ij}' \theta_y .
    \label{DefHatPi}
\end{align}
In the following we denote the true parameter values by $\theta^0$,
and correspondingly we write $\pi^0_{y,ij} = w_{ij}' \theta^0_y$, in order to distinguish the true value
from generic values like the argument $\pi_{y,ij}$ in the last display.
For the $k$'th derivative of $\ell_{y,ij}(\pi_{y,ij})$ with respect to $\pi_{y,ij}$
we write  $\partial_{\pi^k} \ell_{y,ij}(\pi_{y,ij})$, and we drop the argument when the derivative is evaluated
at $\pi^0_{y,ij}$, that is, $\partial_{\pi^k} \ell_{y,ij} = \partial_{\pi^k} \ell_{y,ij}(\pi^0_{y,ij})$.
The normalized score for observation $i,j$ then reads
\begin{align*}
     s_{y,ij} :=  \left[   \partial_{\pi^2} \ell_{y,ij} \right]^{-1/2}  \partial_\pi \ell_{y,ij} 
      =   \left(  \Lambda^{(1)}_{y,ij} \right)^{-1/2}  \partial_\pi \ell_{y,ij}  ,
\end{align*}
where $\Lambda^{(1)}_{y,ij} = \Lambda^{(1)}(\pi^0_{y,ij}) = \partial_{\pi}   \Lambda(\pi^0_{y,ij}) $,
as defined in Section~\ref{sec:AsyDistrUncorrected}.
Note that $\Ep s_{y,ij} = 0$ and $\Ep s_{y,ij}^2 =1$.

Let  $s_y$ be the $n$-vector obtained by stacking 
the elements $s_{y,ij}$ across all observations $(i,j) \in \mD$.
Similarly, let $\Lambda^{(1)}_{y}$ be the $n \times n$ diagonal matrix with diagonal elements
given by $\Lambda^{(1)}_{y,ij}$, $(i,j) \in \mD$. Finally,  
 let $w$ be the $n \times (d_x+I+J)$ matrix with rows given by $w_{ij}'$, $(i,j) \in \mD$.
We define the $n \times n$ symmetric idempotent matrix
\begin{align*}
    Q_y :=  \left( \Lambda^{(1)}_{y} \right)^{1/2} w  \left( w' \Lambda^{(1)}_{y} w \right)^\dagger w'  \left( \Lambda^{(1)}_{y} \right)^{1/2} ,
\end{align*}
where $\dagger$ is the Moore-Penrose pseudoinverse.
For the elements of this matrix we write $Q_{y,ij,i'j'}$.
We have
$\left( Q_y s_y  \right)_{ij} =     \sum_{(i',j') \in \mD} Q_{y, ij, i'j'}  s_{y,i'j'}$.
The constraint 
$\exists\,  \theta: \, \pi_{y,ij} = w_{ij}' \theta_y$ in 
\eqref{DefHatPi} can then equivalently be written as\footnote{%
In matrix notation the constraint can be written as
$ \pi_y= w \, \theta_y$, and we thus have
$ Q_y  \left( \Lambda^{(1)}_{y} \right)^{1/2} \pi_y  = Q_y  \left( \Lambda^{(1)}_{y} \right)^{1/2} w \, \theta_y =  \left( \Lambda^{(1)}_{y} \right)^{1/2} w \, \theta_y  =  \left( \Lambda^{(1)}_{y} \right)^{1/2} \pi_y $, where we also used 
that $ Q_y  \left( \Lambda^{(1)}_{y} \right)^{1/2} w = \left( \Lambda^{(1)}_{y} \right)^{1/2}  w$, which follows from the
definition of $Q_y$.
} 
\begin{align}
    Q_y  \left( \Lambda^{(1)}_{y} \right)^{1/2} \pi_y    =  \left( \Lambda^{(1)}_{y} \right)^{1/2} \pi_y .
    \label{QpiProjection}
\end{align} 
The matrix $Q_y$ projects onto the column span of $ \left( \Lambda^{(1)}_{y} \right)^{1/2} w$.
This projector acts in the space  of weighted index vectors 
$\left[ \left( \Lambda^{(1)}_{y,ij} \right)^{1/2}  \pi_{y,ij} \, : \, (i,j) \in \mD \right]$,
and the weighting of each $ \pi_{y,ij}$ by  $\left( \Lambda^{(1)}_{y,ij} \right)^{1/2}$ is natural, because 
 $\Lambda^{(1)}_{y,ij} $ is simply the expected Hessian for observation $(i,j)$. 

\subsection{Technical Lemmas}
We require some results for the proofs of the main theorems below.
The following lemma provides an asymptotic expansion of  $\widehat \pi_{y,ij} - \pi^0_{y,ij} $.

\begin{lemma}[Score expansion of fixed effect estimates]
    \label{lemma:ScoreExpansionPi}
    Under Assumption~\ref{ass:baseline}, for  $y  \in  \mY$ and $(i,j) \in \mD$,  we have
    \begin{align*}
    \left( \Lambda^{(1)}_{y,ij} \right)^{1/2} \left( \widehat \pi_{y,ij} - \pi^0_{y,ij} \right)
     &= - \left( Q_y s_y  \right)_{ij}
        - \frac 1 2 
         \sum_{(i',j') \in \mD} 
                \,   Q_{y, ij, i'j'} \,
     \frac{ \Lambda^{(2)}_{y,i'j'} } { \left(  \Lambda^{(1)}_{y,i'j'} \right)^{3/2} }
        \left[ \left( Q_y s_y  \right)_{i'j'} \right]^2
        + r_{y,ij},
\end{align*}
and the remainder $r_{y,ij}$ satisfies 
 $\sup_{y \in  \mY} \max_{(i,j) \in \mD} \left| r_{y,ij} \right| = o_P(n^{-1/2})$.

\end{lemma}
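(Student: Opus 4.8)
The plan is to read the expansion directly off the first-order conditions of the constrained program~\eqref{DefHatPi}, exploiting the clean derivative structure of the logistic negative log-likelihood. Write $\Delta\pi_{y,ij} := \widehat\pi_{y,ij} - \pi^0_{y,ij}$ and, following the statement, $g_{y,ij} := (\Lambda^{(1)}_{y,ij})^{1/2}\,\Delta\pi_{y,ij}$, with $g_y$ the corresponding $n$-vector. First I would record that for the logistic link $\partial_{\pi^2}\ell_{y,ij}(\pi) = \Lambda^{(1)}(\pi)$ and $\partial_{\pi^3}\ell_{y,ij}(\pi) = \Lambda^{(2)}(\pi)$ do not depend on $y_{ij}$, so that a third-order Taylor expansion of the estimating equation $\sum_{(i,j)\in\mD} w_{ij}\,\partial_\pi\ell_{y,ij}(\widehat\pi_{y,ij}) = 0$ about $\pi^0_y$, after pulling a factor $(\Lambda^{(1)}_{y,ij})^{1/2}$ out of each coordinate, takes the form
\[
  w'\,(\Lambda^{(1)}_y)^{1/2}\Big[\, s_y + g_y + \tfrac12\, v_y + \rho_y \,\Big] = 0,
\]
where $v_{y,ij} := \Lambda^{(2)}_{y,ij}\,(\Lambda^{(1)}_{y,ij})^{-3/2}\, g_{y,ij}^2$ and $\rho_y$ collects the cubic remainder involving $\Lambda^{(3)}$ evaluated at an intermediate point.

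The second step is a projection argument. Premultiplying the display by $(\Lambda^{(1)}_y)^{1/2} w\,(w'\Lambda^{(1)}_y w)^\dagger$ and recognizing $Q_y = (\Lambda^{(1)}_y)^{1/2} w (w'\Lambda^{(1)}_y w)^\dagger w' (\Lambda^{(1)}_y)^{1/2}$ shows that $Q_y(s_y + g_y + \tfrac12 v_y + \rho_y) = 0$. Since both $\widehat\pi_y$ and $\pi^0_y$ are feasible, their difference obeys the constraint~\eqref{QpiProjection}, i.e.\ $Q_y g_y = g_y$, so I obtain the exact identity
\[
  g_y = -\,Q_y s_y - \tfrac12\, Q_y v_y - Q_y \rho_y .
\]
Componentwise this reads $g_{y,ij} = -(Q_y s_y)_{ij} - \tfrac12\sum_{(i',j')\in\mD} Q_{y,ij,i'j'}\,\Lambda^{(2)}_{y,i'j'}(\Lambda^{(1)}_{y,i'j'})^{-3/2}\,g_{y,i'j'}^2 + r_{y,ij}$, and replacing $g_{y,i'j'}^2$ in the quadratic term by its leading approximation $[(Q_y s_y)_{i'j'}]^2$ reproduces exactly the expansion claimed in the lemma, with all discarded pieces folded into $r_{y,ij}$.

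The technical heart is showing $\sup_{y\in\mY}\max_{(i,j)\in\mD}|r_{y,ij}| = o_P(n^{-1/2})$, which needs three ingredients, each uniform in $y$: (i) a preliminary uniform rate $\sup_{y}\max_{(i,j)}|\Delta\pi_{y,ij}| = o_P(1)$ together with a sharp per-coordinate bound on $(Q_y s_y)_{ij}$, controlling the cubic term $Q_y\rho_y$; (ii) bounds on the projector $Q_y$ in the relevant row-sum and operator norms, following from the non-collinearity Assumption~\ref{ass:baseline}(vi) and from Assumption~\ref{ass:baseline}(iii) keeping $\Lambda^{(1)}_{y,ij}$ bounded away from $0$, so that the weights $(\Lambda^{(1)}_{y,ij})^{\pm1/2}$ are uniformly bounded; and (iii) the estimate that contracting $Q_y$ against $\Lambda^{(2)}_{y,\cdot}(\Lambda^{(1)}_{y,\cdot})^{-3/2}\big(g_y^2 - (Q_y s_y)^2\big)$ costs only $o_P(n^{-1/2})$, which uses $g_{y,i'j'} + (Q_y s_y)_{i'j'} = O_P(n^{-1/2})$ uniformly, itself a consequence of the identity above.

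I expect the single-$y$ algebra and the associated stochastic bounds to be available from \cite{FernandezValWeidner2016}; the genuinely new difficulty is uniformity over the continuum $\mY$, so each of the bounds above must carry the supremum over $y$ inside. The main obstacle is therefore the empirical-process control of the collection, indexed by $y$, of the binarized outcomes $1\{y_{ij}\le y\}$ and the induced processes $y\mapsto s_y$ and $y\mapsto Q_y s_y$: I would establish the required uniform rates via maximal inequalities over $y$, leaning on the compact-support Assumption~\ref{ass:baseline}(iii) (the dense-network condition) which makes the relevant function classes manageable and the logistic weights uniformly bounded above and below. The detailed chaining arguments and the verification of the stated remainder order I would defer to the supplementary appendix.
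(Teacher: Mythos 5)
The algebraic core of your plan is exactly the paper's: it too Taylor-expands the first-order conditions of the constrained program \eqref{DefHatPi}, hits them with the projector $Q_y$, and uses feasibility ($Q_y g_y = g_y$, your notation) to obtain the identity $g_y = -Q_y s_y - \tfrac12 Q_y v_y - Q_y \rho_y$ before substituting $(Q_y s_y)^2$ for $g_y^2$; the paper merely packages this with Lagrange multipliers and the rescaled index $\pi^*_{y,ij}=(\Lambda^{(1)}_{y,ij})^{1/2}\pi_{y,ij}$. The gaps are in the remainder control.

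The main gap is your ingredient (i). The preliminary uniform rate $\sup_{y}\max_{(i,j)}|\widehat\pi_{y,ij}-\pi^0_{y,ij}|=o_P(1)$ is where the real work of this lemma lies, and your proposal supplies no argument for it: the tools you invoke (maximal inequalities and chaining over $y$ for the binarized outcomes) control the score processes $y\mapsto s_y$ and $y\mapsto Q_y s_y$, not the estimator, and your exact identity cannot substitute for them because it is a fixed-point relation in $g_y$ — both $v_y$ and $\rho_y$ are quadratic/cubic in $g_y$ — so it yields rates only \emph{after} one knows $\max_{(i,j)}|g_{y,ij}|$ is small, and by itself cannot exclude large solutions. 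The paper devotes an entire step (Step 1 of its proof) to precisely this point, exploiting strict convexity of the logistic log-likelihood: it shows, uniformly in $y$, that the objective on the boundary of a shrinking neighborhood $\overline\Pi^*_{y,n}$ of $\pi^{*\,0}_y-Q_y s_y$ exceeds its value at the center, so the minimizer is localized inside; only then does a bootstrapping pass through the second-order expansion (the paper's Step 2) upgrade $o_P(1)$ to the rate $o_P(n^{-1/6})$ that your cubic term $Q_y\rho_y$ and cross terms require. Without this convexity/localization argument the proof does not get off the ground.

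Separately, the intermediate claim in your ingredient (iii), that $g_{y,ij}+(Q_y s_y)_{ij}=O_P(n^{-1/2})$ uniformly over $(y,i,j)$, is both unobtainable and unnecessary. Unobtainable because from the identity $g_y+Q_y s_y=-\tfrac12 Q_y v_y-Q_y\rho_y$ and the row-sum bound of Lemma~\ref{lemma:PropQ}(ii), the best uniform bound is of order $\max|g_{y,ij}|^2=o_P(n^{-1/3})$: uniformly in $(y,i,j)$ the score term is only $o_P(n^{-1/6})$, not $O_P(n^{-1/4})$ — the paper notes explicitly after Lemma~\ref{lemma:ScoreAve} that the pointwise $O_P(n^{-1/4})$ rate is lost under the supremum — so $\max|g_{y,ij}|^2$ cannot be pushed to $n^{-1/2}$ (and indeed $-\tfrac12 Q_y v_y$ carries the incidental-parameter bias, which is itself of exact order $n^{-1/2}$). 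Unnecessary because what your Step (iii) actually needs is the product bound $|g_{y,ij}^2-(Q_y s_y)_{ij}^2|\le|g_{y,ij}+(Q_y s_y)_{ij}|\,|g_{y,ij}-(Q_y s_y)_{ij}|\le o_P(n^{-1/3})\cdot o_P(n^{-1/6})=o_P(n^{-1/2})$, which survives contraction with $Q_y$ by the $O_P(1)$ row-sum bound. With that correction, and with the convexity step supplied, your plan coincides with the paper's Steps 2--3.
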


The expansion in the preceding lemma is a second-order stochastic expansion,
because it does not only describe the terms linear in the score $s_y$,
but also the terms  quadratic in $s_y$. We need to keep track of those quadratic terms, because
they yield the leading order incidental parameter biases that appear in Theorem~\ref{th:expansion}.
The remainder $r_{y,ij}$ contains higher-order terms in $s_y$ (cubic, quartic, etc), which turn out not to matter for
the result in Theorem~\ref{th:expansion}. Note also that $\Lambda^{(2)}_{y,ij} = \partial_{\pi^3} \ell_{y,ij} $. Thus,
the term quadric in the score is proportional to the third derivative of the objective function. 

We now want to decompose the projector $Q_y$
into the parts stemming from $x_{ij}$, $e_{i,I}$ and $e_{j,J}$, respectively. 
We have already introduced  the $d_x$-vector $\widetilde x_{y,ij} = \widetilde x_{ij}(y)$
in Section~\ref{sec:AsyDistrUncorrected}.
Let $\widetilde x_y$ be the $n \times d_x$ matrix with rows given by $\widetilde x'_{y,ij}$, $(i,j) \in \mD$.
The $d_x \times d_x$ matrix  $W_y = W(y) = n^{-1} \widetilde x_y' \Lambda^{(1)}_{y} \widetilde x_y$
was also introduced in Section~\ref{sec:AsyDistrUncorrected}.
Invertibility of $W_y$ is guaranteed by Assumption~\ref{ass:baseline}$(vi)$,
and uniform boundedness of $\Lambda^{(1)}_{y,ij} $ and $\left(\Lambda^{(1)}_{y,ij} \right)^{-1}$,
as formalized by the following lemma.
\begin{lemma}[Invertibility of $W_y$]
    \label{lemma:InvW}
      Let Assumption~\ref{ass:baseline} hold. Then $\sup_{y \in  \mY}  \| W_y^{-1} \| = O_P(1)$.
\end{lemma}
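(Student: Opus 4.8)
The plan is to bound the smallest eigenvalue of the symmetric positive semidefinite matrix $W_y$ away from zero, uniformly over $y \in \mY$, and then read off the bound on $\|W_y^{-1}\|$. The argument is essentially deterministic once two uniform scalar bounds are in place, so the uniformity in $y$ comes for free.

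First I would record the scalar bounds that drive everything. By Assumption~\ref{ass:baseline}(iii) the index $\pi^0_{y,ij} = x_{ij}'\beta(y) + \alpha(v_i,y) + \gamma(w_j,y)$ is uniformly bounded over $y \in \mY$ and $(i,j) \in \mD$: the covariates lie in the compact set $\mX\mV\mW$ and the fixed effect functions are uniformly bounded, so the conditional probabilities $\Lambda(\pi^0_{y,ij})$ stay bounded away from $0$ and $1$. Since $\Lambda^{(1)}(\pi) = \Lambda(\pi)[1-\Lambda(\pi)]$, this yields non-random constants $0 < c_\Lambda \le C_\Lambda < \infty$, independent of the sample size, with $c_\Lambda \le \Lambda^{(1)}_{y,ij} \le C_\Lambda$ for all $y \in \mY$ and $(i,j) \in \mD$.

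Next I would reduce $\delta' W_y \delta$ to a weighted projection. For a unit vector $\delta \in \mathbb{R}^{d_x}$, linearity of the weighted least squares projection in~\eqref{DefProjectX} (the weights $\Lambda^{(1)}_{ij}(y)$ are common to all coordinates $\ell$) shows that $\delta'\widetilde x_{y,ij} = x_{ij}'\delta - a_i^\delta - c_j^\delta$ is the residual of the weighted projection of $x_{ij}'\delta$ onto the two-way effects, with $a_i^\delta = \delta'\alpha_{x,i}(y)$ and $c_j^\delta = \delta'\gamma_{x,j}(y)$. By the defining optimality of that projection,
\begin{align*}
  \delta' W_y \delta
  = \frac{1}{n}\sum_{(i,j)\in\mD} \Lambda^{(1)}_{y,ij}\,(\delta'\widetilde x_{y,ij})^2
  = \min_{(a,c)\in\mathbb{R}^{I+J}} \frac{1}{n}\sum_{(i,j)\in\mD} \Lambda^{(1)}_{y,ij}\,(x_{ij}'\delta - a_i - c_j)^2 .
\end{align*}
Evaluating the right-hand side at its minimizer $(a^*,c^*)$, then using $\Lambda^{(1)}_{y,ij}\ge c_\Lambda$ together with the unweighted lower bound of Assumption~\ref{ass:baseline}(vi), gives
\begin{align*}
  \delta'W_y\delta
  \ge c_\Lambda\,\frac{1}{n}\sum_{(i,j)\in\mD}(x_{ij}'\delta - a_i^* - c_j^*)^2
  \ge c_\Lambda \min_{(a,b)\in\mathbb{R}^{I+J}} \frac{1}{n}\sum_{(i,j)\in\mD}(x_{ij}'\delta - a_i - b_j)^2
  \ge c_\Lambda\, c_3 ,
\end{align*}
where the second inequality holds because the unweighted minimum cannot exceed its value at $(a^*,c^*)$.

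Since the bound $\lambda_{\min}(W_y)\ge c_\Lambda c_3$ is uniform in $y \in \mY$ and in $\|\delta\|=1$, I would conclude that $\sup_{y\in\mY}\|W_y^{-1}\|$ is bounded by a non-random constant depending only on $d_x$, $c_\Lambda$, and $c_3$; conditionally on $\{(x_{ij},v_i,w_j):(i,j)\in\mD\}$, on which all statements are conditioned, this is in particular $O_P(1)$, which proves the claim. The only delicate point is the very first step, namely the uniform lower bound $\Lambda^{(1)}_{y,ij}\ge c_\Lambda$, i.e.\ that the single index $\pi^0_{y,ij}$ is bounded uniformly over the continuum $y\in\mY$; this hinges on the uniform boundedness of $y\mapsto\beta(y)$ on $\mY$ supplied by the compactness and density conditions in Assumption~\ref{ass:baseline}(iii)--(iv). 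Once that is in hand, the eigenvalue bound is purely algebraic and requires no probabilistic control.
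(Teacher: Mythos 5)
Your proposal is correct and follows essentially the same route as the paper's proof: both bound $\lambda_{\min}(W_y)$ below uniformly in $y$ by writing $\delta' W_y \delta$ as the weighted two-way projection minimum, replacing the weights $\Lambda^{(1)}_{y,ij}$ by their uniform lower bound (which holds by compactness), and then invoking the non-collinearity constant $c_3$ from Assumption~\ref{ass:baseline}(vi). The only difference is presentational — you spell out the linearity of the projection and the origin of the uniform bound on $\Lambda^{(1)}_{y,ij}$ in slightly more detail than the paper does.
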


Next, define $w^{(2)}_{ij} =  e_{i,I} $ and $w^{(3)}_{ij} = e_{j,J}$, and let $w^{(2)}$ and $w^{(3)}$ be the corresponding
$n \times I$ and $n \times J$ matrices with rows given by $w^{(2)'}_{ij}$ and $w^{(3)'}_{ij}$, respectively.
Let
\begin{align*}
    Q^{(1)}_y &:=  n^{-1} \, \left( \Lambda^{(1)}_{y} \right)^{1/2} \widetilde x_y  \, W_y^{-1}  \, \widetilde x_y'  \left( \Lambda^{(1)}_{y} \right)^{1/2} ,
    \\
        Q^{({\rm FE})}_y  &:= \left( \Lambda^{(1)}_{y} \right)^{1/2} 
    \left[w^{(2)},w^{(3)} \right]  \left( \left[w^{(2)},w^{(3)} \right]' \Lambda^{(1)}_{y} \left[w^{(2)},w^{(3)} \right] \right)^\dagger \left[w^{(2)},w^{(3)} \right]'  \left( \Lambda^{(1)}_{y} \right)^{1/2} .
\end{align*}
$\widetilde x_{y,ij}$ is defined as the part of $ x_{y,ij}$ that is orthogonal to the fixed effects under a metric given by
$\Lambda^{(1)}_{y,ij} $. We have $ Q^{({\rm FE})}_y  \left( \Lambda^{(1)}_{y} \right)^{1/2} \widetilde x_y = 0$,
which implies that 
\begin{align}
   Q_y = Q^{(1)}_y + Q^{({\rm FE})}_y
   \label{QisQ1QFE}
\end{align}
and
also
$Q^{(1)}_y  Q^{({\rm FE})}_y = Q^{({\rm FE})}_y Q^{(1)}_y = 0$.
Also, because $Q^{(1)}_y   \left( \Lambda^{(1)}_{y} \right)^{1/2}   \widetilde x_{y}
      =  \left( \Lambda^{(1)}_{y} \right)^{1/2}   \widetilde x_{y}$ and also $Q^{({\rm FE})}_y   \left( \Lambda^{(1)}_{y} \right)^{1/2}   \widetilde x_{y}
      = 0$,
we obtain      
\begin{align}
   Q_y   \left( \Lambda^{(1)}_{y} \right)^{1/2}   \widetilde x_{y} = \left(Q^{(1)}_y  + Q^{({\rm FE})}_y\right)   \left( \Lambda^{(1)}_{y} \right)^{1/2}   \widetilde x_{y}
      =  \left( \Lambda^{(1)}_{y} \right)^{1/2}   \widetilde x_{y} .
   \label{QtildeX}   
\end{align}
We have thus decomposed $Q_y$ into the component stemming
from the regressors and a component stemming from the fixed effects. 
For the elements of $ Q^{(1)}_y$,
\begin{align}
     Q^{(1)}_{y,ij,i'j'} &=   n^{-1} \left( \Lambda^{(1)}_{y,ij} \, \Lambda^{(1)}_{y,i'j'} \right)^{1/2}
               \widetilde x'_{y,ij}  \, W_y^{-1}  \, \widetilde x_{y,i'j'} .
     \label{ElementsQ1}          
\end{align}
Next, define the projection matrices
\begin{align*}
      Q^{(2)}_y  &:= \left( \Lambda^{(1)}_{y} \right)^{1/2}  w^{(2)}  \left( w^{(2) \, \prime} \Lambda^{(1)}_{y} w^{(2)} \right)^{-1} w^{(2) \, \prime}  \left( \Lambda^{(1)}_{y} \right)^{1/2} ,
      \\
    Q^{(3)}_y  &:= \left( \Lambda^{(1)}_{y} \right)^{1/2}  w^{(3)}  \left( w^{(3) \, \prime} \Lambda^{(1)}_{y} w^{(3)} \right)^{-1} w^{(3) \, \prime}  \left( \Lambda^{(1)}_{y} \right)^{1/2}  .
\end{align*}
Notice that $w^{(2) \, \prime} \Lambda^{(1)}_{y} w^{(2)}$
and $w^{(3) \, \prime} \Lambda^{(1)}_{y} w^{(3)}$ are simply diagonal $I \times I$ and $J \times J$ matrices 
with diagonal entries $\sum_{j \in \mD_i} \Lambda^{(1)}_{y,ij} $ and $\sum_{i \in \mD_j} \Lambda^{(1)}_{y,ij} $, respectively,
and therefore 
\begin{align}
       Q^{(2)}_{y,ij,i'j'} &= 1(i=i') \frac{ \left( \Lambda^{(1)}_{y,ij} \, \Lambda^{(1)}_{y,i j'} \right)^{1/2}} {\sum_{j'' \in \mD_i} \Lambda^{(1)}_{y,ij''}} ,
&
       Q^{(3)}_{y,ij,i'j'} &= 1(j=j') \frac{ \left( \Lambda^{(1)}_{y,ij} \, \Lambda^{(1)}_{y,i'j} \right)^{1/2}} {\sum_{i'' \in \mD_j} \Lambda^{(1)}_{y,i''j}} .
         \label{ElementsQ23}        
\end{align}
It is not exactly true that $Q^{({\rm FE})}_y$  equals $Q^{(2)}_y + Q^{(3)}_y$, but Lemma~\ref{lemma:PropQ} shows that this is approximately true
in a well-defined sense.


\begin{lemma}[Properties of $Q_y$]
     \label{lemma:PropQ}
       Under Assumption~\ref{ass:baseline},
       \begin{itemize}
       \item[(i)]
       $ Q_y = Q^{(1)}_y + Q^{({\rm FE})}_y $
       and $Q^{({\rm FE})}_y =   Q_y^{(2)} +  Q_y^{(3)} + Q_y^{({\rm rem})},$
            where
           $$   \sup_{y \in  \mY}  \max_{(i,j) \in \mD}  \max_{(i',j') \in \mD}   \left| Q_{y,ij,i'j'}^{({\rm rem})} \right| = O_P(n^{-1}).$$ 
 
       \item[(ii)] $  \sup_{y \in  \mY} \max_{(i,j) \in \mD}  \sum_{(i',j') \in \mD} \left| Q_{y, ij, i'j'} \right|  = O_P(1),$ and \\[3pt]
       $  \sup_{y \in  \mY} \max_{(i,j) \in \mD}  \sum_{(i',j') \in \mD} \left| Q^{({\rm FE})}_{y, ij, i'j'} \right|  = O_P(1).$
       
       \smallskip
       
       \item[(iii)] $  \sup_{y \in  \mY} \max_{(i,j) \in \mD}  \max_{(i',j') \in \mD} \left| Q_{y, ij, i'j'} \right|  = O_P(n^{-1/2}).$
      \end{itemize}  
                 
\end{lemma}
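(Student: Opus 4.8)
The plan is to reduce the three assertions to a small set of uniform (in $y$) magnitude bounds on the building blocks of $Q_y$, and then to read off (ii)--(iii) and the easy half of (i) from the explicit entry formulas \eqref{ElementsQ1} and \eqref{ElementsQ23}, treating the two-way-to-one-way reduction in \eqref{QisQ1QFE} as the sole real difficulty.

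\textbf{Step 1 (uniform control of the ingredients).} First I would record that, under Assumption~\ref{ass:baseline}(iii), the index $\pi^0_{y,ij}=x_{ij}'\beta(y)+\alpha_i(y)+\gamma_j(y)$ is bounded uniformly over $y\in\mY$ and $(i,j)\in\mD$, so that $\Lambda^{(1)}_{y,ij}=\Lambda(\pi^0_{y,ij})[1-\Lambda(\pi^0_{y,ij})]$ lies in a fixed interval $[\underline c,\overline c]$ with $0<\underline c\le\overline c\le 1/4$. Combined with Assumption~\ref{ass:baseline}(v) this gives $\sum_{j\in\mD_i}\Lambda^{(1)}_{y,ij}\asymp J$ and $\sum_{i\in\mD_j}\Lambda^{(1)}_{y,ij}\asymp I$, uniformly; and Assumption~\ref{ass:baseline}(vii) yields $n\asymp IJ$, hence $I\asymp J\asymp\sqrt n$. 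I would also invoke $\sup_{y}\max_{(i,j)}\|\widetilde x_{y,ij}\|=O_P(1)$ and $\sup_y\|W_y^{-1}\|=O_P(1)$ (Lemma~\ref{lemma:InvW}). Because every quantity depends on $y$ only through the uniformly two-sided-bounded weights $\Lambda^{(1)}_{y,ij}$, the $\sup_{y\in\mY}$ in the statement costs nothing beyond the pointwise bounds, which is the main payoff of this step.

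\textbf{Step 2 (regressor and one-way pieces: (ii), (iii)).} From \eqref{ElementsQ1} and Step 1, $|Q^{(1)}_{y,ij,i'j'}|=O_P(n^{-1})$ uniformly, so a row sums to $O_P(1)$. From \eqref{ElementsQ23}, $Q^{(2)}_y$ is supported on $i=i'$ with entries $O(J^{-1})=O(n^{-1/2})$ and row sum $\sum_{j'}(\Lambda^{(1)}_{y,ij}\Lambda^{(1)}_{y,ij'})^{1/2}/\sum_{j''}\Lambda^{(1)}_{y,ij''}=O(1)$ (numerator $\asymp J$, denominator $\asymp J$), and symmetrically $Q^{(3)}_y=O(I^{-1})=O(n^{-1/2})$ per entry with $O(1)$ row sums. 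This already yields (ii) and (iii) for $Q^{(1)}_y+Q^{(2)}_y+Q^{(3)}_y$.

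\textbf{Step 3 (two-way versus one-way projection: (i) and the remainder).} Since $Q_y=Q^{(1)}_y+Q^{({\rm FE})}_y$ is \eqref{QisQ1QFE}, it remains to show $Q^{({\rm FE})}_y=Q^{(2)}_y+Q^{(3)}_y+Q^{({\rm rem})}_y$ with entries $O_P(n^{-1})$. Writing $\widetilde w^{(m)}=(\Lambda^{(1)}_y)^{1/2}w^{(m)}$, I would apply Frisch--Waugh: $Q^{({\rm FE})}_y=Q^{(2)}_y+P_y$, where $P_y=(I-Q^{(2)}_y)\widetilde w^{(3)}\widetilde C^{-1}\widetilde w^{(3)\prime}(I-Q^{(2)}_y)$ and $\widetilde C=C-B'A^{-1}B$ is the Schur complement, with $A=\widetilde w^{(2)\prime}\widetilde w^{(2)}$, $C=\widetilde w^{(3)\prime}\widetilde w^{(3)}$ diagonal (entries $\asymp J$, $\asymp I$) and $B_{ij}=\Lambda^{(1)}_{y,ij}$. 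Note $Q^{(3)}_y=\widetilde w^{(3)}C^{-1}\widetilde w^{(3)\prime}$ exactly, so $P_y-Q^{(3)}_y$ splits into (a) the cross terms carrying a factor $Q^{(2)}_y$, and (b) the term $\widetilde w^{(3)}(\widetilde C^{-1}-C^{-1})\widetilde w^{(3)\prime}$. Each cross term pairs $Q^{(2)}_y$ (common-row support, $O(J^{-1})$ per entry) against $Q^{(3)}_y$ (common-column support, $O(I^{-1})$ per entry), collapsing after the internal sum over a single matched index to $O(I^{-1}J^{-1})=O(n^{-1})$. For (b), the identity $\widetilde C^{-1}-C^{-1}=\widetilde C^{-1}(B'A^{-1}B)C^{-1}$ reduces matters to showing this has entries $O(n^{-1})$; the homogeneous benchmark $\Lambda^{(1)}_{y,ij}\equiv\lambda$ makes the mechanism transparent, where $B'A^{-1}B$ is rank one (proportional to the all-ones matrix), $\widetilde C=I\lambda\,M_{\mathbf 1}$, and one recovers exactly the classical two-way within remainder $Q^{({\rm rem})}_y=-n^{-1}\mathbf 1\mathbf 1'$.

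\textbf{Main obstacle.} The crux is Step 3(b): the coupling block $B'A^{-1}B$ concentrates in the constant direction and is genuinely $O(1)$ per entry, so a naive operator-norm bound on $\widetilde C^{-1}-C^{-1}$ only gives $O(n^{-1/2})$ per entry, one power short. The extra $n^{-1/2}$ must come from the fact that only the restriction of the coupling to the orthogonal complement of the constant enters, and that this restriction is ``diffuse'' (spread over $\asymp n$ cells, so its off-leading part averages down) under the uniform boundedness of $\Lambda^{(1)}_{y,ij}$ away from $0$ and $\infty$; quantifying this perturbation away from the exact rank-one/homogeneous structure, uniformly in $y$, is the technical heart and is where I would expect to lean on the detailed estimates in the supplementary appendix. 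Care is also needed with the common constant absorbed by both $w^{(2)}$ and $w^{(3)}$: this forces the pseudo-inverse and a normalization in defining $\widetilde C^{-1}$, but it does not affect the projector $Q^{({\rm FE})}_y$ itself, and I would handle it by working throughout on the complement of the constant. Once the remainder bound holds, (ii) for $Q^{({\rm FE})}_y$ follows because a row of $Q^{({\rm rem})}_y$ has $O(n)$ entries each $O_P(n^{-1})$, and (iii) follows because $Q^{({\rm rem})}_y=O_P(n^{-1})$ is negligible next to $Q^{(2)}_y,Q^{(3)}_y=O_P(n^{-1/2})$.
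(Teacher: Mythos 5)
Your Steps 1--2 and the easy parts of the argument match the paper: the entrywise formulas \eqref{ElementsQ1} and \eqref{ElementsQ23}, together with $\Lambda^{(1)}_{y,ij}$ bounded above and away from zero uniformly in $y$ (compactness, Assumption~\ref{ass:baseline}(iii)) and $I\asymp J\asymp \sqrt n$, give the $O_P(n^{-1})$, $O_P(n^{-1/2})$, $O_P(n^{-1/2})$ entry bounds and $O_P(1)$ row sums for $Q^{(1)}_y$, $Q^{(2)}_y$, $Q^{(3)}_y$, and parts (ii)--(iii) then follow from part (i) exactly as you say. Your Step 3 route (Frisch--Waugh on $\left(\Lambda^{(1)}_y\right)^{1/2}[w^{(2)},w^{(3)}]$, reducing everything to an entrywise bound on $\widetilde C^{\dagger}-C^{-1}$ for the Schur complement $\widetilde C = C - B'A^{-1}B$) is a legitimate reformulation, and your diagnosis is accurate: the cross terms collapse over a single matched index to $O(n^{-1})$, the whole difficulty sits in the coupling block, and naive operator-norm bounds lose a factor $n^{1/2}$.

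But that is precisely where your proposal stops, and this is a genuine gap: the entrywise $O_P(n^{-1})$ control of the remainder \emph{is} the content of part (i), not a technicality that can be deferred. Moreover, your stated plan for closing it points to the wrong place. The supplementary appendix of this paper contains empirical-process results for score averages (Corollaries on $\partial_\pi\ell_{y,ij}$, the FCLT machinery); it does not contain any estimate of the kind you need. What the paper actually does is work directly with the incidental-parameter Hessian $\mathcal{H}_y = [w^{(2)},w^{(3)}]'\Lambda^{(1)}_y[w^{(2)},w^{(3)}]$, split it as $\mathcal{D}_y + \mathcal{R}_y$ (block-diagonal plus coupling), handle the null direction $v=(1_I,-1_J)$ via $\mathcal{H}_y^\dagger = [\mathcal{H}_y + vv'/(I+J)]^{-1} - vv'/(I+J)$, and then \emph{cite} Lemma D.1 of Fernandez-Val and Weidner (2016) for the key bound $\sup_{y}\|\mathcal{H}_y^\dagger - \mathcal{D}_y^{-1}\|_{\max} = O_P(n^{-1})$, noting that for the logit model the Hessian equals its conditional expectation and that the cited bound depends only on the global constants $b_{\min}, b_{\max}$ bounding $\Lambda^{(1)}_{y,ij}$, so it holds uniformly in $y$ and survives the finitely many missing observations. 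Sandwiching $\mathcal{D}_y^{-1}$ gives exactly $Q^{(2)}_y + Q^{(3)}_y$, so the remainder inherits the $O_P(n^{-1})$ max-norm bound. Your Schur-complement computation is essentially a re-derivation of what that external lemma establishes; either you must import such a result explicitly (as the paper does) or prove the entrywise expansion of $\widetilde C^{\dagger}$ yourself, which is a substantial argument (a Neumann-type expansion exploiting that $\widetilde C$ has diagonal $\asymp I$, off-diagonal $O(1)$, and null vector $\mathbf 1$), not something covered by the estimates available in this paper's supplement.
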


\begin{remark}[Bias of $\widehat \pi_{y,ij}$]
According to part (i) of this lemma the remainder term $Q_y^{({\rm rem})} = Q^{({\rm FE})}_y - Q^{(2)}_y - Q^{(3)}_y$ has elements
uniformly bounded of order $n^{-1}$, and it can easily be seen from \eqref{ElementsQ1} that the same is true for $Q_y^{(1)}$,
because the elements of $\widetilde x_{y}$ are also uniformly bounded under our assumptions.
 By contrast,
$Q_y^{(2)}$ and $Q_y^{(3)}$ have elements of order $J^{-1}$ and $I^{-1}$, respectively, that is, of order $n^{-1/2}$.
Using this and the fact that $s_{y,ij}$ has variance one and is independent across observations $(i,j)$
we find
\begin{align}
     \Ep \left[ \left( Q_y s_y  \right)_{ij} \right]^2
     &=   \sum_{(i',j') \in \mD}   [Q_{y,ij,i'j'}]^2
       = Q_{y,ij,ij}
    =   Q^{(2)}_{y,ij,ij} + Q^{(3)}_{y,ij,ij} + O_P(n^{-1})
  \nonumber  \\
    &=  \frac{\Lambda^{(1)}_{y,ij} }  {\sum_{j' \in \mD_i} \Lambda^{(1)}_{y,ij'} }
    + \frac{ \Lambda^{(1)}_{y,ij}} {  \sum_{i' \in \mD_j} \Lambda^{(1)}_{y,i'j}  } + O_P(n^{-1}) ,
  \label{EQsy}  
\end{align}
where we use that $Q_y$ is idempotent in the second step, and \eqref{ElementsQ23} in the third step.
Combining this with Lemma~\ref{lemma:ScoreExpansionPi} one finds that the leading order bias term
in $\widehat \pi_{y,ij} - \pi^0_{y,ij} $ is given by
\begin{align*}
      - \frac 1 2 
         \sum_{(i',j') \in \mD} 
                \,   Q_{y, ij, i'j'} \,
     \frac{ \Lambda^{(2)}_{y,i'j'} } {  \Lambda^{(1)}_{y,i'j'}  }
        \left[   \frac{1}  {\sum_{j' \in \mD_i} \Lambda^{(1)}_{y,ij'} }
    + \frac{1} {  \sum_{i' \in \mD_j} \Lambda^{(1)}_{y,i'j}  }      
        \right] ,
\end{align*}
which then translates into corresponding bias terms for all other estimators as well.
\end{remark}

For the following lemma, let $Z^{(\beta)}_y = Z^{(\beta)}(y)$, $Z^{(F)}_y = Z^{(F)}(y) $,
     $B^{(\beta)}_y=B^{(\beta)}(y)$, $D^{(\beta)}_y =D^{(\beta)}(y) $,
     $B^{(\Lambda)}_{y,k} =B^{(\Lambda)}_k(y) $ and
      $D^{(\Lambda)}_{y,k}=D^{(\Lambda)}_k(y)$ be as defined in and before
      Theorem~\ref{th:expansion} in the main text.

\begin{lemma}[Properties of score averages]
     \label{lemma:ScoreAve}
       Under Assumption~\ref{ass:baseline},
         \begin{itemize}
          \item[(i)] $\sup_{y \in  \mY} \max_{(i,j) \in \mD}  \left| \left( Q_y s_y  \right)_{ij}  \right| = o_P(n^{-1/6})$.
          
          \smallskip
          
          \item[(ii)] $-   W_y^{-1} \, n^{-1/2} \,  \sum_{(i,j) \in \mD}  \, \widetilde x_{y,ij} \, \partial_\pi \ell_{y,ij}  \rightsquigarrow  Z^{(\beta)}_y$,
          in $\ell^{\infty}(\mY)^{d_x}$.
          
                    \smallskip
          
          \item[(iii)] $- \frac 1 {\sqrt{n}} \sum_{(i,j) \in \mD}  \left[ \Psi_{y,ij} + (\partial_\beta F_{y}) \, W_y^{-1} \,  \widetilde x_{y,ij} \right]  \partial_\pi \ell_{y,ij}   \rightsquigarrow  Z^{(F)}_y$, in   $\ell^{\infty}(\mY)^{|\mK|}$.

                    \smallskip
                    
          \item[(iv)]  $ - \frac 1 2 
        W_y^{-1}  \; \frac 1 {\sqrt{n}}      \sum_{(i,j) \in \mD}   \; 
           \widetilde x_{y,ij}          
           \left( \Lambda^{(1)}_{y,ij} \right)^{-1}
      \Lambda^{(2)}_{y,ij} 
        \left[ \left( Q_y s_y  \right)_{ij} \right]^2
       -  \left( \frac{ I } {\sqrt{n}}    B^{(\beta)}_y
                   +  \frac{ J} {\sqrt{n}}      D^{(\beta)}_y \right)  \rightarrow_P 0
        $,
        uniformly in $y \in \mY$.
        
                  \smallskip
        
      \item[(v)]  $   \frac 1 {2 \sqrt{n}}  \sum_{(i,j) \in \mD}  
         \left( \Lambda^{(1)}_{y,ij} \right)^{-1}
         \left( \Lambda^{(2)}_{y,ij,k} - \Lambda^{(2)}_{y,ij} \Psi_{y,ij,k} \right) 
         \left[ \left( Q_y s_y  \right)_{ij}  \right]^2   
         - \left(  \frac{ I } {\sqrt{n}}     B^{(\Lambda)}_{y,k}
                   +  \frac{ J} {\sqrt{n}}     D^{(\Lambda)}_{y,k} \right)
                   \rightarrow_P 0
        $,
        uniformly in $y \in \mY$.
     \end{itemize}
\end{lemma}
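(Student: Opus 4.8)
The plan is to treat the five claims as three separate tasks: a uniform maximal inequality (part (i)), two functional central limit theorems for the \emph{linear} score terms (parts (ii) and (iii)), and two laws of large numbers that isolate the incidental-parameter bias from the \emph{quadratic} score terms (parts (iv) and (v)). All five exploit the representations $\partial_\pi \ell_{y,ij} = \Lambda_{ij}(y) - 1\{y_{ij}\le y\}$ and $s_{y,ij} = (\Lambda^{(1)}_{y,ij})^{-1/2}\partial_\pi \ell_{y,ij}$, the conditional independence of $\{\partial_\pi\ell_{y,ij}\}$ across $(i,j)$ from Assumption~\ref{ass:baseline}(i), and the bounds on the projector elements furnished by Lemma~\ref{lemma:PropQ} together with $\sup_y \|W_y^{-1}\| = O_P(1)$ from Lemma~\ref{lemma:InvW}.

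For part (i) I would write $(Q_y s_y)_{ij} = \sum_{(i',j')\in\mD} Q_{y,ij,i'j'} s_{y,i'j'}$ as a weighted sum of bounded, independent, mean-zero variables whose conditional variance equals $Q_{y,ij,ij} = O_P(n^{-1/2})$ by idempotency and \eqref{EQsy}, and whose weights are uniformly $O_P(n^{-1/2})$ by Lemma~\ref{lemma:PropQ}(iii). A Bernstein inequality bounds the tail of each $(Q_y s_y)_{ij}$, so a union bound over the $n$ indices $(i,j)$ gives $\max_{(i,j)} |(Q_y s_y)_{ij}| = O_P(n^{-1/4}\sqrt{\log n})$ for fixed $y$. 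To make this uniform in $y$ I would combine the pointwise tail bound with a covering argument over $\mY$: since $y\mapsto 1\{y_{ij}\le y\}$ is monotone and the remaining ingredients $\Lambda^{(q)}_{y,ij}$, $\widetilde x_{y,ij}$, $W_y^{-1}$ are Lipschitz in $y$ on the compact support, the covering number is polynomial, contributing only logarithmic factors and leaving the rate comfortably inside $o_P(n^{-1/6})$.

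For parts (ii) and (iii) I would establish weak convergence in $\ell^{\infty}(\mY)$ by the usual two steps. First, finite-dimensional convergence follows from the Lindeberg--Feller CLT for the independent (but non-identically distributed) array $\{\widetilde x_{y,ij}\,\partial_\pi\ell_{y,ij}\}$, respectively $\{[\Psi_{y,ij} + (\partial_\beta F_y)W_y^{-1}\widetilde x_{y,ij}]\,\partial_\pi\ell_{y,ij}\}$; the Lindeberg condition holds because the summands are uniformly bounded and there are $n$ of them, and the limiting covariances are identified with $\overline V$ and $\overline\Omega$ of Theorem~\ref{th:expansion} through the cross-moment $\Ep[\partial_\pi\ell_{y_1,ij}\partial_\pi\ell_{y_2,ij}]$. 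Second, asymptotic tightness follows because $\{1\{\cdot\le y\}: y\in\mY\}$ is a VC (hence Donsker) class and the bounded, $y$-Lipschitz multipliers preserve the Donsker property; stochastic equicontinuity then yields tightness, and premultiplying by the uniformly consistent, equicontinuous factor $W_y^{-1}$ preserves weak convergence.

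For parts (iv) and (v) the device is to split $[(Q_y s_y)_{ij}]^2 = Q_{y,ij,ij} + \{[(Q_y s_y)_{ij}]^2 - Q_{y,ij,ij}\}$. Inserting the conditional-mean piece $Q_{y,ij,ij}$ and using Lemma~\ref{lemma:PropQ}(i) to replace it by $Q^{(2)}_{y,ij,ij} + Q^{(3)}_{y,ij,ij} + O_P(n^{-1})$, the explicit forms \eqref{ElementsQ23} reproduce exactly the bias terms $\tfrac{I}{\sqrt n}B^{(\beta)}_y + \tfrac{J}{\sqrt n}D^{(\beta)}_y$ (resp.\ $\tfrac{I}{\sqrt n}B^{(\Lambda)}_{y,k}+\tfrac{J}{\sqrt n}D^{(\Lambda)}_{y,k}$) after routine algebra, with the $O_P(n^{-1})$ elementwise remainder contributing only $n^{-1/2}\cdot n\cdot O_P(n^{-1}) = o_P(1)$. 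It then remains to show the centered fluctuation $n^{-1/2}\sum_{(i,j)} a_{y,ij}\{[(Q_y s_y)_{ij}]^2 - Q_{y,ij,ij}\}$ is $o_P(1)$ uniformly in $y$, where $a_{y,ij}$ is the bounded smooth weight; I would expand the quadratic form into off-diagonal products $s_{y,i'j'}s_{y,i''j''}$ and use the $Q$-element bounds (maximal entry $O_P(n^{-1/2})$, row sums $O_P(1)$) to show the variance is $o(1)$, upgrading to uniformity via the same covering argument as above. The main obstacle I anticipate lies here: unlike the linear terms, the summands $[(Q_y s_y)_{ij}]^2$ are dependent across $(i,j)$ through the off-diagonal entries of $Q_y$, so the second-moment bound must be organized carefully (essentially as a degenerate $U$-statistic in the scores), and the cancellations that downgrade the naive $O(1)$ bound to $o(1)$ hinge precisely on the simultaneous smallness of individual $Q$-entries and boundedness of their row sums; keeping this uniform over $y$ is where the bulk of the technical work, deferred to the supplementary appendix, will reside.
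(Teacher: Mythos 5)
Your five-part strategy matches the paper's in outline, and two of your arguments are essentially the paper's own: for (ii)--(iii) the paper likewise combines finite-dimensional CLTs with an empirical-process tightness argument, except that it verifies the conditions of van der Vaart--Wellner Theorem 2.11.11 (restated as Lemma~\ref{lemma:Th21111}) under the Gaussian-dominated semimetric $\rho(f_1,f_2)=C[|y_1-y_2|^{1/2}+1(k_1\neq k_2)]$ rather than appealing to Donsker-preservation facts. That precaution matters: conditional on covariates and effects the array is independent but \emph{not} identically distributed, and the preservation results you invoke are i.i.d.\ statements; this is precisely why the paper restates the relevant theorems for the i.n.i.d.\ case. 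For part (i) you take a genuinely different, more elementary route: since $Q_y$ is deterministic conditional on covariates, Bernstein plus a union bound over the $n$ pairs and a polynomial grid in $y$ is viable and even gives a sharper rate ($n^{-1/4}$ up to logs), provided you also establish Lipschitz-in-$y$ control of the projector entries (this needs $\partial_y \pi^0_{y,ij}$ bounded, which does follow from Assumption~\ref{ass:baseline}(iv)) and control the indicator flips between grid points. The paper instead decomposes $Q_y=Q^{(1)}_y+Q^{(2)}_y+Q^{(3)}_y+Q^{({\rm rem})}_y$ (Lemma~\ref{lemma:PropQ}), expresses the middle two pieces through within-row and within-column score averages, and bounds $\sup_{y}\max_i \bigl| |\mD_i|^{-1/2}\sum_{j\in\mD_i}\partial_\pi\ell_{y,ij}\bigr|$ via $L^p$ maximal inequalities with $p>6$ (Corollaries~\ref{cor:ScoreFE} and~\ref{cor:Score2}). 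Your bias identification in (iv)--(v) --- centering at $\Ep[(Q_ys_y)_{ij}]^2=Q_{y,ij,ij}$, replacing the diagonal by $Q^{(2)}_{y,ij,ij}+Q^{(3)}_{y,ij,ij}+O_P(n^{-1})$, and reading off the bias terms from \eqref{ElementsQ23} --- is exactly the paper's computation in \eqref{EQsy}.

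The genuine gap is your uniformity step for the centered quadratic fluctuation in (iv)--(v). Pointwise in $y$ your variance claim is correct: writing the fluctuation as a degenerate second-order form in the independent bounded scores, the bounds $\max|Q_{y,ij,i'j'}|=O_P(n^{-1/2})$ and row sums $O_P(1)$ (equivalently, $\tr Q_y=\rank Q_y=O(n^{1/2})$) give variance $O(n^{-1/2})$. But "the same covering argument as above" does not transfer. To make the between-grid oscillation $o_P(1)$ you need a grid of roughly $n^{5/6}$ points (over an interval of length $\delta$ the oscillation is of order $n^{5/6}\delta$, after using part (i) for the factor $(Q_ys_y)_{ij}+(Q_{y'}s_{y'})_{ij}$), and a Chebyshev tail $P(|\cdot|>\epsilon)=O(n^{-1/2})$ cannot be union-bounded over that many points: $n^{5/6}\cdot n^{-1/2}\to\infty$. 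Your part (i) covering worked because Bernstein supplies exponential tails for statistics \emph{linear} in the scores; the quadratic form has no such tail from a variance bound alone. You must either add a higher-moment or Hanson--Wright/hypercontractivity inequality for order-two chaos (a fourth-moment bound $\Ep V^4\lesssim(\Ep V^2)^2=O(n^{-1})$ already beats an $n^{5/6}$-point grid), or follow the paper's alternative: first replace $Q_ys_y$ by $Q^{(2)}_ys_y+Q^{(3)}_ys_y$ (legitimate given the rates behind part (i)), so that the squared terms become functions of row and column score averages; uniformity then follows from the weighted uniform LLN of Corollary~\ref{cor:Score2}(ii), and the cross term is killed by the three-factor product bound $O_P(n^{-1/4})\, o_P(n^{1/12})\, O_P(n^{1/6})=o_P(1)$. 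As written, your proposal is missing the ingredient that makes the supremum over $y$ go through at its most delicate point.
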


Regarding part (i) of this lemma, 
notice that pointwise we have $ \left( Q_y s_y  \right)_{ij} = O_P(n^{-1/4})$,
because \eqref{EQsy} implies that $ \Ep \left[ \left( Q_y s_y  \right)_{ij} \right]^2 = O_P(n^{-1/2})$.
However, after taking the supremum over $y$, $i$, $j$ the term is growing faster than $n^{-1/4}$.
The rate $o_P(n^{-1/6})$ in part (i) of the lemma is crude, but sufficient for our purposes.

\begin{lemma}[Uniform Consistency of Estimators of Bias and Variance Components]
    \label{lemma:bias_estimators}
    Let Assumption~\ref{ass:baseline} hold.
    Then,
    \begin{align*}
        \sup_{y \in  \mY} \left\|  \widehat W(y) - \overline W(y) \right\| &= o_P(1),
     &
          \sup_{y \in \mY} \left\|  \partial_{\beta} \widehat F(y)  - \partial_{\beta} F(y)  \right\| &= o_P(1),
\\
        \sup_{y \in \mY} \left\|  \widehat B^{(\beta)}(y) - B^{(\beta)}(y) \right\| &= o_P(1),
        &
         \sup_{y \in  \mY} \left\|  \widehat D^{(\beta)}(y) - D^{(\beta)}(y) \right\| &= o_P(1),
\\
        \sup_{y \in \mY}  \left\|  \widehat B^{(\Lambda)}(y) - B^{(\Lambda)}(y)  \right\| &= o_P(1),
        &
         \sup_{y \in \mY}  \left\|  \widehat D^{(\Lambda)}(y) - D^{(\Lambda)}(y)  \right\| &= o_P(1),
\\
        \sup_{y \in  \mY} \left\|   \widehat  \Omega(y) - \overline \Omega(y) \right\| &= o_P(1),
    \end{align*}
  where $\|\cdot\|$ denotes the Frobenius matrix norm, i.e.  $\|A\| = \text{trace}(AA')^{1/2}$ for a matrix $A$.
\end{lemma}

As already mentioned above, the proof of the technical lemmas that we have stated here is provided in the Supplementary Appendix.

\subsection{Proof of Main Text Theorems}\label{app:pthms}

\begin{proof}[\bf Proof of Theorem~\ref{th:expansion}]
   \underline{\#  Part 1: FCLT for $\widehat \beta_y =\widehat \beta(y) $.} \\
   The definition of $\widetilde x_y$ implies that
     $\sum_{i \in \mD_j}  \Lambda^{(1)}_{y,ij} \widetilde x_{y,ij}   = 0$
     and $\sum_{j \in \mD_i}    \Lambda^{(1)}_{y,ij} \widetilde x_{y,ij} = 0$,
     and $n^{-1} \sum_{(i,j) \in \mD}  \allowbreak   \Lambda^{(1)}_{y,ij}  \widetilde x_{y,ij}  x_{ij}' 
     = n^{-1} \sum_{(i,j) \in \mD}  \Lambda^{(1)}_{y,ij}   \widetilde x_{y,ij}      \widetilde  x_{ij}'  = W_y$.
     Using this and 
     \begin{align*}
            \widehat \pi_{y,ij} - \pi^0_{y,ij} := x_{ij}'  \left(  \widehat \beta_y - \beta^0_y \right) 
            + \left( \widehat \alpha_{y,i} -  \alpha^0_{y,i} \right) 
             +  \left( \widehat \gamma_{y,j} - \gamma^0_{y,j} \right) 
     \end{align*}
     we obtain
     \begin{align*}
         n^{-1} \sum_{(i,j) \in \mD} \; \widetilde x_{y,ij} \,   \Lambda^{(1)}_{y,ij}     \, \left(  \widehat \pi_{y,ij} - \pi^0_{y,ij} \right)
         &=   n^{-1} \sum_{(i,j) \in \mD} \; \widetilde x_{y,ij} \,   \Lambda^{(1)}_{y,ij} \,  x_{ij}'  \left(  \widehat \beta_y - \beta^0_y \right) 
           =   W_y \left(  \widehat \beta_y - \beta^0_y \right)  ,
     \end{align*}
     and therefore     
     \begin{align*}
               \widehat \beta_y - \beta^0_y 
               &=  W_y^{-1} \;  n^{-1} \sum_{(i,j) \in \mD} \; \widetilde x_{y,ij} \,   \Lambda^{(1)}_{y,ij}     \, \left(  \widehat \pi_{y,ij} - \pi^0_{y,ij} \right) .
     \end{align*}
     By combining this with Lemma~\ref{lemma:ScoreExpansionPi}  we obtain
    \begin{align}
        \sqrt{n}\left( \widehat \beta_y - \beta^0_y \right)
        &=   T^{(1,\beta)}_{y} + T^{(2,\beta)}_{y} + r^{(\beta)}_{y} ,
        \label{ExpansionBeta}
     \end{align}
     where
     \begin{align*}
          T^{(1,\beta)}_{y} &:= 
          - n^{-1/2} \; W_y^{-1} \; \sum_{(i,j) \in \mD}   \left( \Lambda^{(1)}_{y,ij} \right)^{1/2} \;  \widetilde x_{y,ij}  \,   \left( Q_y s_y  \right)_{ij} ,
          \\
          T^{(2,\beta)}_{y} &:= 
           - \frac 1 2 
          n^{-1/2} \; W_y^{-1} \; \sum_{(i,j) \in \mD}    \left( \Lambda^{(1)}_{y,ij} \right)^{1/2} \;  \widetilde x_{y,ij}  \, 
          \sum_{(i',j') \in \mD}    \,   Q_{y, ij, i'j'} \,
     \frac{ \Lambda^{(2)}_{y,i'j'} } { \left(  \Lambda^{(1)}_{y,i'j'} \right)^{3/2} }     
        \left[ \left( Q_y s_y  \right)_{i'j'} \right]^2  ,
     \end{align*}
     and  $r^{(\beta)}_{y} :=  W_y^{-1} \, n^{-1/2}  \sum_{(i,j) \in \mD}  \widetilde  x_{y,ij} \,   \left(\Lambda^{(1)}_{y,ij} \right)^{1/2} \,    r_{y,ij}$
     satisfies
     \begin{align*}
               \sup_{y \in  \mY}   \left| r^{(\beta)}_{y} \right| 
             &\leq   
             \underbrace{   \left( \sup_{y \in  \mY} \, W_y^{-1}
               \, n^{-1/2} \, \sum_{(i,j) \in \mD}  \left| \widetilde  x_{y,ij} \right| \left|  \left(\Lambda^{(1)}_{y,ij} \right)^{1/2}  \right| \right)
               }_{= O_P(n^{1/2})}
             \underbrace{  \left( \sup_{y \in  \mY}  \max_{(i,j) \in \mD}   \left| r_{y,ij} \right| \right) }_{= o_P(n^{-1/2})}
               = o_P(1) ,
     \end{align*}
     where we also use that $ \Lambda^{(1)}_{y,ij}$ and $\widetilde  x_{y,ij}$ are uniformly bounded under our assumptions.
     For the term linear in the score we find
     \begin{align*}
         T^{(1,\beta)}_{y}
         &=  - n^{-1/2}
         W_y^{-1} \, \widetilde x'_{y} \,
           \left( \Lambda^{(1)}_{y} \right)^{1/2}   Q_y s_y
         = - n^{-1/2}
         W_y^{-1} \, \widetilde x'_{y} \,  \left( \Lambda^{(1)}_{y} \right)^{1/2} \, s_y 
       \\
          &=  - 
         W_y^{-1} \, n^{-1/2} \,  \sum_{(i,j) \in \mD}  \, \widetilde x_{y,ij} \, \partial_\pi \ell_{y,ij} 
         \rightsquigarrow  Z^{(\beta)}_y  ,
     \end{align*}
     where in the second step
      we used \eqref{QtildeX},
      and the final step follows from  part $(ii)$ of Lemma~\ref{lemma:ScoreAve}.

     Employing again \eqref{QtildeX}
      we  find 
      \begin{align*}
            T^{(2,\beta)}_{y} &:= 
           - \frac 1 2 \; W_y^{-1} \,
          n^{-1/2}     \sum_{(i',j') \in \mD}    \,  
          \underbrace{ \sum_{(i,j) \in \mD}    \left( \Lambda^{(1)}_{y,ij} \right)^{1/2} \; \widetilde x_{y,ij}  \, 
         Q_{y, ij, i'j'} }_{= \left( \Lambda^{(1)}_{y,i'j'} \right)^{1/2} \; \widetilde x_{y,i'j'}}
     \frac{ \Lambda^{(2)}_{y,i'j'} } { \left(  \Lambda^{(1)}_{y,i'j'} \right)^{3/2} }     
        \left[ \left( Q_y s_y  \right)_{i'j'} \right]^2
     \\   
           &= 
             - \frac 1 2 
        W_y^{-1}  \;   n^{-1/2}     \sum_{(i,j) \in \mD}   \; 
           \widetilde x_{y,ij}          
     \frac{ \Lambda^{(2)}_{y,ij} } {  \Lambda^{(1)}_{y,ij}  }     
        \left[ \left( Q_y s_y  \right)_{ij} \right]^2 ,
      \end{align*}
      and according to part $(iv)$ of Lemma~\ref{lemma:ScoreAve} we thus have
      \begin{align*}
           T^{(2,\beta)}_{y}
           - \left(  \frac{I} {n^{1/2}}  \,  B^{(\beta)}_y
                   +  \frac{J} {n^{1/2}}    \,  D^{(\beta)}_y \right)      
        & \rightarrow_P   0,
     \end{align*}
     uniformly in $y \in \mY$.
     Combining the above gives the result for  $\sqrt{n}\left( \widehat \beta_y - \beta^0_y \right)$ in the theorem.
  
 \smallskip
\noindent  
\underline{ \#  Part 2: FCLT for $\widehat F_{y,k} =\widehat F_{k}(y) $.}  \\
 Let
 $ \pi_{y,ij,k}^0 :=   \pi_{y,ij}^0
       + (\mathbbm{x}_{ij,k} - x_{ij})'  \beta_y^0 $
and 
       $\widehat \pi_{y,ij,k} := \widehat \pi_{y,ij}
       + (\mathbbm{x}_{ij,k} - x_{ij})' \widehat \beta_y $.
   Because $\mathbbm{x}_{ij,k} - x_{ij} = \widetilde{\mathbbm{x}}_{y,ij,k} - \widetilde x_{y,ij}$
   we have
    \begin{align}
         \widehat \pi_{y,ij,k} -  \pi_{y,ij,k}^0  
         &= \widehat \pi_{y,ij} -  \pi_{y,ij}^0
       + (\widetilde{\mathbbm{x}}_{y,ij,k} - \widetilde x_{y,ij})' ( \widehat \beta_y -\beta_y^0 ) .
       \label{ExpansionPiKbasic}
    \end{align}
    Using \eqref{QpiProjection} and $Q^{(1)}_y \left( \Lambda^{(1)}_{y} \right)^{1/2} \pi_y = \left( \Lambda^{(1)}_{y} \right)^{1/2} \widetilde x_y \beta_y$ for any $\pi_y = w \theta_y$,
    \begin{align*}
         \widehat \pi_y  - \pi^0_y 
    &=  \left( \Lambda^{(1)}_{y} \right)^{-1/2} 
    \underbrace{ \left(Q^{(1)}_y+Q^{(\rm FE)}_y \right) }_{=Q_y} \left( \Lambda^{(1)}_{y} \right)^{1/2} ( \widehat \pi_y  - \pi^0_y) 
    \\
     &= \left( \Lambda^{(1)}_{y} \right)^{-1/2} 
    Q^{(\rm FE)}_y  \left( \Lambda^{(1)}_{y} \right)^{1/2} ( \widehat \pi_y  - \pi^0_y) 
    + \widetilde x_{y}  ( \widehat \beta_y -\beta_y^0 ) .
    \end{align*}
    Combining the above gives
    \begin{align*}
         \widehat \pi_{y,ij,k} -  \pi_{y,ij,k}^0  
         &= \left[ \left( \Lambda^{(1)}_{y} \right)^{-1/2} 
    Q^{(\rm FE)}_y  \left( \Lambda^{(1)}_{y} \right)^{1/2} ( \widehat \pi_y  - \pi^0_y) 
    \right]_{ij}
       +  \widetilde{\mathbbm{x}}_{y,ij,k}' ( \widehat \beta_y -\beta_y^0 ) .
    \end{align*}
    Using     Lemma~\ref{lemma:ScoreExpansionPi} and the properties of $Q_y$, $Q^{(1)}_y$ and $Q^{(\rm FE)}_y$,  we thus find    
    \begin{align}
         \left( \Lambda^{(1)}_{y,ij} \right)^{1/2} 
         \left(   \widehat \pi_{y,ij,k} -  \pi_{y,ij,k}^0   \right)
         &= - \left(  Q^{(\rm FE)}_y  s_y \right)_{ij}
       - \frac 1 2 
         \sum_{(i',j') \in \mD} 
                \,   Q^{(\rm FE)}_{y, ij, i'j'} \,
     \frac{ \Lambda^{(2)}_{y,i'j'} } { \left(  \Lambda^{(1)}_{y,i'j'} \right)^{3/2} }
        \left[ \left( Q_y s_y  \right)_{i'j'} \right]^2
      \nonumber  \\ & \qquad
        + \left( Q^{(\rm FE)} r_{y} \right)_{ij}
       +   \left( \Lambda^{(1)}_{y,ij} \right)^{1/2}   \widetilde{\mathbbm{x}}_{y,ij,k}' ( \widehat \beta_y -\beta_y^0 ) .
       \label{ExpansionPiK}
    \end{align}       
    Next, by  expanding          $\Lambda(\widehat \pi_{y,ij,k})$ in $\widehat \pi_{y,ij,k}$ around $ \pi_{y,ij,k}^0$ we find
  \begin{align*}
      \widehat F_{y,k} -  F_{y,k}  &= n^{-1} \sum_{(i,j) \in \mD}  \left[ \Lambda(\widehat \pi_{y,ij,k}) - \Lambda(\pi^0_{y,ij,k}) \right]
        \\
        & = 
        n^{-1} \sum_{(i,j) \in \mD}  \bigg[
        \Lambda^{(1)}_{y,ij,k} \left( \widehat \pi_{y,ij,k} - \pi^0_{y,ij,k} \right)
           + \frac 1 2 \Lambda^{(2)}_{y,ij,k} \left( \widehat \pi_{y,ij,k} - \pi^0_{y,ij,k} \right)^2
        \\ & \qquad \qquad \qquad \qquad    \qquad \qquad \qquad  
           + \frac 1 6  \Lambda^{(3)}(\widetilde \pi_{y,ij,k}) \left( \widehat \pi_{y,ij,k} - \pi^0_{y,ij,k} \right)^3 \bigg],
  \end{align*}
   where
   $\widetilde \pi_{y,ij,k}$ is some value between $\widehat \pi_{y,ij,k}$ and $ \pi^0_{y,ij,k}$,
   and we use the notation   
    $\Lambda^{(\ell)}_{y,ij,k}   =  \Lambda^{(\ell)}(\pi^0_{y,ij,k})  $, 
   which corresponds to $\Lambda^{(\ell)}_{ij,k}(y)$ 
     in the main text.
    By appropriately 
    inserting \eqref{ExpansionPiKbasic} and \eqref{ExpansionPiK} into this expansion, 
    also using \eqref{ExpansionBeta},
    and
    sorting by terms linear in $s_y$, quadratic in $s_y$, and remainder,
    we find
    \begin{align}
        \sqrt{n}\left(  \widehat F_{y,k} -  F_{y,k}  \right)
        &=  T^{(1,F)}_{y,k} + T^{(2,F)}_{y,k} + r^{(F)}_{y,k} ,
         \label{ExpandFhatproof}    
    \end{align}
    where
   the terms linear in $s_y$ read
    \begin{align*}
         T^{(1,F)}_{y,k}  &=  - \frac 1 {\sqrt{n}} \sum_{(i,j) \in \mD}
          \Lambda^{(1)}_{y,ij,k}  \left[       
         \frac{   \left( Q^{(\rm FE)}_y s_y  \right)_{ij} } {  \left( \Lambda^{(1)}_{y,ij} \right)^{1/2}}
           +      \widetilde{\mathbbm{x}}_{y,ij,k}' W_y^{-1} 
             \frac 1 n  \sum_{(i',j') \in \mD}   \widetilde x_{y,i'j'} \,  \partial_\pi \ell_{y,i'j'} 
             \right],
    \end{align*}
    with $\partial_\pi \ell_{y,i'j'} = \left( \Lambda^{(1)}_{y,i'j'} \right)^{1/2} s_{y,i'j'}$.
     
     The projection $\Psi_{y,ij,k} = \Psi_{ij,k}(y)$, defined just before \eqref{DefPsi} in the main text,    
      can be written in terms of the matrix $Q^{({\rm FE})}_y$
    as 
    \begin{align}
         \Psi_{y,ij,k} &=
          \left( \Lambda^{(1)}_{y,ij} \right)^{-1/2}
           \sum_{(i',j') \in \mD}  
         Q^{({\rm FE})}_{y,ij,i'j'}
             \frac{     \Lambda^{(1)}_{y,i'j',k} } {  \left( \Lambda^{(1)}_{y,i'j'} \right)^{1/2}}       ,
             \label{RelationPsi}
    \end{align}
    which implies that
    $\sum_{(i,j) \in \mD}   \Psi_{y,ij,k}  \partial_\pi \ell_{y,ij} = \sum_{(i,j) \in \mD}    \Lambda^{(1)}_{y,ij,k}  \left( \Lambda^{(1)}_{y,ij} \right)^{-1/2} \left( Q^{({\rm FE})}_y s_y  \right)_{ij}
    $. Using 
    $\partial_\beta F_{y,k} = \partial_\beta F_k(y) = n^{-1} 
    \sum_{(i,j) \in \mD}  \Lambda^{(1)}_{ij,k}(y)  \, \widetilde{\mathbbm{x}}_{y,ij,k}^{\, \prime}$ we obtain
    \begin{align}
        T^{(1,F)}_{y,k} &= - \frac 1 {\sqrt{n}} \sum_{(i,j) \in \mD}  \left( \Psi_{y,ij,k} + \partial_\beta F_{y,k} W_y^{-1}  \widetilde x_{y,ij} \right)  \partial_\pi \ell_{y,ij} .
                     \label{Bound1here}   
    \end{align}
    According to part $(iii)$ of Lemma~\ref{lemma:ScoreAve} the
    vector $T^{(1,F)}_{y} = \left[ T^{(1,F)}_{y,k} \, : \, k \in \mK \right]$ therefore satisfies
    $T^{(1,F)}_{y} \rightsquigarrow  Z^{(F)}_y$ asymptotically.
    
    The terms quadratic in $s_y$ read
    \begin{align*}
         T^{(2,F)}_{y,k} 
            &=
            - \frac 1 2 \, \frac 1 {\sqrt{n}} 
            \sum_{(i,j) \in \mD}
          \Lambda^{(1)}_{y,ij,k}    
       \left( \Lambda^{(1)}_{y,ij} \right)^{-1/2}
         \sum_{(i',j') \in \mD} 
                \,   Q^{(\rm FE)}_{y, ij, i'j'} \,
     \frac{ \Lambda^{(2)}_{y,i'j'} } { \left(  \Lambda^{(1)}_{y,i'j'} \right)^{3/2} }
        \left[ \left( Q_y s_y  \right)_{i'j'} \right]^2
        \\ & \quad
        +  \frac 1 2  \, \frac 1 {\sqrt{n}}  \sum_{(i,j) \in \mD}  
          \frac{ \Lambda^{(2)}_{y,ij,k}} { \Lambda^{(1)}_{y,ij}  } \left[ \left( Q_y s_y  \right)_{ij}  \right]^2   
         +  (\partial_\beta F_{y,k}) 
              T^{(2,\beta)}_{y} ,
    \end{align*}     
    where for the term 
      quadratic in $\widehat \pi_{y,ij,k} - \pi^0_{y,ij,k} $
      in the expansion of $ \widehat F_{y,k} -  F_{y,k} $
      we do not insert \eqref{ExpansionPiK}
      but rather insert \eqref{ExpansionPiKbasic},
      and we ignore the terms involving  $ \widehat \beta_y -\beta_y^0$ here  --- they give
       contributions quadratic in the score $s_y$, but only of smaller order, and 
       we therefore rather include those in the remainder term $r^{(F)}_{y,k}$ below.
       Using again \eqref{RelationPsi} we find
    \begin{align*}
         T^{(2,F)}_{y,k} 
            &=   \frac 1 2  \, \frac 1 {\sqrt{n}}  \sum_{(i,j) \in \mD}  
          \frac{ \Lambda^{(2)}_{y,ij,k} - \Lambda^{(2)}_{y,ij} \Psi_{y,ij,k}} { \Lambda^{(1)}_{y,ij}  } \left[ \left( Q_y s_y  \right)_{ij}  \right]^2   
         +  (\partial_\beta F_{y,k}) 
              T^{(2,\beta)}_{y} .
    \end{align*}
    Using  part $(v)$ of Lemma~\ref{lemma:ScoreAve}, and our previous result for $T^{(2,\beta)}_{y} $,   we thus obtain
    \begin{align}               
         T^{(2,F)}_{y,k} 
         -   \frac{I} {n^{1/2}}  \left[  B^{(\Lambda)}_{y,k} + (\partial_\beta F_{y,k})  B^{(\beta)}_y  \right]
                   -  \frac{J} {n^{1/2}}    \left[  D^{(\Lambda)}_{y,k} + (\partial_\beta F_{y,k})  D^{(\beta)}_y  \right] 
                    & \rightarrow_P 0,
               \label{Bound2here}   
    \end{align}    
     uniformly in $y \in \mY$ and $k \in \mK$.
    
     The remainder term of the expansion reads
     \begin{align*}
           r^{(F)}_{y,k}
           &=   n^{-1/2} \sum_{(i,j) \in \mD}  \bigg\{
        \Lambda^{(1)}_{y,ij,k}     \left( \Lambda^{(1)}_{y,ij} \right)^{-1/2}  \left( Q^{(\rm FE)} r_{y} \right)_{ij}
       + n^{-1/2}  \Lambda^{(1)}_{y,ij,k}    \widetilde{\mathbbm{x}}_{y,ij,k}'  r^{(\beta)}_{y}    
       \\ &     \qquad \qquad \qquad \quad
           + \frac 1 8 \Lambda^{(2)}_{y,ij,k} 
              \left( \Lambda^{(1)}_{y,ij} \right)^{-1}
           \left[ 
            {\textstyle \sum_{(i',j') \in \mD} }
                \,   Q_{y, ij, i'j'} \,
    \left(  \Lambda^{(1)}_{y,i'j'} \right)^{-3/2}  \Lambda^{(2)}_{y,i'j'} 
         \left( Q_y s_y  \right)_{i'j'}^2           
           \right]^2
       \\ &     \qquad \qquad \qquad \quad
           + \frac 1 2 \Lambda^{(2)}_{y,ij,k}  \left( \Lambda^{(1)}_{y,ij} \right)^{-1}  (r_{y,ij})^2
           + \frac 1 2 \Lambda^{(2)}_{y,ij,k} \left[  ( \mathbbm{x}_{ij,k} -  x_{ij})' ( \widehat \beta_y -\beta_y^0 ) \right]^2
       \\ &     \qquad \qquad \qquad \quad
           + \frac 1 6  \Lambda^{(3)}(\widetilde \pi_{y,ij,k})
            \left[ \widehat \pi_{y,ij} -  \pi_{y,ij}^0
       + ( \mathbbm{x}_{ij,k} -  x_{ij})' ( \widehat \beta_y -\beta_y^0 )  \right]^3 \bigg\}.
     \end{align*}
     Our assumptions guarantee that $\Lambda^{(\ell)}_{y,ij} $
     and $\Lambda^{(\ell)}_{y,ij,k} $, $\ell \in \{1,2,3\}$,
     and $\left( \Lambda^{(1)}_{y,ij} \right)^{-1}$ are all uniformly bounded.
     Lemma~\ref{lemma:ScoreExpansionPi} guarantees that $r_{y,ij} = o_P(n^{-1/2})$, uniformly over $y,i,j$,
     and using
    Lemma~\ref{lemma:PropQ}$(ii)$
    this also implies that
     $ \left( Q^{(\rm FE)} r_{y} \right)_{ij} = o_P(n^{-1/2})$, uniformly over $y,i,j$.
     Above we have shown $ r^{(\beta)}_{y}   = o_P(1)$,  uniformly over $y$.
     Lemma~\ref{lemma:PropQ}$(ii)$
     and Lemma~\ref{lemma:ScoreAve}$(i)$ imply that
     \begin{align*}
         \sup_{y \in  \mY} 
         \max_{(i,j) \in \mD} \left[ 
            {\textstyle \sum_{(i',j') \in \mD} }
                \,   Q_{y, ij, i'j'} \,
    \left(  \Lambda^{(1)}_{y,i'j'} \right)^{-3/2}  \Lambda^{(2)}_{y,i'j'} 
         \left( Q_y s_y  \right)_{i'j'}^2           
           \right]^2
           = o_P(n^{-1+1/3}) = o_P(n^{-1/2}) .
     \end{align*}
     Our  asymptotic result for $\widehat \beta_y$ from part 1 of this proof  guarantees that 
     $ \sup_{y \in  \mY}  \| \widehat \beta_y -\beta_y^0 \|^2 = o_P(n^{-1/2})$.
    Lemma~\ref{lemma:ScoreExpansionPi} together with 
    Lemma~\ref{lemma:PropQ}$(ii)$
     and Lemma~\ref{lemma:ScoreAve}$(i)$ guarantee that 
     $ \widehat \pi_{y,ij} -  \pi^0_{y,ij} = o_P(n^{-1/6})$, uniformly over $y,i,j$. 
     We thus find, uniformly over $y \in  \mY$ and $k \in \mK$,
     \begin{align*}
           &   \left|  r^{(F)}_{y,k}  \right|
           \\
           &\leq  \frac 1 {\sqrt{n}}
        \underbrace{
           \left[ \sum_{(i,j) \in \mD}  
      \left| \frac{  \Lambda^{(1)}_{y,ij,k}   } { \left( \Lambda^{(1)}_{y,ij} \right)^{1/2} }
      \right| 
        \right]
        }_{=O_P(n)}
        \underbrace{
       \left[ \max_{(i,j) \in \mD}  
       \left|  \left( Q^{(\rm FE)} r_{y} \right)_{ij} \right|
       \right]  
       }_{=o_P(n^{-1/2})}
       + 
       \underbrace{ \left( \frac 1 n \sum_{(i,j) \in \mD}  
       \left\|  \Lambda^{(1)}_{y,ij,k}    \widetilde{\mathbbm{x}}_{y,ij,k} \right\| \right)
       }_{=O_P(1)}
      \underbrace{  \left\|  r^{(\beta)}_{y}  \right\|  }_{=o_P(1)}
       \\ &    
           + \frac 1 {8 \sqrt{n}} 
           \underbrace{
          \left(   \sum_{(i,j) \in \mD}  
         \left|   \frac{  \Lambda^{(2)}_{y,ij,k}  }
              { \Lambda^{(1)}_{y,ij}}
           \right|   
                         \right)
              }_{=O_P(n)}
       \underbrace{
        \left\{  \max_{(i,j) \in \mD}     \left[ 
            {\textstyle \sum_{(i',j') \in \mD} }
                \,   Q_{y, ij, i'j'} \,
    \left(  \Lambda^{(1)}_{y,i'j'} \right)^{-3/2}  \Lambda^{(2)}_{y,i'j'} 
         \left( Q_y s_y  \right)_{i'j'}^2           
           \right]^2
           \right\}
           }_{=o_P(n^{-1/2})}
       \\ &    
           + \frac 1 {2 \sqrt{n}} 
            \underbrace{
          \left(   \sum_{(i,j) \in \mD}  
          \left|  \frac{  \Lambda^{(2)}_{y,ij,k}  }
              { \Lambda^{(1)}_{y,ij}}
              \right|
              \right)
              }_{=O_P(n)}
              \underbrace{ \left[  \max_{(i,j) \in \mD}    (r_{y,ij})^2
                 \right]
           }_{=o_P(n^{-1})}
           + \frac 1 {2 \sqrt{n}} 
         \underbrace{
           \sum_{(i,j) \in \mD}  
          \left|  \Lambda^{(2)}_{y,ij,k} \right|  \left\|  \mathbbm{x}_{ij,k} -  x_{ij}) \right\|^2
         }_{=O_P(n)}
          \underbrace{
             \left\|  \widehat \beta_y -\beta_y^0  \right\|^2
             }_{=o_P(n^{-1/2)}}
       \\ &    
           + \frac 4 {3 \sqrt{n}} 
        \underbrace{  \left(  \sum_{(i,j) \in \mD}  \left|  \Lambda^{(3)}(\widetilde \pi_{y,ij,k}) \right| \right)
        }_{=O_P(n)}
          \bigg\{  \underbrace{  \max_{(i,j) \in \mD}    \left| \widehat \pi_{y,ij} -  \pi_{y,ij}^0 \right|^3 }_{=o_P(n^{-1/2})}
       +   \underbrace{   \max_{(i,j) \in \mD}     \left\| \mathbbm{x}_{ij,k} -  x_{ij} \right\|^3
       }_{=O_P(1)} 
       \underbrace{ \left\| \widehat \beta_y -\beta_y^0 )  \right\|^3 
       }_{=o_P(n^{-1/2})}
       \bigg\} ,
     \end{align*}
      and therefore     
     \begin{align}
               \sup_{y \in  \mY, k \in \mK}   \left|  r^{(F)}_{y,k} \right| = o_P(1) .
            \label{Bound3here}   
     \end{align}
     Combing \eqref{ExpandFhatproof}, \eqref{Bound1here}, \eqref{Bound2here} and \eqref{Bound3here}
      gives the statement for  $\widehat F(y) - F(y)$ in the theorem.
\end{proof}

\begin{proof}[\bf Proof of Theorem~\ref{th:bc}]
 The theorem follows from  Theorem~\ref{th:expansion}  by
applying  Lemma~\ref{lemma:bias_estimators}, which provides the uniform consistency of the  estimators of the components of the asymptotic bias and variance functions.
\end{proof}

\end{appendix}


\pagebreak

\setcounter{page}{1}
\pagenumbering{roman}
\renewcommand{\thesection}{S.\arabic{section}}
\setcounter{section}{0}
\renewcommand{\theequation}{S.\arabic{equation}}
\renewcommand{\thetheorem}{S.\arabic{theorem}}
\setcounter{theorem}{0}
\renewcommand{\thecorollary}{S.\arabic{corollary}}
\setcounter{corollary}{0}
\renewcommand{\thelemma}{S.\arabic{lemma}}
\setcounter{lemma}{0}
\setcounter{equation}{0}

\begin{center}
{\bf \Large Supplementary Appendix}
\end{center}

\begin{abstract}  This supplementary material contains the proofs of Lemmas \ref{lemma:ScoreExpansionPi}--\ref{lemma:bias_estimators}, together with some technical intermediate results.
\end{abstract}

The following proof of Lemma~\ref{lemma:ScoreExpansionPi} also relies on 
the results of Lemma~\ref{lemma:PropQ}  and Lemma~\ref{lemma:ScoreAve}$(i)$, whose proof is presented afterwards, 
 without using Lemma~\ref{lemma:ScoreExpansionPi} of course.

\begin{proof}[\bf Proof of Lemma~\ref{lemma:ScoreExpansionPi}]
Define $Q^\perp_y := \mathbb{I}_n - Q_y$, which is the $n \times n$ symmetric idempotent matrix that projects
onto the space orthogonal to the column span of  $\left( \Lambda^{(1)}_{y} \right)^{1/2} w $, with $w=(w_{ij}: (i,j) \in \mD)$.
 In component notation we have
$Q^\perp_{y,ij,i'j'} = \delta_{ii'} \delta_{j j'} - Q_{y,ij,i'j'}$, where
$\delta_{..}$ refers to the Kronecker delta.
We also define
\begin{align*}
       \pi^*_{y,ij} &:= \left( \Lambda^{(1)}_{y,ij} \right)^{1/2} \pi_{y,ij} ,
      &
          \pi^{* \, 0}_{y,ij} &:= \left( \Lambda^{(1)}_{y,ij} \right)^{1/2} \pi^0_{y,ij} ,
      &
        \ell^*_{y,ij}( \pi^*_{y,ij}) &:= \ell_{y,ij}\left[  \left( \Lambda^{(1)}_{y,ij} \right)^{-1/2}   \pi^*_{y,ij} \right] ,
\end{align*} 
which is simply a rescaling of $\pi_{y,ij}$ by $ \left( \Lambda^{(1)}_{y,ij} \right)^{1/2}$. The rescaling is infeasible, because
$ \left( \Lambda^{(1)}_{y,ij} \right)^{1/2}$ depends on the true parameter values, but for the analysis here it is more convenient
to work with $ \ell^*_{y,ij}( \pi^*_{y,ij})$ than with  $\ell_{y,ij}( \pi_{y,ij})$.
After the rescaling we have  $s_{y,ij} =  \partial_{ \pi^*}   \ell^*_{y,ij} := \partial_{ \pi^*}   \ell^*_{y,ij}\left(   \pi^{*\, 0}_{y,ij} \right)$ and $1 =   \partial_{ \pi^{*2}}   \ell^*_{y,ij} :=  \partial_{ \pi^{*2}}   \ell^*_{y,ij}\left(   \pi_{y,ij}^{*\, 0} \right)$, that is,
the variance of the score and the Hessian of $ \ell^*_{y,ij}( \pi^*_{y,ij})$ evaluated at the true parameter values   are normalized to one.
Equation \eqref{QpiProjection} can be rewritten as
$ Q^\perp_y  \pi^*_y = 0 $.
where $\pi^*_y$ is the $n$-vector with elements $ \pi^*_{y,ij}$.
 Solving \eqref{DefHatPi} is then equivalent to minimizing
the function
\begin{align*}
      \sum_{(i,j) \in \mD}    \ell^*_{y,ij}( \pi^*_{y,ij})
      +  \sum_{(i,j) \in \mD}  \;  \pi^*_{y,ij} \, \; \sum_{(i',j') \in \mD} 
     \left[   Q^\perp_{y,ij,i'j'} \,     \mu_{y,i'j'}    \right]
\end{align*}
over $ \pi^*_{y}$ and $\mu_{y}$, where the $\mu_{y}$ are the Lagrange multipliers corresponding to
the constraint $Q^\perp_y   \pi^*_{y} = 0$, which is equivalent to existence of $\theta$ such that
$\pi_{y} = w \; \theta_y$. The FOCs with respect to $ \pi^*_{y}$ read
\begin{align*}
     \partial_{ \pi^*}  \ell^*_{y}( \widehat \pi^*_{y})
     +  Q^\perp_{y} \,     \widehat \mu_{y}  
     &= 0 ,
\end{align*}
where $ \partial_{ \pi^*}  \ell^*_{y}( \widehat \pi^*_{y})$ and $ \widehat \mu_{y} $ are $n$-vectors obtained by stacking the elements of  $\partial_{ \pi^*}  \ell^*_{y,ij}( \widehat \pi^*_{y})$ and $\widehat \mu_{y,ij}$ for all $(i,j) \in \mD$. 
Existence of $\widehat \mu_{y}  $ that satisfy those FOCs   is equivalent to
\begin{align}
      Q_{y} \partial_{ \pi^*}  \ell^*_{y}( \widehat \pi^*_{y})
     +  Q_{y} Q^\perp_{y} \,     \widehat \mu_{y}   = Q_{y} \partial_{ \pi^*}  \ell^*_{y}( \widehat \pi^*_{y})  = \sum_{(i',j') \in \mD}    Q_{y,ij,i'j'}  \, \partial_{\pi^*}  \ell^*_{y,i'j'}( \widehat \pi^*_{y,i'j'}) &= 0.
      \label{FOCpiStar}
\end{align} 
In addition to this first order condition we have the  
constraint $Q^\perp_y    \widehat \pi^*_{y} = 0$,
which implies that $\widehat \pi^*_{y}  = Q_y \xi_y$ for some $\xi_y \in \mathbb{R}^n$,
that is, we only need to consider parameters $\pi^*_{y}$  that can be represented as $ Q_y \xi_y$.
In the following we perform three expansion steps for the log-likelihood function
(or for the corresponding score function), each time restricting $ \widehat \pi^*_{y,ij} - \pi^{* \, 0}_{y,ij} $ further.

\# Step 1: We assume uniform boundedness
of all parameters and variables that enter into the single index.
Therefore,
for all $y \in \mY$ and $(i,j) \in \mD$ we have
$\pi^{0}_{y,ij} \in [\pi_{\min}, \pi_{\max}]$, where $[\pi_{\min}, \pi_{\max}]$ is some bounded interval.
By strict convexity of minus the logistic log-likelihood function  
there exist constants $c_{\min}$ and $c_{\max}$ such that
 $0< c_{\min} \leq \Lambda^{(1)}_{y,ij} \leq c_{\max} < \infty$
 for all $y \in  \mY$ and $(i,j) \in \mD$. 
 Hence,
$\pi^{* \, 0}_{y,ij} \in [c_{\min}^{1/2} \pi_{\min}, c_{\max}^{1/2} \pi_{\max} ]$,
 for all $y \in \mY$ and $(i,j) \in \mD$.
Define $\Pi_{\rm bnd} :=  [c_{\min}^{1/2} \pi_{\min} - \epsilon, c_{\max}^{1/2} \pi_{\max}   + \epsilon]$, 
where $\epsilon>0$ is an arbitrary finite constant. 
In the following we only need to consider values of $\pi^{*}_{y,ij}$ inside  $\Pi_{\rm bnd}$.

Because $\Pi_{\rm bnd}$ is bounded and $ \ell^*_{y,ij}( \pi^*_{y,ij}  ) $ is smooth
we know that all the derivatives of $ \ell^*_{y,ij}( \pi^*_{y,ij}  ) $ are uniformly bounded inside $\Pi_{\rm bnd}$.
In particular, there exists a finite constant $b$ such that, for $k \in \{1,2,3\}$,
\begin{align*}
     \sup_{\pi \in \Pi_{\rm bnd}}  \; \sup_{y \in \mY} \; \max_{(i,j) \in \mD} \left| \partial_{\pi^{*k}}   \ell^*_{y,ij}( \pi   ) \right| \leq b .
\end{align*}
By a third order expansion of 
$\pi^*_{y,ij} \mapsto    \ell^*_{y,ij}( \pi^*_{y,ij})$ around $\pi^{*\, 0}_{y,ij}$
we find
\begin{align}
     & \sum_{(i,j) \in \mD}    \ell^*_{y,ij}( \pi^*_{y,ij}) -  \sum_{(i,j) \in \mD}    \ell^*_{y,ij}( \pi^{*\, 0}_{y,ij})
  \nonumber  \\  
      &=    \sum_{(i,j) \in \mD} s_{y,ij}   \left(  \pi^*_{y,ij} -  \pi^{*\,0}_{y,ij} \right)
           + \frac 1 2  
           \sum_{(i,j) \in \mD}   \left(  \pi^*_{y,ij} -  \pi^{*\,0}_{y,ij} \right)^2
           + \frac 1 6  
           \sum_{(i,j) \in \mD}   \left(\partial_{\pi^{*3}}  \ell^*_{y,ij}( \underline \pi^*_{y,ij}  )  \right)  \left(  \pi^*_{y,ij} -  \pi^{*\,0}_{y,ij} \right)^3
    \nonumber  \\
      &\geq      \sum_{(i,j) \in \mD} s_{y,ij}   \left(  \pi^*_{y,ij} -  \pi^{*\,0}_{y,ij} \right)
           + \frac 1 2  
           \sum_{(i,j) \in \mD}   \left(  \pi^*_{y,ij} -  \pi^{*\,0}_{y,ij} \right)^2
           - \frac b 6  
           \sum_{(i,j) \in \mD}    \left|  \pi^*_{y,ij} -  \pi^{*\,0}_{y,ij} \right|^3 ,
     \label{LowerBoundTech}      
\end{align}
where $\underline \pi^*_{y,ij} $ is an intermediate values between $ \pi^*_{y,ij}$ and $\pi^{*\, 0}_{y,ij}$.
Analogously,
\begin{align}
     & \sum_{(i,j) \in \mD}    \ell^*_{y,ij}( \pi^*_{y,ij}) -  \sum_{(i,j) \in \mD}    \ell^*_{y,ij}( \pi^{*\, 0}_{y,ij})
    \nonumber   \\
      &\leq      \sum_{(i,j) \in \mD} s_{y,ij}   \left(  \pi^*_{y,ij} -  \pi^{*\,0}_{y,ij} \right)
           + \frac 1 2  
           \sum_{(i,j) \in \mD}   \left(  \pi^*_{y,ij} -  \pi^{*\,0}_{y,ij} \right)^2
           + \frac b 6  
           \sum_{(i,j) \in \mD}    \left|  \pi^*_{y,ij} -  \pi^{*\,0}_{y,ij} \right|^3 ,
          \label{UpperBoundTech}         
\end{align}
Evaluating \eqref{LowerBoundTech}
 at  $\pi^*_{y,ij} =   \pi^{*\, 0}_{y,ij} - \left( Q_y s_y  \right)_{ij} + \left( Q_y \zeta_y  \right)_{ij}$, and 
\eqref{UpperBoundTech} 
at $\pi^*_{y,ij}=  \pi^{*\, 0}_{y,ij} - \left( Q_y s_y  \right)_{ij}$ gives
\begin{align}
     & \sum_{(i,j) \in \mD}    \ell^*_{y,ij}\left[  \pi^{*\, 0}_{y,ij} - \left( Q_y s_y  \right)_{ij} + \left( Q_y \zeta_y  \right)_{ij} \right] -  \sum_{(i,j) \in \mD}    \ell^*_{y,ij}\left[  \pi^{*\, 0}_{y,ij} - \left( Q_y s_y  \right)_{ij} \right]
   \nonumber \\
      &\geq      \sum_{(i,j) \in \mD} s_{y,ij}    \left[   \left( Q_y \zeta_y  \right)_{ij} - \left( Q_y s_y  \right)_{ij}   \right]
           + \frac 1 2  
           \sum_{(i,j) \in \mD}   \left[  \left( Q_y \zeta_y  \right)_{ij} - \left( Q_y s_y  \right)_{ij}   \right]^2
           - \frac b 6  
           \sum_{(i,j) \in \mD}    \left| \left( Q_y \zeta_y  \right)_{ij} - \left( Q_y s_y  \right)_{ij}  \right|^3 
      \nonumber \\
     & \quad +        \sum_{(i,j) \in \mD} s_{y,ij}  \left( Q_y s_y  \right)_{ij} 
           - \frac 1 2  
           \sum_{(i,j) \in \mD}  \left( Q_y s_y  \right)_{ij}^2
           - \frac b 6  
           \sum_{(i,j) \in \mD}    \left|  \left( Q_y s_y  \right)_{ij}  \right|^3
  \nonumber \\
     &=   \frac 1 2      \sum_{(i,j) \in \mD}   
     \left[ \left( Q_y \zeta_y  \right)^2_{ij}   
       - \frac b 3     \left| \left( Q_y \zeta_y  \right)_{ij} - \left( Q_y s_y  \right)_{ij}  \right|^3 
             - \frac b 3     \left|  \left( Q_y s_y  \right)_{ij}  \right|^3
          \right]   
  \nonumber \\
    &\geq   \frac 1 2      \sum_{(i,j) \in \mD}   
     \left[ \left( Q_y \zeta_y  \right)^2_{ij}   
       - \frac {4 b} 3     \left| \left( Q_y \zeta_y  \right)_{ij}    \right|^3 
             - \frac {5 b} 3     \left|  \left( Q_y s_y  \right)_{ij}  \right|^3
          \right]   
  \nonumber \\
    &=   \frac 1 2      \sum_{(i,j) \in \mD}   
     \left\{ \left( Q_y \zeta_y  \right)^2_{ij}   
       \left[ 1  - \frac {4 b} 3     \left| \left( Q_y \zeta_y  \right)_{ij}    \right| \right] 
             - \frac {5 b} 3     \left|  \left( Q_y s_y  \right)_{ij}  \right|^3
          \right\}    ,
     \label{InequalityObjectiveStep1}     
\end{align}
where we also used that $Q_y Q_y = Q_y$
and $   \left| \left( Q_y \zeta_y  \right)_{ij} - \left( Q_y s_y  \right)_{ij}  \right|^3 
 \leq   4 \left| \left( Q_y \zeta_y  \right)_{ij} \right|^3 + 4 \left| \left( Q_y s_y  \right)_{ij}  \right|^3 
$.
By the result of Lemma~\ref{lemma:ScoreAve}$(i)$ we know that there exists a sequence $\kappa_n = o(1)$ such that wpa1
$$
   \sup_{y \in  \mY} \max_{(i,j) \in \mD}  \left| \left( Q_y s_y  \right)_{ij}  \right| \leq \kappa_n \; n^{-1/6},
$$
which implies that
\begin{align*}
        \sup_{y \in  \mY}   \sum_{(i,j) \in \mD}      \left|  \left( Q_y s_y  \right)_{ij}  \right|^3
        &\leq n^{1/2} \, \kappa_n^3 .
\end{align*}
Consider the sets
\begin{align*}
    \Pi^*_{y,n} &:= \left\{ \pi^*_y \in \mathbb{R}^n \, : \,  
      Q^\perp_y    \widehat \pi^*_{y} = 0
     \; \;
     \text{and}
     \; \;
      \sum_{(i,j) \in \mD}    \left( \pi^*_{y,ij} -   \pi^{*\, 0}_{y,ij} + \left( Q_y s_y  \right)_{ij} \right)^2
        \leq    n^{1/2} \, \kappa^2_n
             \right\} ,
  \\
      \overline \Pi^*_{y,n} &:= \left\{ \pi^*_y \in \mathbb{R}^n \, : \,  
      Q^\perp_y    \widehat \pi^*_{y} = 0
     \; \;
     \text{and}
     \; \;
      \sum_{(i,j) \in \mD}    \left( \pi^*_{y,ij} -   \pi^{*\, 0}_{y,ij} + \left( Q_y s_y  \right)_{ij} \right)^2
       =  n^{1/2} \, \kappa^2_n
             \right\} .
\end{align*}
Here, $  \overline \Pi^*_{y,n}$ is the boundary of $\Pi^*_{y,n}$ within the set of all $\pi^*_y$ that satisfy the constraint 
$ Q^\perp_y    \widehat \pi^*_{y} = 0$. 
For any $\zeta \in \mathbb{R}^n$ with  $Q^\perp_y   \zeta = 0$ we have
$Q_y  \zeta =\zeta$, and by applying 
Cauchy-Schwarz inequality we thus find
\begin{align}
     \|\zeta\|_\infty 
     &:=
     \max_{(i,j) \in \mD}  \left|  \zeta_{ij}    \right|
     = 
       \max_{(i,j) \in \mD}  \left|  \sum_{(i',j') \in \mD} Q_{y,ij,i'j'}   \zeta_{i'j'}   \right|
  \nonumber   \\
     &\leq    \max_{(i,j) \in \mD}  \left(  \sum_{(i',j') \in \mD} Q_{y,ij,i'j'}^2  \right)^{1/2}
           \left(  \sum_{(i',j') \in \mD} \zeta_{i'j'}^2  \right)^{1/2}
    \nonumber \\      
      &=     \max_{(i,j) \in \mD}  \left(   Q_{y,ij,ij}  \right)^{1/2}
           \| \zeta \|   
       = O_P(n^{-1/4})          \| \zeta \|  ,
    \label{HelperInequ}   
\end{align}
where we also used that $Q_yQ_y=Q_y$ and employed Lemma~\ref{lemma:PropQ}$(iii)$.
By applying \eqref{HelperInequ} to
$ \zeta_{ij} = \pi^*_{y,ij} -   \pi^{*\, 0}_{y,ij} + \left( Q_y s_y  \right)_{ij}$
we find that for $\pi^*_{y} \in   \Pi^*_{y,n}$ we have
\begin{align*}
     \sup_{y \in  \mY}
     \sup_{\pi^*_y \in    \Pi^*_{y,n}}
      \max_{(i,j) \in \mD}  \left|   \pi^*_{y,ij} -   \pi^{*\, 0}_{y,ij} + \left( Q_y s_y  \right)_{ij} \right|
     = O_P(\kappa_n)  ,
\end{align*}
and also using Lemma~\ref{lemma:ScoreAve}$(i)$ we thus have
\begin{align}
     \sup_{y \in  \mY}
     \sup_{\pi^*_y \in    \Pi^*_{y,n}}
      \max_{(i,j) \in \mD}  \left|   \pi^*_{y,ij} -   \pi^{*\, 0}_{y,ij}   \right|
     = O_P(\kappa_n) + o_P(n^{-1/6}) = o_P(1) .
    \label{ResMaxPiStar} 
\end{align}
Hence,  when applying \eqref{InequalityObjectiveStep1} to 
$\pi^*_y \in    \Pi^*_{y,n}$ with 
$\left( Q_y \zeta_y  \right)_{ij} =  \pi^*_{y,ij} -   \pi^{*\, 0}_{y,ij} + \left( Q_y s_y  \right)_{ij}$,
then the term $\frac {4 b} 3     \left| \left( Q_y \zeta_y  \right)_{ij}    \right|$ is of order $o_P(1)$ and the term $ \left|  \left( Q_y s_y  \right)_{ij}  \right|^3$ is of smaller order than $\left( Q_y \zeta_y  \right)^2_{ij}$.
In addition, note that
$
     \pi^{*\, 0}_{y,ij} - \left( Q_y s_y  \right)_{ij} \in   \Pi^*_{y,n} .
$
Thus, by applying \eqref{InequalityObjectiveStep1}
with $\left( Q_y \zeta_y  \right)_{ij} =  \pi^*_{y,ij} -   \pi^{*\, 0}_{y,ij} + \left( Q_y s_y  \right)_{ij}$,
and using \eqref{ResMaxPiStar} we find that with probability approaching one  we have
\begin{align}
       \sum_{(i,j) \in \mD}    \ell^*_{y,ij}\left( \pi^*_{y,ij} \right) -  \sum_{(i,j) \in \mD}    \ell^*_{y,ij}\left[  \pi^{*\, 0}_{y,ij} - \left( Q_y s_y  \right)_{ij} \right]
       > 0 ,
       \qquad
       \text{for all $\pi^*_y \in  \overline \Pi^*_{y,n}$.}
\end{align}
Thus, we have a convex set $ \Pi^*_{y,n}$ such that the convex    function 
$\pi^*_y \mapsto \sum_{(i,j) \in \mD}    \ell^*_{y,ij}\left(  \pi^*_{y,ij} \right) $
takes a smaller value inside the set $ \Pi^*_{y,n}$ than on any point of its boundary $  \overline \Pi^*_{y,n}$
(within the set of all $\pi^*_y$ that satisfy the constraint  $ Q^\perp_y    \widehat \pi^*_{y}=0$). This guarantees that the minimizer 
of the objective function needs to be inside the set  $ \Pi^*_{y,n}$, that is, we have
$\widehat \pi_y \in  \Pi^*_{y,n}$, which implies
\begin{align*}
      \sup_{y \in  \mY}   \sum_{(i,j) \in \mD}    \left( \widehat \pi^*_{y,ij} -   \pi^{*\, 0}_{y,ij} + \left( Q_y s_y  \right)_{ij} \right)^2
       &\leq \kappa_n^2 \, n^{1/2} = o_P(n^{1/2}) ,
\end{align*}
and by  the inequality \eqref{HelperInequ} 
with $ \zeta_{ij} = \widehat \pi^*_{y,ij} -   \pi^{*\, 0}_{y,ij} + \left( Q_y s_y  \right)_{ij}$,
and   Lemma~\ref{lemma:ScoreAve}$(i)$, we find
\begin{align}
     \sup_{y \in  \mY} \max_{(i,j) \in \mD}  \left| \widehat \pi^*_{y,ij} - \pi^{* \, 0}_{y,ij}  \right|
     =   O_P(\kappa_n) + o_P(n^{-1/6})  = o_P(1) .
    \label{ResultStep1}
\end{align}

\# Step 2: An expansion of \eqref{FOCpiStar} in $\widehat \pi^*_{y,i'j'}$ around $ \pi^{*\,0}_{y,i'j'}$ up to second order yields
\begin{align*}
     &  \sum_{(i',j') \in \mD}    Q_{y,ij,i'j'} 
       \Bigg[
       s_{y,i'j'}
       + \left( \widehat \pi^*_{y,i'j'} -  \pi^{*\,0}_{y,i'j'} \right)
       + \frac 1 2  \left(\partial_{\pi^{*3}}  \ell^*_{y,i'j'}(\widetilde\pi^*_{y,i'j'}  )  \right)
           \left( \widehat \pi^*_{y,i'j'} -  \pi^{*\,0}_{y,i'j'} \right)^2
       \Bigg]  = 0 ,
\end{align*}
where $\widetilde\pi^*_{y,i'j'} $ is a value between $ \pi^{*\,0}_{y,i'j'} $ and $\widehat \pi^*_{y,i'j'}$.
By combining this expansion with the constraint $Q^\perp_y    (\widehat \pi^*_{y}-\pi^{*\,0}_y) = 0$,
which implies that $\widehat \pi^*_{y}-\pi^{*\,0}_y = Q_y \xi_y$, for some $\xi_y$,
 we obtain
\begin{align*}
     \widehat \pi^*_{y,ij} - \pi^{* \, 0}_{y,ij} 
     &= - \left( Q_y s_y  \right)_{ij}
        +   r^{(1)}_{y,ij},
\end{align*}
where
\begin{align*}
      r^{(1)}_{y,ij} &=  -  \frac 1 2  
      \sum_{(i',j') \in \mD}    Q_{y,ij,i'j'} 
       \left(\partial_{\pi^{*3}}  \ell^*_{y,i'j'}(\widetilde\pi^*_{y,i'j'}  ) \right)
  \left( \widehat \pi^*_{y,i'j'} -  \pi^{*\,0}_{y,i'j'} \right)^2
\end{align*}
Using our initial convergence rate result \eqref{ResultStep1} in part 1 of this proof,
and  Lemma~\ref{lemma:PropQ}, and also uniform boundedness of all the derivatives of 
$ \ell^*_{y,i'j'}( \pi^* )$ within $\Pi_{\rm bnd}$, we   find
 $$
 \sup_{y \in  \mY} \max_{(i,j) \in \mD} \left| r^{(1)}_{y,ij} \right| = o_P(n^{-1/2+1/6}) + O_P(\kappa_n^2)  .
 $$ 
Hence, by Lemma~\ref{lemma:ScoreAve}$(i)$, 
\begin{align}
     \sup_{y \in  \mY} \max_{(i,j) \in \mD}  \left| \widehat \pi^*_{y,ij} - \pi^{* \, 0}_{y,ij}  \right|
     = o_P(n^{-1/6}) + O_P(\kappa_n^2)  .
   \label{IntermediateResultStep2}  
\end{align}
Using \eqref{IntermediateResultStep2} instead of \eqref{ResultStep1}, and  reapplying the same argument a second time we obtain
\begin{align*}
     \sup_{y \in  \mY} \max_{(i,j) \in \mD}  \left| \widehat \pi^*_{y,ij} - \pi^{* \, 0}_{y,ij}  \right|
     = o_P(n^{-1/6}) + O_P(\kappa_n^4)  .
\end{align*}
And by iterating this argument $q$-times we obtain 
\begin{align*}
     \sup_{y \in  \mY} \max_{(i,j) \in \mD}  \left| \widehat \pi^*_{y,ij} - \pi^{* \, 0}_{y,ij}  \right|
     = o_P(n^{-1/6}) + O_P(\kappa_n^{2q})  .
\end{align*}
for any positive integer $q$. Since $\kappa_n = o(1)$ we can choose $q$ large enough such that
\begin{align}
     \sup_{y \in  \mY} \max_{(i,j) \in \mD}  \left| \widehat \pi^*_{y,ij} - \pi^{* \, 0}_{y,ij}  \right|
     = o_P(n^{-1/6})   .
    \label{ResultStep2}
\end{align}
 
\# Step 3: An expansion of \eqref{FOCpiStar} in $\widehat \pi^*_{y,i'j'}$ around $ \pi^{*\,0}_{y,i'j'}$ up to third order yields 
\begin{align*}
     &  \sum_{(i',j') \in \mD}    Q_{y,ij,i'j'} 
       \Bigg[
       s_{y,i'j'}
       + \left( \widehat \pi^*_{y,i'j'} -  \pi^{*\,0}_{y,i'j'} \right)
       + \frac 1 2  \left(\partial_{\pi^{*3}}  \ell^*_{y,i'j'} \right)
           \left( \widehat \pi^*_{y,i'j'} -  \pi^{*\,0}_{y,i'j'} \right)^2
  \\& \qquad \qquad \qquad      \qquad  \qquad      \qquad  \qquad \quad
        +     \frac 1 6  \left(\partial_{\pi^{*4}}  \ell^*_{y,i'j'}(\overline \pi^*_{y,i'j'}  )  \right)
           \left( \widehat \pi^*_{y,i'j'} -  \pi^{*\,0}_{y,i'j'} \right)^3
       \Bigg]  = 0 ,
\end{align*}
where $\overline \pi^*_{y,i'j'} $ is  a value between $ \pi^{*\,0}_{y,i'j'} $ and $\widehat \pi^*_{y,i'j'}$.
By again using the constraint $\widehat \pi^*_{y}-\pi^{*\,0}_y = Q_y \xi_y$, for some $\xi_y$,
 we  obtain
\begin{align*}
     \widehat \pi^*_{y,ij} - \pi^{* \, 0}_{y,ij} 
     &= - \left( Q_y s_y  \right)_{ij}
        - \frac 1 2 
         \sum_{(i',j') \in \mD} 
        \underbrace{  \left(\partial_{\pi^{*3}}  \ell^*_{y,i'j'} \right)
        }_{
   =  \frac{ \partial_{\pi^3} \ell_{y,i'j'} } { \left(  \Lambda^{(1)}_{y,i'j'} \right)^{3/2} } }
       \,   Q_{y, ij, i'j'} \,
        \left[ \left( Q_y s_y  \right)_{i'j'} \right]^2
        + r_{y,ij},
\end{align*}
where
\begin{align*}
    r_{y,ij} &=  - \sum_{(i',j') \in \mD}    Q_{y,ij,i'j'} 
       \Bigg\{   
       \frac 1 2  \left(\partial_{\pi^{*3}}  \ell^*_{y,i'j'} \right)
       r^{(1)}_{y,i'j'}  \left[ 2 \left( Q_y s_y  \right)_{i'j'} + r^{(1)}_{y,i'j'} \right]
  \\& \qquad \qquad \qquad      \qquad  \qquad      \qquad  \qquad \quad
        +     \frac 1 6  \left(\partial_{\pi^{*4}}  \ell^*_{y,i'j'}(\overline \pi^*_{y,i'j'}  )  \right)
           \left( \widehat \pi^*_{y,i'j'} -  \pi^{*\,0}_{y,i'j'} \right)^3
       \Bigg\} ,
\end{align*}
and therefore
\begin{align*}
   \max_{(i,j) \in \mD}  \left| r_{y,ij} \right| &\leq 
   \left( \max_{(i,j) \in \mD}    \sum_{(i',j') \in \mD}    Q_{y,ij,i'j'}  \right)
     \max_{(i,j) \in \mD}   \Bigg|   
       \frac 1 2  \left(\partial_{\pi^{*3}}  \ell^*_{y,ij} \right)
       r^{(1)}_{y,ij}  \left[ 2 \left( Q_y s_y  \right)_{ij} + r^{(1)}_{y,ij} \right]
  \\& \qquad \qquad \qquad      \qquad  \qquad      \qquad  \qquad \qquad
        +     \frac 1 6  \left(\partial_{\pi^{*4}}  \ell^*_{y,ij}(\overline \pi^*_{y,ij}  )  \right)
           \left( \widehat \pi^*_{y,ij} -  \pi^{*\,0}_{y,ij} \right)^3
       \Bigg| .
\end{align*}
Thus,
using \eqref{ResultStep2},
Lemma~\ref{lemma:PropQ},
and  Lemma~\ref{lemma:ScoreAve}$(i)$, and also uniform boundedness of all the derivatives of 
$ \ell^*_{y,i'j'}( \pi^* )$ within  $\Pi_{\rm bnd}$, we thus find
 $\sup_{y \in  \mY} \max_{(i,j) \in \mD} \left| r_{y,ij} \right| = o_P(n^{-1/2})$. 
 This gives the result of the lemma, since
 $   \widehat \pi^*_{y,ij} - \pi^{* \, 0}_{y,ij}  = \left( \Lambda^{(1)}_{y,ij} \right)^{1/2} \left( \widehat \pi_{y,ij} - \pi^0_{y,ij} \right)$.
\end{proof}

\begin{proof}[\bf Proof of Lemma~\ref{lemma:InvW}]
    We prove this lemma by showing that the smallest eigenvalue of $W_y$ is bounded from below,
     uniformly over $y \in {\mathcal Y}$.
     We know that there exists $b_{\min}>0$ such that $ \Lambda^{(1)}_{y,ij} \geq b_{\min}$, uniformly over $y$, $i$, $j$.
     Then,
     \begin{align*}
          \lambda_{\min}( W_y )
          &= \min_{\| \delta \|=1}  \delta'   W_y  \delta
          \\
          &=  \min_{\| \delta \|=1}   \min_{(\alpha,\gamma) \in \mathbb{R}^{I+J}} 
       \left[ 
       \frac 1 n \sum_{(i,j) \in \mD}  \Lambda^{(1)}_{y,ij} \; \, (x'_{ij} \delta  -  \alpha_i - \gamma_j )^2 \right] 
        \\
          &\geq  \min_{\| \delta \|=1}   \min_{(\alpha,\gamma) \in \mathbb{R}^{I+J}} 
       \left[ 
       \frac 1 n \sum_{(i,j) \in \mD}  b_{\min} \;  (x'_{ij} \delta  -  \alpha_i - \gamma_j )^2 \right] 
        \\
          &=   b_{\min}  \; \min_{\| \delta \|=1}   \min_{(\alpha,\gamma) \in \mathbb{R}^{I+J}} 
       \left[ 
       \frac 1 n  \sum_{(i,j) \in \mD}     (x'_{ij} \delta  -  \alpha_i - \gamma_j )^2 \right] 
       \\
         &\geq    b_{\min}   \; c_3  >0 ,  
     \end{align*}
     where  existence of $c_3>0$ is guaranteed
     by
     Assumption~\ref{ass:baseline}$(vi)$.
\end{proof}

\begin{proof}[\bf Proof of Lemma~\ref{lemma:PropQ}]
    We showed that $ Q_y = Q^{(1)}_y + Q^{({\rm FE})}_y $ in equation \eqref{QisQ1QFE}.
    We now want to find the bound on $Q_y^{({\rm rem})} = Q^{({\rm FE})}_y -   Q_y^{(2)} -  Q_y^{(3)}$
    in part (i) of the lemma.
    Let
    \begin{align*}
         {\mathcal H}_y := \left[w^{(2)},w^{(3)} \right]' \Lambda^{(1)}_{y} \left[w^{(2)},w^{(3)} \right] .
    \end{align*}
    Then,
    \begin{align*}
          Q^{({\rm FE})}_y  := \left( \Lambda^{(1)}_{y} \right)^{1/2} 
    \left[w^{(2)},w^{(3)} \right]   {\mathcal H}_y^\dagger \left[w^{(2)},w^{(3)} \right]'  \left( \Lambda^{(1)}_{y} \right)^{1/2} ,
    \end{align*}
    where we  use the Moore-Penrose pseudo-inverse $\dagger$,  because ${\mathcal H}_y$ has one zero-eigenvalue
    with corresponding eigenvector $v=(1_I,-1_J)$, that is, $v$ is a column vector with $I$ ones follows by $J$ minus ones.\footnote{Note that the additively separable structure $\alpha_i(y) + \gamma_j(y)$ is invariant to adding a constant to all the $\alpha_i(y)$ and subtracting the same constant to all the $\gamma_j(y)$.}
    We can therefore write
    \begin{align*}
          {\mathcal H}_y^\dagger  &=
          \left[ {\mathcal H}_y + vv' / (I+J) \right]^{-1} - vv' / (I+J).
    \end{align*}
    The matrix ${\mathcal H}_y = \partial_{\phi_y \phi_y'}  \sum_{(i,j) \in \mD}   \ell_{y,ij}(\pi^0_{y,ij})$ is simply the $(I+J) \times (I+J)$ Hessian
    matrix    of minus the log-likelihood function with respect to all the fixed effects $\phi_y = (\alpha_y', \gamma_y')'$.
    We decompose ${\mathcal H}_y = {\mathcal D}_y + {\mathcal R}_y$, where
    \begin{align*}
         {\mathcal D}_y &:=  w^{(2) \prime}   \Lambda^{(1)}_{y}  w^{(2)} + w^{(3) \prime} \Lambda^{(1)}_{y}  w^{(3)} 
             = \left( \begin{array}{cc}
                       \diag\left(  \sum_{j \in \mD_i} \Lambda^{(1)}_{y,ij}   \right)_{i=1\ldots,I}
                       & 0_{I \times J}\\
                     0_{J \times I} &  
                        \diag\left(  \sum_{i \in \mD_j} \Lambda^{(1)}_{y,ij}  \right)_{j=1\ldots,J}
                 \end{array} \right)
         \\
         {\mathcal R}_y &:=  w^{(2) \prime}   \Lambda^{(1)}_{y}  w^{(3)} + w^{(3) \prime} \Lambda^{(1)}_{y}  w^{(2)} 
       =  \left( \begin{array}{cc}
                     0_{I \times I}
                     &
                    A_y                     
                     \\
                      A_y'  &   0_{J \times J}
                 \end{array} \right) ,
    \end{align*}
    where $A_y$ is the $I \times J$ matrix with entries $A_{y,ij}= \Lambda^{(1)}_{y,ij}$ if $(i,j) \in \mD$, and zero otherwise.
   Because $\infty > b_{\max} \geq  \Lambda^{(1)}_{y,ij} \geq b_{\min} >0$,
     Lemma D.1 in \cite{FernandezValWeidner2016} shows  that this incidental parameter Hessian satisfies
    \begin{align}
          \sup_{y \in  \mY}  \left\| {\mathcal H}_y^\dagger - {\mathcal D}_y^{-1} \right\|_{\max}
          = O_P( n^{-1} ),
          \label{HDbound}
    \end{align}
    where $\| A \|_{\max}$ refers to the maximum over the absolute values of all the elements of the matrix $A$.
    Note that  Lemma D.1 in \cite{FernandezValWeidner2016}  is for the ``expected Hessian'', but 
  for our logit model we have  ${\mathcal H}_y = \Ep {\mathcal H}_y$,
  conditional on regressors and fixed effects, so the distinction
  between Hessian and expected Hessian is irrelevant here.
 Also, in \cite{FernandezValWeidner2016} the Hessian is not indexed by $y$,
 but the derivation of the bound there is in terms of global constants $b_{\min}$, $b_{\max}$
 and thus holds uniformly over $y$. Finally, \cite{FernandezValWeidner2016} does not allow for missing observations, but since we only
 allow for a finite number of missing observations for every $i$ and $j$ that can only have a negligible effect on the Hessian matrix.
 
 We thus have
 \begin{align*}
     Q^{({\rm FE})}_y  &:= \left( \Lambda^{(1)}_{y} \right)^{1/2} 
    \left[w^{(2)},w^{(3)} \right]   {\mathcal D}_y^{-1} \left[w^{(2)},w^{(3)} \right]'  \left( \Lambda^{(1)}_{y} \right)^{1/2} 
   \\ & \quad 
    +
    \left( \Lambda^{(1)}_{y} \right)^{1/2} 
    \left[w^{(2)},w^{(3)} \right]  \left[  {\mathcal H}_y^\dagger - {\mathcal D}_y^{-1} \right] \left[w^{(2)},w^{(3)} \right]'  \left( \Lambda^{(1)}_{y} \right)^{1/2}
  \\
   &=    Q_y^{(2)} +  Q_y^{(3)} + Q_y^{({\rm rem})} ,
 \end{align*}
 where
 \begin{align*}
 Q_y^{({\rm rem})}
 &=
          \left( \Lambda^{(1)}_{y} \right)^{1/2} 
    \left[w^{(2)},w^{(3)} \right]  \left[  {\mathcal H}_y^\dagger - {\mathcal D}_y^{-1} \right] \left[w^{(2)},w^{(3)} \right]'  \left( \Lambda^{(1)}_{y} \right)^{1/2},
 \end{align*}
 and therefore
 \begin{align*}
       \sup_{y \in  \mY} \left\| Q_y^{({\rm rem})} \right\|_{\max}
       &\leq     \sup_{y \in  \mY} 
       \left\|   \left( \Lambda^{(1)}_{y} \right)^{1/2}  \right\|_{\max}
        \left\| {\mathcal H}_y^\dagger - {\mathcal D}_y^{-1} \right\|_{\max}
         \left\|   \left( \Lambda^{(1)}_{y} \right)^{1/2}  \right\|_{\max}
       \\
       &=   \sup_{y \in   \mY} 
       \left( \max_{(i,j)\in \mD}  \Lambda^{(1)}_{y,ij} \right) 
         \left\| {\mathcal H}_y^\dagger - {\mathcal D}_y^{-1} \right\|_{\max} = O_P(n^{-1}),
 \end{align*}
 which can equivalently
  be written as $   \sup_{y \in  \mY}  \max_{(i,j) \in \mD}  \max_{(i',j') \in \mD}   \left| Q_{y,ij,i'j'}^{({\rm rem})} \right| = O_P(n^{-1})$.
  
  Part $(ii)$ and $(iii)$ follow immediately from part $(i)$
  and the explicit formulas for the elements of $Q_y^{(1)}$, $Q_y^{(2)}$ and $Q_y^{(3)}$
  in \eqref{ElementsQ1} and \eqref{ElementsQ23} above.
\end{proof}

\section*{\bf Intermediate results for the proof of of Lemma~\ref{lemma:ScoreAve}}

The proof of Lemma~\ref{lemma:ScoreAve}   requires several intermediate results from the theory of stochastic processes,
which are presented in the following.
The notation $a_n \lesssim b_n$ means that $a_n \leq C \, b_n$
for some constant $C$ that is independent of the sample size $n$.
It is also convenient to
define ${\bf I} := \{ 1,2,\ldots, I\}$ and ${\bf J} := \{ 1,2,\ldots, J\}$.
In this section we assume that   $\mY$ is a bounded interval, and that $y_{ij}$ is continuously distributed
  with density bounded away from zero.
  The results for the case where $y_{ij}$ is discrete follow directly from \cite{FernandezValWeidner2016}.
  Results for a mixed distribution of $y_{ij}$ follow by combining the results for the continuous and discrete cases.

\subsection*{Bounds on sample averages over the score $\partial_{\pi} \ell_{y,ij}$}

For every $i \in {\bf I}$ we define the empirical process
\begin{align}
    \mathbb{G}_{J,i} f &:= \frac 1 {\sqrt{|\mD_i|}} \sum_{j\in \mD_i} \left[  f(y_{ij}) - \Ep f(y_{ij}) \right] ,
    \qquad
    f  \in {\mathcal F} := \left\{  \widetilde y \; \mapsto \; 1(\widetilde y \leq y) \; : \; y \in {\mathcal Y} \right\} .
    \label{DefGF}
\end{align} 
Here, for ease of notation, we use the subscript $J$ to denote the sample size, corresponding to the balanced panel case where $|\mD_i|=J$.
Following standard notation we write $\|  \mathbb{G}_{J,i} \|_{\mathcal F} :=  \sup_{f \in {\mathcal F} }   \left|  \mathbb{G}_{J,i} f  \right| $.
Every element of ${\mathcal F}$ correspond to exactly one $y \in {\mathcal Y} $, and
in the following we write $ f_y : \widetilde y \; \mapsto \; 1(\widetilde y \leq y)  $ for that element.
Since $\Ep \; 1\{y_{ij} \leq y\} = \Lambda_{y,ij}$,
\begin{align}
     \mathbb{G}_{J,i} f_y  &= - \frac 1 {\sqrt{|\mD_i|}} \sum_{j \in \mD_i} \partial_{\pi} \ell_{y,ij} ,
     \label{RelatingGscore}
\end{align}
where
\begin{align*}
     \partial_{\pi} \ell_{y,ij}
     := \partial_\pi  \ell_{y,ij}(\pi^0_{y,ij} )  =  \Lambda_{y,ij} - 1\{y_{ij} \leq y \} .
\end{align*}
Our goal  is to show that
$\max_{i \in {\bf I}} \|  \mathbb{G}_{J,i} \|_{\mathcal F} 
          = o_P\left(  I^{1/6}  \right)$
by using the following theorem.          
 The theorem uses standard notation 
$\mathbb{G}_n$ for the empirical process, 
denoting the sample size by $n$ (not $J$ or $|\mD_i|$), and without an extra index $i$. The definition
of $J(\delta, {\mathcal F})$ is given on p.239 of
van der Vaart and Wellner (1996).
All that matters to us is that $J(1, {\mathcal F})$ only depends on ${\mathcal F}$
(not on the probability measure or on the empirical process)
and that for the ${\mathcal F}$  defined in \eqref{DefGF} we have $J(1, {\mathcal F}) < \infty$, because ${\mathcal F}$ is
VC class. An obvious envelope function for that ${\mathcal F}$  is $F: \tilde y \mapsto 1$, which satisfies $\| F \|_n=1$.
The  minimal measurable majorant of $\{ \mathbb{G}_{n} f  :  f \in {\mathcal F} \} $
is denoted by $\| \mathbb{G}_{n} \|^*_{\mathcal F}$, and is identical to  $\| \mathbb{G}_{n} \|_{\mathcal F}$ for our purposes.


\begin{lemma}[\bf Restatement of Theorem 2.14.1 in van der Vaart and Wellner, 1996, for INID case]
    \label{lemma:Th2141}
   Let $\mathbb{G}_{n}  $ be the empirical process of an i.n.i.d. sample.\footnote{
    An example is \eqref{DefGF}. In that example the sample size is $n=|\mD_i|$. We require 
   results for non-identically distributed samples, because $y_{ij}$ conditional on regressors and fixed effects
   is independent across $j$ under our assumptions, but not identically distributed.
   }
    Let ${\mathcal F}$ be a $P$-measurable class of measurable functions with measurable envelope $F$.
    Then, for $p\geq 2$,
    \begin{align*}
         \Ep\left[  \left( \|  \mathbb{G}_{n} \|^*_{\mathcal F} \right)^p  \right]
         \lesssim   J(1, {\mathcal F})^p \; \Ep \| F \|^p_n ,
    \end{align*}
    where $ \| F \|_n$ is the $L_2( \mathbb{P}_n )$-seminorm
    and the inequality is valid up to a constant depending only on the $p$ involved in the statement.
\end{lemma}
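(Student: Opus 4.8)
The plan is to follow the proof of Theorem 2.14.1 in \cite{vanderVaartandWellner1996} and to observe that each step relies only on the independence of the observations, never on their being identically distributed; the i.n.i.d. restatement therefore follows from the same argument. I would organize this in four steps.

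First, I would symmetrize. Writing $\mathbb{G}_n f = n^{-1/2} \sum_{i} \left[ f(X_i) - \mathbb{E} f(X_i) \right]$ with each observation centered at its own mean, I introduce an independent Rademacher sequence $(\epsilon_i)$ and bound $\mathbb{E}^* \| \mathbb{G}_n \|_{\mathcal F}^p$ by $2^p \, \mathbb{E}^* \| \mathbb{G}^0_n \|_{\mathcal F}^p$, where $\mathbb{G}^0_n f = n^{-1/2} \sum_i \epsilon_i f(X_i)$. The symmetrization inequality (introduce an independent copy $X_i'$ with the same per-observation law, then apply Jensen and insert Rademacher signs) uses only independence across $i$, so it holds verbatim in the i.n.i.d. case once each term is individually centered.

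Second, I would condition on $X_1, \dots, X_n$. Given the data, the increments $\mathbb{G}^0_n f - \mathbb{G}^0_n g = n^{-1/2} \sum_i \epsilon_i \left[ f(X_i) - g(X_i) \right]$ are weighted Rademacher sums, hence sub-Gaussian by Hoeffding with variance proxy equal to $\| f - g \|_n^2$, the squared $L_2(\mathbb{P}_n)$ seminorm. Third, I would apply the chaining bound for sub-Gaussian processes conditionally on the data, obtaining, with a constant depending only on $p$,
\[
    \left( \mathbb{E}_{\epsilon} \| \mathbb{G}^0_n \|_{\mathcal F}^p \right)^{1/p}
    \lesssim \int_0^{\mathrm{diam}_n({\mathcal F})} \sqrt{ \log N\!\left( \epsilon, {\mathcal F}, \| \cdot \|_n \right)} \, d\epsilon .
\]
Since $F$ is an envelope, $\mathrm{diam}_n({\mathcal F}) \leq 2 \| F \|_n$, and the substitution $\epsilon = u \| F \|_n$ bounds the right-hand side by $\| F \|_n \, J(1, {\mathcal F})$, because $J(1, {\mathcal F})$ is a supremum of the normalized entropy integral over all finitely discrete measures $Q$, and in particular dominates the integral evaluated at the empirical measure $\mathbb{P}_n$ regardless of whether the sample is identically distributed. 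Finally, I would take the expectation over the data, using $\mathbb{E}_X \left[ \| F \|_n^p \, J(1, {\mathcal F})^p \right] = J(1, {\mathcal F})^p \, \mathbb{E} \| F \|_n^p$, and combine with the symmetrization step to conclude.

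The main obstacle is the bookkeeping in the third step: producing an $L_p$ version of the chaining inequality (rather than an in-expectation bound) with a constant that depends on $p$ alone, and normalizing the entropy integral by the random envelope norm $\| F \|_n$ so that the uniform-entropy integral $J(1, {\mathcal F})$ emerges cleanly. By contrast, the passage from the i.i.d. to the i.n.i.d. setting introduces no genuinely new difficulty, precisely because symmetrization and conditional chaining never invoke identical distribution—this is exactly what licenses the restatement by appeal to \cite{vanderVaartandWellner1996}.
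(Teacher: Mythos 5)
Your proposal is correct and is essentially the paper's own argument: the paper's proof likewise observes that the proof of Theorem 2.14.1 in van der Vaart and Wellner rests only on symmetrization (their Lemma 2.3.1) and conditional sub-Gaussianity of the Rademacher-symmetrized process, neither of which uses identical distribution, so the bound carries over verbatim to the i.n.i.d. case. You simply spell out the chaining and envelope-normalization bookkeeping that the paper leaves implicit by reference.
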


\begin{proof}
    In van der Vaart and Wellner (1996) the theorem is stated for empirical processes from iid samples ,
    but their proof relies only on symmetrization arguments (their Lemma 2.3.1) and sub-Gaussianity of the symmetrized process,
    which continue to hold for INID samples that we consider here (our $y_{ij}$ are conditionally independent, but not identically
    distributed).
\end{proof}

\begin{corollary}
     \label{cor:ScoreFE}
    Under Assumption~\ref{ass:baseline} we have
    $$\sup_{y \in {\mathcal Y} } \max_{i \in {\bf I}}
      \left|  \frac 1 {\sqrt{|\mD_i|}}  \sum_{j \in \mD_i} \partial_{\pi} \ell_{y,ij} \right| 
          =o_P\left(  n^{1/12}  \right) ,$$
   and
        $$\sup_{y \in {\mathcal Y} } \max_{j \in {\bf J}}
      \left|  \frac 1 {\sqrt{|\mD_j|}}  \sum_{i \in \mD_j} \partial_{\pi} \ell_{y,ij} \right| 
          =o_P\left(  n^{1/12}  \right)   .$$  
\end{corollary}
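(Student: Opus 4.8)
The plan is to translate the claim into a statement about the i.n.i.d.\ empirical processes $\mathbb{G}_{J,i}$ introduced in \eqref{DefGF}, apply the maximal moment bound of Lemma~\ref{lemma:Th2141} to control each one, and then pass to the maximum over the growing index set by a union bound with a sufficiently high moment.

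First I would record the exact translation. By \eqref{RelatingGscore}, for each fixed $i \in {\bf I}$ we have $\sup_{y \in \mY} \left| |\mD_i|^{-1/2} \sum_{j \in \mD_i} \partial_\pi \ell_{y,ij} \right| = \|\mathbb{G}_{J,i}\|_{\mathcal F}$, where ${\mathcal F} = \{\tilde y \mapsto 1(\tilde y \leq y) : y \in \mY\}$ is the VC class from \eqref{DefGF}. Conditional on $\{(x_{ij}, v_i, w_j)\}$, the variables $\{1\{y_{ij} \leq y\} : j \in \mD_i\}$ are independent with means $\Lambda_{y,ij}$, hence i.n.i.d., so $\mathbb{G}_{J,i}$ is an empirical process of sample size $|\mD_i|$ to which Lemma~\ref{lemma:Th2141} applies with the constant envelope $F \equiv 1$, for which $\|F\|_n = 1$. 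Since $J(1,{\mathcal F})$ depends only on the fixed class ${\mathcal F}$, this yields, for every $p \geq 2$, a constant $C_p$ (depending only on $p$ and ${\mathcal F}$) with
$$
\Ep\left[\left(\|\mathbb{G}_{J,i}\|_{\mathcal F}\right)^p\right] \lesssim J(1,{\mathcal F})^p \leq C_p,
$$
uniformly over $i$ and over the sample size.

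Next I would pass from the individual bound to the maximum over $i$ by a union bound and Markov's inequality: for any $t > 0$,
$$
\Pr\left(\max_{i \in {\bf I}} \|\mathbb{G}_{J,i}\|_{\mathcal F} > t\right) \leq \sum_{i \in {\bf I}} \frac{\Ep[(\|\mathbb{G}_{J,i}\|_{\mathcal F})^p]}{t^p} \leq \frac{I\,C_p}{t^p}.
$$
Taking $t = \epsilon\,n^{1/12}$ for arbitrary $\epsilon > 0$, and using Assumption~\ref{ass:baseline}(v) and (vii), under which $I$ and $J$ are both of order $n^{1/2}$ (so $|\mD_i|$ is of order $J$ and the missing observations change it only by a bounded amount), the right-hand side is of order $n^{1/2 - p/12}$. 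Choosing any fixed $p > 6$ forces this to tend to zero, which is exactly $\max_{i} \|\mathbb{G}_{J,i}\|_{\mathcal F} = o_P(n^{1/12})$. The second displayed inequality in the corollary, with the maximum over $j \in {\bf J}$ and the sum over $i \in \mD_j$, follows by the identical argument with the roles of $i$ and $j$ interchanged.

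The main obstacle, and the only point of real content beyond bookkeeping, is controlling the maximum over the growing index set. A single $\|\mathbb{G}_{J,i}\|_{\mathcal F}$ is merely $O_P(1)$, and crude tail bounds for the maximum of $I$ such quantities would typically cost a $\sqrt{\log I}$ factor; the key is that boundedness of ${\mathcal F}$ permits taking $p$ arbitrarily large in Lemma~\ref{lemma:Th2141}, so the union-bound cost $I$ of order $n^{1/2}$ is absorbed precisely once $p/12 > 1/2$. It will also be worth emphasizing that the non-identically-distributed structure is genuinely admissible here—this is exactly why the restatement in Lemma~\ref{lemma:Th2141} for the i.n.i.d.\ case is invoked rather than the standard iid version of van der Vaart and Wellner.
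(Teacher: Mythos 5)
Your proposal is correct and follows essentially the same route as the paper's own proof: translate via \eqref{RelatingGscore} to $\|\mathbb{G}_{J,i}\|_{\mathcal F}$, apply Lemma~\ref{lemma:Th2141} with the constant envelope to get a $p$-th moment bound uniform in $i$, and absorb the maximum over the $O(n^{1/2})$ indices by a union/summation bound with any fixed $p>6$, then argue the second display by symmetry. The only cosmetic difference is that the paper bounds $\Ep(\max_i \|\mathbb{G}_{J,i}\|_{\mathcal F})^p$ by the sum of $p$-th moments and then applies Markov once, whereas you apply Markov term by term inside a union bound; the two computations are identical.
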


\begin{proof} 
    The definition \eqref{DefGF} implies \eqref{RelatingGscore}, so we want to show 
    $\max_{i \in {\bf I}} \|  \mathbb{G}_{J,i} \|_{\mathcal F} 
          = o_P\left(  I^{1/6}  \right)$. 
     Applying Lemma~\ref{lemma:Th2141} for the function class $\mathcal{F}$with the envelope function $F: \tilde y \mapsto 1$ we find for $p \geq 1$,
     \begin{align*}
          \Ep \left( \max_{i \in {\bf I}} \|  \mathbb{G}_{J,i} \|_{\mathcal F}  \right)^p
           &=  \Ep  \max_{i \in {\bf I}} \left( \|  \mathbb{G}_{J,i} \|_{\mathcal F}  \right)^p
            \leq \Ep  \sum_{i \in {\bf I}}  \left( \|  \mathbb{G}_{J,i} \|_{\mathcal F}  \right)^p
           =  \sum_{i \in {\bf I}}   \Ep   \left( \|  \mathbb{G}_{J,i} \|_{\mathcal F}  \right)^p
        \\   
          &\leq I \,  J(1, {\mathcal F})^p =  O(I),
     \end{align*}
     where $J(1, {\mathcal F})$ is a finite constant, independent of $i$, as noted earlier above.
     By Markov's inequality we thus find
     $\max_{i \in {\bf I}} \|  \mathbb{G}_{J,i} \|_{\mathcal F}  =  O_P( I^{1/p} )$.
     Choosing $p>6$ gives the desired result.
          
     The second statement $\sup_{y \in {\mathcal Y} } \max_{j \in {\bf J}}
      \left|  \frac 1 {\sqrt{|\mD_j|}}  \sum_{i \in \mD_j}  \partial_{\pi} \ell_{y,ij} \right| 
          =o_P\left(  n^{1/12}  \right)$ can be shown analogously.
\end{proof}

\subsection*{Bounds on weighted sample averages over the score $\partial_{\pi} \ell_{y,ij}$}
We also need results on sample averages of the form e.g.
$ \frac 1 {\sqrt{n}}   \sum_{(i,j) \in \mD}  \,  b^{(n)}_{y,hij} \,   \partial_{\pi} \ell_{y,ij}$,
where $b^{(n)}_{y,hij}$ are weights that also depend on the index $y$. The following lemma is useful for that purpose.


\begin{lemma}
     \label{lemma:HelperEP}
     Suppose $Z_1(t)$, \ldots, $Z_n(t)$ are independent,  stochastic processes indexed by $t \in T$ 
     which are suitably measurable.     Let $B_i$ denote a measurable envelope of $\{ Z_i(t), t \in T \}$.
     such that $ \Ep B_i^p < \infty$ for $p \geq 1$. Let 
     $$
           X_n(t) = \frac 1 n \sum_{i=1}^n Z_i(t), \quad  \Ep X_n(t) = \frac 1 n \sum_{i=1}^n \Ep Z_i(t)
           \quad
           t \in T .
     $$
     Let $B_i$ denote a measurable envelope of $\{ Z_i(t), t \in T \}$.
     Let $T$ be equipped with the pseudo-metric 
     $$
          d_n(t,t') = \sqrt{ \frac 1 n \sum_{i=1}^n   \left( Z_i(t) - Z_i(t') \right)^2} .
     $$
     Let $N(\epsilon,T,d_n)$ denote the covering number of $T$ under $d_n$ balls of radius $\epsilon$.
     Let
     $$
          J_n(\delta,T) = \int_0^\delta
          \sqrt{1+\log N( \epsilon \|B\|_n, T,d_n)  } d \epsilon ,
     $$
     where $$\| B \|_n = \sqrt{\frac 1 n \sum_{i=1}^n |B_i|^2}.$$ Then
     \begin{align*}
           \left\| \left\| X_n - \Ep X_n \right\|_T^* \right\|_{P,p} 
            \lesssim  
            \left\| J_n(1,T) \, \|B\|_n \right\|_{P,p} .
      \end{align*}   
\end{lemma}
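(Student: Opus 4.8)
The plan is to reproduce, conditionally on the data, the symmetrization-plus-chaining argument that underlies Theorem~2.14.1 in \citet{vanderVaartandWellner1996}, now adapted to independent but non-identically distributed summands and to a general index set $T$ equipped with the random pseudo-metric $d_n$ and random envelope $\|B\|_n$. First I would symmetrize: introducing i.i.d.\ Rademacher signs $\varepsilon_1,\dots,\varepsilon_n$ independent of the $Z_i$, the symmetrization inequality (Lemma~2.3.1 of \citet{vanderVaartandWellner1996}, applied with the convex function $x\mapsto x^p$) gives
$$\bigl\| \, \| X_n - \Ep X_n \|_T^* \, \bigr\|_{P,p} \;\lesssim\; \Bigl\| \, \bigl\| \tfrac{1}{n}\textstyle\sum_{i=1}^n \varepsilon_i Z_i \bigr\|_T^* \, \Bigr\|_{P,p}.$$
This step uses only independence and centering, not identical distribution, so it transfers verbatim from the i.i.d.\ case, exactly as in the restatement preceding Lemma~\ref{lemma:ScoreAve}.

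Next I would condition on $(Z_i)_{i=1}^n$ and treat $t\mapsto \frac1n\sum_i\varepsilon_i Z_i(t)$ as a stochastic process in the signs. Its increments $\frac1n\sum_i\varepsilon_i(Z_i(t)-Z_i(t'))$ are, by Hoeffding's inequality, conditionally sub-Gaussian with variance proxy $\frac1{n^2}\sum_i(Z_i(t)-Z_i(t'))^2=\frac1n\,d_n(t,t')^2$, so the process is sub-Gaussian for the metric $n^{-1/2}d_n$. The generic chaining (Dudley) maximal inequality for sub-Gaussian processes then bounds the conditional $L_p(\varepsilon)$-norm of the supremum by $n^{-1/2}\int_0^{\operatorname{diam}}\sqrt{1+\log N(u,T,d_n)}\,du$, where the diameter is measured in $d_n$. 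Since $|Z_i(t)-Z_i(t')|\le 2B_i$, the $d_n$-diameter of $T$ is at most $2\|B\|_n$, and the substitution $u=\epsilon\|B\|_n$ turns the integral into $\|B\|_n\int_0^2\sqrt{1+\log N(\epsilon\|B\|_n,T,d_n)}\,d\epsilon$, which is $\lesssim \|B\|_n\,J_n(1,T)$ by monotonicity of the covering number in the radius. Taking the outer expectation over $(Z_i)$ and invoking Fubini and the tower property for the $L_p$-norms then delivers the asserted bound.

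The hard part will be the moment ($L_p$) form of the chaining inequality rather than its in-expectation form: one must control $\|\sup_t|\cdot|\|_{P,p}$, which requires either running the chaining in an Orlicz ($\psi_2$) norm and converting to $L_p$, or invoking a maximal inequality that directly yields $p$-th moment bounds for suprema of sub-Gaussian processes. Running alongside this is the measurability bookkeeping --- outer expectations, the minimal measurable majorants $\|\cdot\|_T^*$, and $P$-measurability of the class --- which must be handled so that the conditioning on $(Z_i)$ and the Fubini step are legitimate; these are precisely the technical points already present in the proof of Theorem~2.14.1 in \citet{vanderVaartandWellner1996}, and they carry over unchanged because only symmetrization and sub-Gaussianity of the symmetrized process are used. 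I would finally note that, with the normalization $X_n=\frac1n\sum_i Z_i$, the chaining estimate carries an explicit factor $n^{-1/2}$, consistent with $X_n-\Ep X_n$ being of order $n^{-1/2}$.
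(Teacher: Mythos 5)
Your proposal is correct and follows essentially the same route as the paper's proof: Rademacher symmetrization of the $L_p$ norm, conditional sub-Gaussianity of the symmetrized process with respect to $d_n$, the $\psi_2$-Orlicz chaining maximal inequality (Corollary 2.2.5 in van der Vaart and Wellner, 1996), a change of variables using the envelope to reach $\|B\|_n\,J_n(1,T)$, conversion of the Orlicz norm to the $L_p$ norm, and finally integrating out $(Z_i)_{i=1}^n$. The only difference is that you retain the factor $n^{-1/2}$ from Hoeffding's inequality, giving a slightly sharper bound than the one stated, whereas the paper's proof discards it at the sub-Gaussianity step; this is harmless and, if anything, is the form actually needed in the lemma's downstream applications.
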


\begin{proof}
   The proof is analogous to the proof of Theorem 2.14.1 in
   \cite{vanderVaartandWellner1996}, p.239, with a few notational adjustments.
   
   Let
   $$
        X_n^o(t) = \frac 1 n \sum_{i=1}^n \varepsilon_i Z_i(t), 
        \quad
        t \in T ,
   $$
   denote the symmetrized version of $X_n$,
   where $\varepsilon = (\varepsilon_i)_{i=1}^n$ are independent Rademacher. 
   By Lemma 2.3.6 in  \cite{vanderVaartandWellner1996}  the $L^p(P)$
norm of   $\|X_n - \Ep X_n\|^*_T$ is bounded by  the $L^p(P)$ norm 
of $2 \|X_n^o\|^*_T$.   Let $P_\varepsilon$ denote the distribution of $\varepsilon$. 
   Then by the standard argument, conditional on $(Z_i)_{i=1}^n$,
   $X_n^o$ is sub-Gaussian with respect to $d_n$:
   $$
       \left\| X_n^o(t) - X_n^0(t') \right\|_{\Psi_2(P_\varepsilon)}
       \lesssim  
       d_n(t,t') .
   $$
   Hence by Corollary 2.2.5 in  \cite{vanderVaartandWellner1996}, we conclude
   $$
         \left\| \|X_n^o\|_T \right\|_{\Psi_2(P_\varepsilon)}
         \leq
         \int_0^{{\rm diam}(T,d_n)}
         \sqrt{1+ \log N(\epsilon,T,d_n)}
         d \epsilon .
   $$
   By a change of variables the right side is bounded by
   $$
       \|B\|_n
          \int_0^{{\rm diam}(T,d_n)/ \|B\|_n}
         \sqrt{1+ \log N(\epsilon  \|B\|_n,T,d_n)}
         d \epsilon ,
   $$
   which is further bounded by
   $$
      \|B\|_n \, J_n(1,T) .
   $$
   Every $L_p$-norm is bounded by a multiple of 
   the $\Psi_2$-Orliczs norm. Hence
   $$
        \Ep_\varepsilon
        \left\| X_n^o \right\|_T^p
           \lesssim  
        \left(  J_n(1,T) \,     \|B\|_n \right)^p,
   $$
   where $ \Ep_\varepsilon$ is the expectation conditional on $(Z_i)_{i=1}^n$.   
   Take expectations over $(Z_i)_{i=1}^n$ to obtain the lemma.
\end{proof}

Using Lemma~\ref{lemma:HelperEP} we obtain the following corollary.

\begin{corollary}
    \label{cor:Score2}
    Let Assumption~\ref{ass:baseline} hold. 
    For $y \in {\mathcal Y}$, $h \in \{1,\ldots,I+J\}$, $i \in {\bf I}$ and $j \in {\bf J}$,
    let  $b^{(n)}_{y,hij}, c^{(n)}_{y,i}$, $d^{(n)}_{y,j}$ be real numbers,
     which can depend on the sample size $n$, and on the regressors and fixed effects, but not on the outcome
        variable, and assume that 
        $ \sup_{y \in {\mathcal Y} }  \max_{h \in \{1,\ldots,I+J\}} \allowbreak \max_{i \in {\bf I}} 
          \max_{j \in {\bf I}} \max\left(  \left| b^{(n)}_{y,hij} \right| , \left| \frac{\partial b^{(n)}_{y,hij}} {\partial y} \right| \right) = O_P(1)$,
          and also that
        $ \sup_{y \in {\mathcal Y} }  \max_{i \in {\bf I}}  \max\left(  \left| c^{(n)}_{y,i} \right| , \left| \frac{\partial c^{(n)}_{y,i}} {\partial y} \right| \right) = O_P(1)$,
        and  $ \sup_{y \in {\mathcal Y} }  \max_{j \in {\bf I}}  \max\left(  \left| d^{(n)}_{y,j} \right| , \left| \frac{\partial d^{(n)}_{y,j}} {\partial y} \right| \right) = O_P(1)$.
    Then,
    \begin{align*}
         (i)&&&
           \sup_{y \in {\mathcal Y} }  \max_{h \in \{1,\ldots,I+J\}} 
        \left| \frac 1 {\sqrt{n}}   \sum_{(i,j) \in \mD}  \,  b^{(n)}_{y,hij} \,   \partial_{\pi} \ell_{y,ij} \right| = O_P \left(  n^{1/6}  \right),
        && \qquad \qquad\qquad\qquad \qquad
   \end{align*}
   and
   \begin{align*}
         (ii) &&&
         \sup_{y \in {\mathcal Y} }   \left| \frac 1 I  \sum_{i=1}^I  
                c^{(n)}_{y,i}  
              \left\{
                  \left( \frac 1 {\sqrt{|\mD_i|}} \sum_{j \in \mD_i}   \partial_{\pi} \ell_{y,ij} \right)^2
                  -  \Ep\left[ \left( \frac 1 {\sqrt{|\mD_i|}} \sum_{j \in \mD_i}   \partial_{\pi} \ell_{y,ij} \right)^2 \right]
              \right\}  \right|  = o_P(1),
            \\[10pt]
          &&& \sup_{y \in {\mathcal Y} }   \left| \frac 1 J  \sum_{j=1}^J  
                d^{(n)}_{y,j}  
              \left\{
                  \left( \frac 1 {\sqrt{|\mD_j|}} \sum_{i \in \mD_j}   \partial_{\pi} \ell_{y,ij} \right)^2
                  -  \Ep\left[ \left( \frac 1 {\sqrt{|\mD_j|}} \sum_{i \in \mD_j}   \partial_{\pi} \ell_{y,ij} \right)^2 \right]
              \right\}  \right| = o_P(1) .
    \end{align*}
\end{corollary}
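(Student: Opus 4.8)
The plan is to recognize each of the three averages as the supremum of an empirical process over a suitable index set, and to apply the maximal inequality of Lemma~\ref{lemma:HelperEP} together with the moment bounds already established in Lemma~\ref{lemma:Th2141} and Corollary~\ref{cor:ScoreFE}. In every case the summands are conditionally independent with mean zero (recall $\Ep\, \partial_\pi \ell_{y,ij} = 0$), so the centering required by Lemma~\ref{lemma:HelperEP} is automatic, and the only inputs are an envelope with enough finite moments and a bound on the uniform entropy integral $J_n(1,\cdot)$. The entropy is controlled by the fact that the indicator class $\mathcal{F} = \{\widetilde y \mapsto 1(\widetilde y \le y) : y \in \mY\}$ is a VC (bounded-variation) class, so the $y$-direction contributes only a polynomial covering number, while the finite index $h \in \{1,\dots,I+J\}$ appearing in part $(i)$ contributes at most $\log(I+J) \lesssim \log n$ to the metric entropy.

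For part $(i)$ I would take the index set $T_1 = \mY \times \{1,\dots,I+J\}$ and the per-observation process $Z_{ij}(y,h) = b^{(n)}_{y,hij}\,\partial_\pi \ell_{y,ij}$, so that the quantity of interest equals $\sup_{(y,h) \in T_1} \big| \sqrt{n}\,(X_n - \Ep X_n)(y,h)\big|$ with $X_n = n^{-1}\sum_{(i,j)\in \mD} Z_{ij}$. Since $|b^{(n)}_{y,hij}| = O_P(1)$ and $|\partial_\pi \ell_{y,ij}| \le 1$, the envelope is $O_P(1)$ and $\|B\|_n = O_P(1)$. Because for fixed $h$ the map $y \mapsto Z_{ij}(y,h)$ has uniformly bounded variation (the indicator is monotone and $b^{(n)}$ is Lipschitz in $y$ by assumption), and because there are only $I+J \le C\,n^{1/2}$ values of $h$, the covering number satisfies $\log N(\epsilon\|B\|_n, T_1, d_n) \lesssim \log(I+J) + \log(1/\epsilon)$, so that $J_n(1,T_1) \lesssim \sqrt{\log n}$. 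Lemma~\ref{lemma:HelperEP} then yields an $L^p$ bound of order $\sqrt{\log n}$, and Markov's inequality gives $O_P(\sqrt{\log n}) = o_P(n^{1/6})$, the (deliberately crude) bound claimed.

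For part $(ii)$ I fix the $i$-average (the $j$-average is identical with $I$ and $J$ swapped) and write $S_{y,i} = |\mD_i|^{-1/2}\sum_{j\in \mD_i}\partial_\pi\ell_{y,ij}$, so the relevant process is $Z_i(y) = c^{(n)}_{y,i}\,(S_{y,i}^2 - \Ep S_{y,i}^2)$, summed over the $I$ independent blocks. For each fixed $y$ this is an average of $I$ independent mean-zero terms of $O(1)$ variance (the variance of $S_{y,i}^2$ is bounded since $S_{y,i}$ is a normalized sum of bounded variables), hence $O_P(I^{-1/2})$ pointwise. To upgrade to uniformity in $y$ I would invoke Lemma~\ref{lemma:HelperEP} with index set $\mY$ and $I$ summands, so that the bound carries the empirical-process normalization $I^{-1/2}J_I(1,\mY)\|B\|_I$. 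The key nonroutine input is the envelope: by \eqref{RelatingGscore} one has $\sup_y S_{y,i}^2 = \|\mathbb{G}_{J,i}\|_{\mathcal{F}}^2$, and Lemma~\ref{lemma:Th2141} gives $\Ep\|\mathbb{G}_{J,i}\|_{\mathcal{F}}^{2q} = O(1)$ for every $q$, uniformly in $i$; hence $B_i = \sup_y|Z_i(y)|$ has all moments bounded and $\|B\|_I = O_P(1)$. Since $\mY$ indexes a single bounded-variation class, $J_I(1,\mY)$ stays controlled up to logarithmic factors, and the overall bound is $o_P(1)$ because $I \to \infty$.

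The main obstacle is exactly this envelope/entropy control for the squared process in part $(ii)$: squaring the score destroys the uniform boundedness exploited in part $(i)$ (indeed $S_{y,i}^2$ can be as large as $|\mD_i| \asymp n^{1/2}$ deterministically), so one cannot use a constant envelope and must borrow the per-block maximal moment bound $\Ep\|\mathbb{G}_{J,i}\|_{\mathcal{F}}^{2q}=O(1)$ to produce an integrable envelope, while checking that the random, growing total variation of $y \mapsto S_{y,i}^2$ does not blow up $J_I(1,\mY)$. An equivalent and cleaner route is to expand $S_{y,i}^2 = |\mD_i|^{-1}\sum_{j,j'}\partial_\pi\ell_{y,ij}\partial_\pi\ell_{y,ij'}$ into its diagonal part and a degenerate off-diagonal $U$-statistic part: the diagonal part reduces to a part-$(i)$-type average of bounded terms (of order $n^{-1/2}$ after the extra $I^{-1}$ normalization), while the off-diagonal part is mean zero with variance $O(I^{-1})$, so both vanish uniformly once combined with the same empirical-process argument.
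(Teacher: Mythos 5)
Your proposal is correct, and it rests on the same key maximal inequality (Lemma~\ref{lemma:HelperEP}) as the paper's proof, but the way you deploy it differs in part $(i)$ and fills a genuine gap in part $(ii)$. For part $(i)$, the paper fixes $h$, applies Lemma~\ref{lemma:HelperEP} with $T=\mY$ and a constant envelope to get $\max_h \Ep A_h^4 = O(1)$ for $A_h := \sup_{y}\big| n^{-1/2}\sum_{(i,j)\in\mD} b^{(n)}_{y,hij}\,\partial_\pi\ell_{y,ij}\big|$, and then handles the $h$-index by the crude union bound $\Ep\big(\max_h A_h\big)^4 \le (I+J)\max_h\Ep A_h^4 = O(n^{1/2})$, i.e.\ $\max_h A_h = O_P(n^{1/8})$; you instead fold $h$ into the index set and pay only an additive $\log(I+J)$ in the entropy, obtaining the sharper $O_P(\sqrt{\log n})$. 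Both arguments are valid and both beat the stated $O_P(n^{1/6})$; yours avoids the moment counting (the paper's union bound needs $p\ge 3$ moments to stay below $n^{1/6}$) at the cost of having to justify the product-class entropy bound, which is routine for this VC-plus-finite-index structure. For part $(ii)$, the paper's proof is a one-line appeal to Lemma~\ref{lemma:HelperEP} with the per-$i$ block processes and the phrase ``verification of the conditions of the lemma gives the desired result,'' so what you supply is exactly the suppressed content: the recognition that a constant envelope fails because $\sup_y S_{y,i}^2$ (with $S_{y,i}:=|\mD_i|^{-1/2}\sum_{j\in\mD_i}\partial_\pi\ell_{y,ij}$) can be of order $|\mD_i|\asymp n^{1/2}$, the fix via $\sup_y S_{y,i}^2 = \|\mathbb{G}_{J,i}\|^2_{\mathcal F}$ combined with the INID moment bound of Lemma~\ref{lemma:Th2141} holding uniformly in $i$, and the remark that the random, polynomially growing total variation of $y\mapsto S_{y,i}^2$ enters the entropy integral only logarithmically. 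Your alternative diagonal/off-diagonal (degenerate $U$-statistic) decomposition is also sound --- the diagonal part is a part-$(i)$-type average of bounded centered terms carrying an extra $n^{-1/2}$, and the off-diagonal part has pointwise variance $O(I^{-1})$ --- but note that it does not fully dispense with the envelope issue, since uniformity in $y$ for the off-diagonal block still requires the same per-block moment bound; it merely isolates the bounded-envelope piece cleanly.
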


%
\begin{proof}  For part (i) we apply Lemma~\ref{lemma:HelperEP}
    with $T= \mY$ and
    $$ Z_i(y)  =  b^{(n)}_{y,hij} \,   \partial_{\pi} \ell_{y,ij} ,
    $$
    for given $h \in \{1,\ldots,I+J\}$, we can use constant envelope $B_i =  B $
    and the bound $J_n(1,T) \leq C$, which can be established using standard arguments,   we find that
    $A_h =   \sup_{y \in {\mathcal Y} } 
        \left| \frac 1 {\sqrt{n}}   \sum_{(i,j) \in \mD}  \,  b^{(n)}_{y,hij} \,   \partial_{\pi} \ell_{y,ij} \right| $
        satisfies
        $ \max_{h \in \{1,\ldots,I+J\}}  \Ep A_h^4 = O_P(1)$.
    We therefore have
    \begin{align*}
      \Ep     \left(  \max_{h \in \{1,\ldots,I+J\}}  A_h \right)^4  
      \leq   \Ep    \left(  \sum_{h=1}^{I+J}  A_h^4 \right)  
      \leq  (I+J) \max_{h}  \Ep    \left(    A_h^4 \right)
      =O_P( n^{1/2}) ,
    \end{align*}    
    and therefore $  \max_{h \in \{1,\ldots,I+J\}}  A_h = O_P(n^{1/6})$ as desired.
    
    For the first result in part (ii), we apply Lemma~\ref{lemma:HelperEP} with $T= \mY$
    and
    $$
    Z_i(y) :=  c^{(n)}_{y,i}  \left[ \left( \frac 1 {\sqrt{|\mD_i|}} \sum_{j \in \mD_i}   \partial_{\pi} \ell_{y,ij} \right)^2-   \Ep \left( \frac 1 {\sqrt{|\mD_j|}} \sum_{i \in \mD_j}   \partial_{\pi} \ell_{y,ij} \right)^2 \right].
    $$
    Verification of the conditions of the lemma  gives the desired result. 
    The  second result in part (ii) follows analogously.
\end{proof}

\subsection*{FCLT for weighted sample averages over the score $\partial_{\pi} \ell_{y,ij}$}
The following theorem will be used in the proof of part $(ii)$ and $(iii)$ of Lemma~\ref{lemma:ScoreAve}.

\begin{lemma}[\bf Theorem 2.11.11 in van der Vaart and Wellner, 1996]
    \label{lemma:Th21111}
    For each $n$, let $Z_{n1}, \ldots, Z_{n,m_n}$ be independent stochastic processes indexed by an arbitrary 
    index set ${\mathcal F}$. Suppose that there exists a Gaussian-dominated semimetric $\rho$ on ${\mathcal F}$ 
    such that
    \begin{align*}
        \text{(i)} \qquad \quad &\sum_{\ell=1}^{m_n} 
        \Ep^* \left[ \| Z_{n\ell} \|_{\mathcal F}  \; \; 1 \left\{ \| Z_{n\ell} \|_{\mathcal F} > \eta \right\} \right]
        \rightarrow 0 ,
        \qquad
        \text{for every $\eta>0$,}
        \\
         \text{(ii)} \qquad \quad  & \sum_{\ell=1}^{m_n} \Ep\left( Z_{n\ell}(f) - Z_{n\ell}(g) \right)^2 \leq \rho^2(f,g),
        \qquad
        \text{for every $f,g \in {\mathcal F}$,}
        \\
         \text{(iii)} \qquad \quad  &\sup_{t>0} \, \sum_{\ell=1}^{m_n} t^2 \, \mathbb{P}^*\left( \sup_{f,g \in {\mathcal B}(\varepsilon)} 
             \left| Z_{n\ell}(f) - Z_{n\ell}(g) \right| > t \right) \leq \varepsilon^2 ,
    \end{align*}
    for every $\rho$-ball ${\mathcal B}(\varepsilon) \subset {\mathcal F}$ of radius less than $\varepsilon$ and for every $n$.
    Then the sequence $ \sum_{\ell=1}^{m_n}  \left( Z_{\ell,n} - \Ep\, Z_{\ell,n} \right)$ is asymptotically tight
    in $\ell^\infty({\mathcal F})$. It converges in distribution provided it converges marginally.
\end{lemma}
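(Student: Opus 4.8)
The plan is to first observe that this lemma is a verbatim restatement of Theorem~2.11.11 in \cite{vanderVaartandWellner1996}, so the honest route is simply to cite that reference; what follows is the structure of its proof as I would reconstruct it. The underlying strategy is the standard two-step characterization of weak convergence in $\ell^\infty(\mF)$: a sequence converges in distribution to a tight limit iff its finite-dimensional distributions converge and it is asymptotically tight. Since the statement only asserts asymptotic tightness and then reduces convergence in distribution to marginal (finite-dimensional) convergence, the entire burden is on establishing tightness. For context, the finite-dimensional convergence that the lemma defers to the user follows from the multivariate Lindeberg--Feller CLT: for fixed $f_1,\dots,f_k$ the vectors $(Z_{n\ell}(f_1),\dots,Z_{n\ell}(f_k))$ are independent across $\ell$, condition (ii) controls their covariances, and condition (i) is exactly the Lindeberg negligibility condition.

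To prove asymptotic tightness I would invoke the characterization (Theorem~1.5.7 in \cite{vanderVaartandWellner1996}) that $X_n := \sum_{\ell} (Z_{n\ell} - \Ep Z_{n\ell})$ is asymptotically tight in $\ell^\infty(\mF)$ precisely when its marginals are asymptotically tight and $(\mF,\rho)$ is totally bounded with $X_n$ asymptotically $\rho$-equicontinuous, i.e.
$$
\lim_{\delta \downarrow 0} \, \limsup_{n} \, \bP^*\!\left( \sup_{\rho(f,g) < \delta} \left| X_n(f) - X_n(g) \right| > \varepsilon \right) = 0 , \qquad \text{for every } \varepsilon > 0 .
$$
Marginal tightness is immediate because each $X_n(f)$ is a real sum of independent terms with variance bounded through (ii). Total boundedness of $(\mF,\rho)$ is furnished by the hypothesis that $\rho$ is \emph{Gaussian-dominated}: a dominating tight centered Gaussian process exists whose intrinsic semimetric dominates $\rho$, and tightness of that Gaussian limit forces $(\mF,\rho)$ to be totally bounded. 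Thus the whole argument collapses to verifying the equicontinuity display above.

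The crux, and the main obstacle, is the chaining argument that delivers asymptotic equicontinuity from conditions (i)--(iii). I would first symmetrize $X_n$ with Rademacher multipliers (the symmetrization inequality, Lemma~2.3.1 in \cite{vanderVaartandWellner1996}), reducing the problem to a process that is conditionally sub-Gaussian in its increments. Condition (i) is then used to truncate the summands at a level $\eta$, so that the contribution of the large jumps is asymptotically negligible and the truncated increments admit the sub-Gaussian bound. The remaining bounded part is controlled by generic chaining over $\rho$-balls: condition (ii) supplies the $L_2$ control of increments needed to build the chain, since it bounds $\sum_\ell \Ep(Z_{n\ell}(f)-Z_{n\ell}(g))^2$ by $\rho^2(f,g)$, while condition (iii) supplies the crucial tail bound on the supremum of increments over each small $\rho$-ball $\mathcal{B}(\varepsilon)$, which is what makes the entropy integral converge uniformly in $n$. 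The delicate point is that the array is triangular and the summands are independent but \emph{not} identically distributed, so the maximal inequalities and the chaining must be carried out with constants uniform in $n$; condition (iii), being a L\'evy-type ball-tail bound rather than a moment bound, is precisely what is engineered to survive this non-identically-distributed setting.

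Finally, combining asymptotic tightness with the hypothesized marginal convergence yields weak convergence of $X_n$ in $\ell^\infty(\mF)$ to the tight Gaussian limit identified by the finite-dimensional limits, completing the argument.
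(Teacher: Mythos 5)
Your proposal matches the paper's treatment: the paper gives no independent proof of this lemma, but simply restates Theorem 2.11.11 of van der Vaart and Wellner (1996) and relies on that reference, exactly as you recommend. Your reconstruction of the underlying argument (fidi convergence plus asymptotic $\rho$-equicontinuity via symmetrization, truncation under condition (i), and chaining using conditions (ii)--(iii)) is a faithful sketch of the cited proof, so nothing further is required.
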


A semi-metric $\rho$ is Gaussian-dominated if it is bounded above by a Gaussian semi-metric. Any semi-metric such that  $\int_0^\infty \sqrt{\log N(\epsilon, \mathcal{F}, \rho)} d \epsilon < \infty$ is Gaussian dominated.  

\section*{\bf Proof of Lemma~\ref{lemma:ScoreAve}}

\begin{proof}[\bf Proof of Lemma~\ref{lemma:ScoreAve}, Part $(i)$]
     We have
     \begin{align}
            \left( Q^{(2)}_y s_y  \right)_{ij} 
            &=  |\mD_i|^{-1/2}  \frac{ \left( \Lambda^{(1)}_{y,ij} \right)^{1/2}} {|\mD_i|^{-1} \sum_{j' \in \mD_i} \Lambda^{(1)}_{y,ij'}} 
           \left[ |\mD_i|^{-1/2}  \sum_{j' \in \mD_i} \partial_{\pi} \ell_{y,ij'} \right]
            ,
     \nonumber       \\
             \left( Q^{(3)}_y s_y  \right)_{ij} 
            &=   |\mD_j|^{-1/2}  \frac{ \left( \Lambda^{(1)}_{y,ij} \right)^{1/2}} {  |\mD_j|^{-1} \sum_{i' \in \mD_j} \Lambda^{(1)}_{y,i'j}} 
           \left[   |\mD_j|^{-1/2}  \sum_{i' \in \mD_j} \partial_{\pi} \ell_{y,i'j} \right] .
           \label{ExpressQ23s}
     \end{align}
    With $\max_i |\mD_i|^{-1/2} = O_P(n^{-1/4})$,  $\max_j |\mD_j|^{-1/2} = O_P(n^{-1/4})$, 
     \begin{align*}
           \sup_{y \in  \mY} \max_{(i,j) \in \mD}  \left|  \frac{ \left( \Lambda^{(1)}_{y,ij} \right)^{1/2}} {|\mD_i|^{-1} \sum_{j' \in \mD_i} \Lambda^{(1)}_{y,ij'}}  \right| &= O_P(1) ,
           &
            \sup_{y \in  \mY} \max_{(i,j) \in \mD}  \left|  \frac{ \left( \Lambda^{(1)}_{y,ij} \right)^{1/2}} {  |\mD_j|^{-1} \sum_{i' \in \mD_j} \Lambda^{(1)}_{y,i'j}}   \right| &= O_P(1) ,
     \end{align*}
     we obtain by Corollary~\ref{cor:ScoreFE} that
     \begin{align}
           \sup_{y \in  \mY} \max_{(i,j) \in \mD}  \left| \left( Q^{(2)}_y s_y    \right)_{ij}  \right| &= o_P(n^{-1/6})
           &
             \sup_{y \in  \mY} \max_{(i,j) \in \mD}  \left| \left( Q^{(3)}_y s_y    \right)_{ij}  \right| &= o_P(n^{-1/6}) .
          \label{QsOrder1}   
     \end{align}
     Next, 
     \begin{align*}
            \left( Q^{(1)}_y s_y  \right)_{ij} 
            &= \left( \Lambda^{(1)}_{y,ij}   \right)^{1/2}
               \widetilde x'_{y,ij}  \, W_y^{-1}   
              \left[ \frac 1 n  \sum_{(i',j')\in \mD} 
                    \widetilde x_{y,i'j'} \, (\partial_{\pi} \ell_{y,i'j'}) \right] ,
          \\
             \left( Q^{({\text{rem}})}_y s_y  \right)_{ij}      
             &=   \left\{   \left( \Lambda^{(1)}_{y} \right)^{1/2} 
    \left[w^{(2)},w^{(3)} \right]   \left(  {\mathcal H}_y^\dagger - {\mathcal D}_y^{-1} \right)  \left[w^{(2)},w^{(3)} \right]'  \left( \Lambda^{(1)}_{y} \right)^{1/2} 
      s_y \right\}_{ij}
      \\
        &= \left( \Lambda^{(1)}_{y,ij} \right)^{1/2} 
            \sum_{(i',j') \in \mD}  
            \left[ {\mathcal G}^{(I \times I)}_{y,ii'}
               +  {\mathcal G}^{(I \times J)}_{y,ij'}
               +  {\mathcal G}^{(J \times I)}_{y,ji'}
               +  {\mathcal G}^{(J \times J)}_{y,jj'}
            \right]
            \partial_{\pi} \ell_{y,i'j'} 
     \end{align*}
     where ${\mathcal H}_y$ and ${\mathcal D}_y$ are $(I+J) \times (I+J)$ matrices introduced in the proof of Lemma~\ref{lemma:PropQ},
     and ${\mathcal G}_y :=   {\mathcal H}_y^\dagger - {\mathcal D}_y^{-1}  $,
     and ${\mathcal G}_y^{(I \times I)}$, $ {\mathcal G}_y^{(I \times J)}$, $ {\mathcal G}_y^{(J \times I)}$, $ {\mathcal G}_y^{(J \times J)}$
     denotes the various blocks of this $(I+J) \times (I+J)$ matrix.
     Remember that according to
     \eqref{HDbound} 
     all the elements of     ${\mathcal G}_y$ are uniformly bounded of order $n^{-1}$.
     Thus, by applying Corollary~\ref{cor:Score2}$(i)$ with 
     $b^{(n)}_{y,hij}$ equal to  $\widetilde x^h_{y,ij}$, for $h=1,\ldots,d_x$,
     and also with $b^{(n)}_{y,hij}$ equal to $n \left( {\mathcal G}^{(I \times I)}_{y,hi}  +  {\mathcal G}^{(I \times J)}_{y,hj} \right)$,
     for $h=1,\ldots,I$,
     and equal to $n \left( {\mathcal G}^{(J \times I)}_{y,hi}  +  {\mathcal G}^{(J \times J)}_{y,hj} \right)$,
     for $h=1,\ldots,J$, we find that
     \begin{align}
           \sup_{y \in  \mY} \max_{(i,j) \in \mD}  \left| \left( Q^{(1)}_y s_y    \right)_{ij}  \right| &= O_P(n^{-1/2+1/6})
           &
             \sup_{y \in  \mY} \max_{(i,j) \in \mD}  \left| \left( Q^{({\text{rem}})}_y s_y    \right)_{ij}  \right| &= O_P(n^{-1/2+1/6}) .
          \label{QsOrder2}   
     \end{align}
     Combining the above we find that
     $Q_y s_y = Q^{(1)}_y s_y + Q^{(2)}_y s_y + Q^{(3)}_y s_y + Q^{({\text{rem}})}_y s_y$
    indeed satisfies
    $\sup_{y \in  \mY} \max_{(i,j) \in \mD}  \left| \left( Q_y s_y  \right)_{ij}  \right| = o_P(n^{-1/6})$.
\end{proof}

\begin{proof}[\bf Proof of Lemma~\ref{lemma:ScoreAve}, Part $(ii)$ and $(iii)$]
Here, we use Theorem 2.11.11 in \cite{vanderVaartandWellner1996}, which is restated above
as Lemma~\ref{lemma:Th21111}.
To relate this to our model we define
\begin{align}
     Z_{\ell,n}(y,k) = \frac {\left[ W_y^{-1}  \;  \widetilde x_{y,i_\ell j_\ell} \right]_k} {\sqrt{n}}  \; 1\{y_{i_\ell j_\ell} \leq y \}  ,
     \label{ProcessEx1}
\end{align}
where $\ell \in \{1,\ldots,n\}$, and $i_\ell \in {\bf I}$, $j_\ell \in {\bf J}$ are chosen such that $B = \{ (i_\ell, j_\ell) \; : \; \ell  = 1,\ldots,n\}$.
$Z_{\ell,n}$ defines a stochastic process with index set ${\mathcal F} = {\mathcal Y} \times \{1,\ldots,\dim \beta\}$.
For $f=(y,k) \in {\mathcal F}$ we write $Z_{\ell,n}(f)$.
Part $(ii)$ of Lemma~\ref{lemma:ScoreAve} can then be written as
\begin{align*}
    \sum_{\ell=1}^n  \left( Z_{\ell,n} - \Ep\, Z_{\ell,n} \right)   \rightsquigarrow  {\mathcal Z}^{(\beta)} ,
\end{align*}
where the limiting process ${\mathcal Z}^{(\beta)}$ is also indexed by $f \in {\mathcal F}$. We also define
the following metric on ${\mathcal F}$,
\begin{align}
    \label{DefMetric}
    \rho(f_1,f_2) := C \left[  \left| y_1 - y_2 \right|^{1/2} +  1( k_1 \neq k_2 ) \right] ,
\end{align}
for some sufficiently large constant $C>0$.
For a general index set ${\mathcal F}$,
a sufficient condition for a metric $\rho$ on ${\mathcal F}$ to be ``Gaussian dominated'' 
is given by (see \citealt{vanderVaartandWellner1996}, p.212)
\begin{align}
    \int_0^\infty \sqrt{ \log N(\varepsilon, {\mathcal F}, \rho)  } \, d \varepsilon \; < \; \infty ,
    \label{ConditionGD}
\end{align}
where $N(\varepsilon, {\mathcal F}, \rho)$ denotes the covering number. 

$Z_{\ell,n}$
is a triangular array, because $W_y$ and $ \widetilde x_{y,ij}$
both implicitly depend on $n$, implying that $Z_{\ell,n_1} \neq Z_{\ell,n_2}$ for $n_1 \neq n_2$. 
Remember that the probability measure we use throughout is conditional on $x$, $\alpha^0$, $\gamma^0$,
implying that the $Z_{\ell,n}$ are independent (but not identically distributed) across $\ell$, according to our assumptions.

    Using the model and the definition \eqref{ProcessEx1} we have
    \begin{align*}
    W_y^{-1}
       \left[ - \frac 1 {\sqrt{n}}  \sum_{(i,j) \in \mD}   \partial_{\pi} \ell_{y,ij} \;  \widetilde x_{y,ij} \right]
       &=  \sum_{\ell=1}^n  \left( Z_{\ell,n} - \Ep\, Z_{\ell,n} \right) =: Z_n .
    \end{align*}
    Using the Lyapunov CLT it is easy to verify that all the finite dimensional
     marginals $(Z_n(f_1), Z_n(f_2),  \allowbreak \ldots, Z_n(f_p))$
    of the stochastic process $Z_n$ converge weakly to a zero mean Gaussian limit process 
    $({\mathcal Z}^{(\beta)}(f_1), {\mathcal Z}^{(\beta)}(f_2),  \allowbreak \ldots, {\mathcal Z}^{(\beta)}(f_p))$. It is also easy to show that the second moments
    of the limit process are given by $\Ep {\mathcal Z}^{(\beta)}(f_1) {\mathcal Z}^{(\beta)}(f_2) = 
    \left[ \overline W_{y_1}^{-1} \;  \overline V_{y_1,y_2} \;  \overline W_{y_2}^{-1} \right]_{k_1 k_2}$.
    
    In order to conclude that the process $Z_n$ is weakly convergent we also need to show that $Z_n$ is tight.
    For this we employ  Lemma~\ref{lemma:Th21111} above with  $m_n = n$
    and metric $\rho$ given in \eqref{DefMetric}.
    This $\rho$ is Gaussian dominated on ${\mathcal F}$, because  we have
    \begin{align*}
        \log N(\varepsilon, {\mathcal F}, \rho) \lesssim
        \left\{
        \begin{array}{ll}
              \log( K/ \varepsilon ),
              &
              \text{for} \; 0<\varepsilon<K ,
              \\
              0,
              &
              \text{for} \;  \varepsilon \geq K ,
        \end{array}
        \right.
    \end{align*}
    for some constant $K>0$, implying that \eqref{ConditionGD} is satisfied.
    
    To verify condition (i) of Lemma~\ref{lemma:Th21111}, we calculate 
    \begin{align*}
        \sum_{\ell=1}^{n}    \Ep^* \left[ \| Z_{n\ell} \|_{\mathcal F} \; \; 1 \left\{ \| Z_{n\ell} \|_{\mathcal F} > \eta \right\} \right]
        & \leq   n \, \max_{\ell}   \Ep^* \left[ \| Z_{n\ell} \|_{\mathcal F}  \; \; 1 \left\{ \| Z_{n\ell} \|_{\mathcal F} > \eta \right\} \right]
        \\
        & \leq   n \, \max_{\ell}   \Ep^* \left[ \frac{  \left\|  Z_{n\ell}   \right\|^2_{\mathcal F} } {\eta}    \; \; 1  \left\{ \| Z_{n\ell} \|_{\mathcal F} > \eta \right\} \right]
        \\
        &\leq    \max_{i,j}   \Ep \left[ \frac{ \sup_y \left\|  W_y^{-1}  \;  \widetilde x_{y,ij}   \right\|_\infty^2  } {\eta}   
         \; \; 1  \left\{ \frac{ \sup_y \| W_y^{-1}  \;  \widetilde x_{y,ij} \|_\infty} {\sqrt{n}}  > \eta  \right\} \right] ,
     \\
       & \rightarrow 0  .
    \end{align*}
    where for the second inequality we multiplied with $ \| Z_{n\ell} \|_{\mathcal F} / \eta$ inside the expectation, which is
    larger than one for $ \| Z_{n\ell} \|_{\mathcal F} > \eta$; for the third inequality we used that     
    $ \| Z_{n\ell} \|_{\mathcal F} \leq \sup_y \| W_y^{-1}  \;  \widetilde x_{y,i_\ell j_\ell} \|_\infty / \sqrt{n}$;
    and for the final conclusion we used that $ \sup_{i,j} E \sup_{y} \left\|  W_y^{-1}  \;  \widetilde x_{y,ij}   \right\|^{2+\delta}$ is uniformly bounded.
    
    Next, for $y_1 \leq y_2$ we have
    \begin{align}
          \sqrt{n} \left| Z_{n\ell}(f_1) - Z_{n\ell}(f_2) \right| 
           &= \left|      [W_{y_1}^{-1}  \;  \widetilde x_{y_1,i(\ell)j(\ell)}]_{k_1}   \; 1\{y_{i(\ell)j(\ell)} \leq y_1 \} 
                                                             -    [W_{y_2}^{-1}  \;  \widetilde x_{y_2,i(\ell)j(\ell)}]_{k_2}   \; 1\{y_{i(\ell)j(\ell)} \leq y_2 \} \right|
          \nonumber    \\
           &\leq \left|   [W_{y_2}^{-1}  \;  \widetilde x_{y_2,i(\ell)j(\ell)}]_{k_2} \right|   1\{y_1 < y_{i(\ell)j(\ell)} \leq y_2 \}    
              \nonumber \\ & \qquad \qquad \qquad \qquad          
                             +  \left|       [W_{y_1}^{-1}  \;  \widetilde x_{y_1,i(\ell)j(\ell)}  ]_{k_1}
                                                             -    [W_{y_2}^{-1}  \;  \widetilde x_{y_2,i(\ell)j(\ell)} ]_{k_2} \right|   
           \nonumber    \\
           & \lesssim    1\{y_1 < y_{i(\ell)j(\ell)} \leq y_2 \}        
              \nonumber \\ & \qquad \qquad        
                     +   \left\|  W_{y_1}^{-1}  \;  \widetilde x_{y_1,i(\ell)j(\ell)} -    W_{y_2}^{-1}  \;  \widetilde x_{y_2,i(\ell)j(\ell)}         \right\|_{\infty}     
                     +      1( k_1 \neq k_2 ) 
         \nonumber   \\                   
           & \lesssim   \left| 1\{  y_{i(\ell)j(\ell)} \leq y_1 \}  -  1\{  y_{i(\ell)j(\ell)} \leq y_2 \}      \right|  
                     +   \left|  y_1 - y_2    \right| 
                     +      1( k_1 \neq k_2 ) .
            \label{BndZdiff}         
    \end{align}
    where we used 
    uniform boundedness of $W_y^{-1}  \;  \widetilde x_{y,ij} $
    and of its derivative wrt $y$.
    The final result in \eqref{BndZdiff} is written such that the bound is also applicable for $y_1>y_2$.
    
    Using the bound \eqref{BndZdiff} we now verify
  condition (ii) of Lemma~\ref{lemma:Th21111},
  \begin{align*}
        \sum_{\ell=1}^{n} \Ep\left( Z_{n\ell}(f_1) - Z_{n\ell}(f_2) \right)^2 
        &\leq     n \, \max_{\ell}  \Ep\left( Z_{n\ell}(f_1) - Z_{n\ell}(f_2) \right)^2 
     \\
        &=    \max_{\ell}  \Ep\left[ \sqrt{n} \left( Z_{n\ell}(f_1) - Z_{n\ell}(f_2) \right) \right]^2 
     \\
            & \lesssim \max_{i,j}  \Ep\left\{    \left| 1\{  y_{ij} \leq y_1 \}  -  1\{  y_{ij} \leq y_2 \}      \right|      
                     +   \left|  y_1 - y_2    \right| 
                     +      1( k_1 \neq k_2 ) \right\}^2
  \\
           &\lesssim       \max_{i,j}  \Ep   \left| 1\{  y_{ij} \leq y_1 \}  -  1\{  y_{ij} \leq y_2 \}      \right|^2  
               +    \left|  y_1 - y_2    \right|^2 +      \left[ 1( k_1 \neq k_2 )   \right]^2        
     \\
          & \lesssim    \max_{i,j}  
                  \left|  \Lambda(\pi^0_{y_2,ij} ) - \Lambda(\pi^0_{y_1,ij} )  \right|       
                +   \left|  y_1 - y_2    \right|^2 +   1( k_1 \neq k_2 )                                                                                                                                
     \\   
        &  \lesssim      \left|  y_1 - y_2    \right| +   \left|  y_1 - y_2    \right|^2 +   1( k_1 \neq k_2 )         
     \\
         & \lesssim   \left[  \left| y_1 - y_2 \right|^{1/2} +  1( k_1 \neq k_2 ) \right]^2.
   \end{align*}
   where, we used that $ \Ep\left|  1\{  y_{ij} \leq y_1 \}  -  1\{  y_{ij} \leq y_2 \}  \right| = 
   \left| \Lambda(\pi^0_{y_2,ij} ) - \Lambda(\pi^0_{y_1,ij} ) \right| \lesssim  \left|  y_1 - y_2    \right|$;
   and we also used that ${\mathcal Y}$ is bounded, implying that $ \left|  y_1 - y_2    \right|^2 \lesssim    \left|  y_1 - y_2    \right|$.
    Thus,  condition (ii) of Lemma~\ref{lemma:Th21111} holds  for sufficiently large $C$ in the definition of $\rho$ in \eqref{DefMetric}.

   To verify condition (iii) of Lemma~\ref{lemma:Th21111}, let $C_1>0$ be the omitted constant that makes
   the result in \eqref{BndZdiff} a regular inequality. We then have
   \begin{align*}
      &  \mathbb{P}^*\left( \sup_{f_1,f_2 \in {\mathcal B}(\varepsilon)} 
             \left| Z_{n\ell}(f_1) - Z_{n\ell}(f_2) \right| > \frac{t} {\sqrt{n}} \right) 
      \\       
        &\leq      
            \mathbb{P}^*\left( \sup_{f_1,f_2 \in {\mathcal B}(\varepsilon)}  
             C_1 \left[   \left| 1\{  y_{i(\ell)j(\ell)} \leq y_1 \}  -  1\{  y_{i(\ell)j(\ell)} \leq y_2 \}      \right|     
                     +   \left|  y_1 - y_2    \right| 
                     +      1( k_1 \neq k_2 )  \right]  > t \right) 
   \\
        &\leq     \mathbb{P}^*\Bigg(
     \sup_{f_1,f_2 \in {\mathcal B}(\varepsilon)} 
           \left| 1\{  y_{i(\ell)j(\ell)} \leq y_1 \}  -  1\{  y_{i(\ell)j(\ell)} \leq y_2 \}      \right|     
   \\ & \qquad  \qquad  \qquad                       
                     +   \sup_{f_1,f_2 \in {\mathcal B}(\varepsilon)}   \left|  y_1 - y_2    \right| 
                     +   \sup_{f_1,f_2 \in {\mathcal B}(\varepsilon)}   1( k_1 \neq k_2 )    > \frac{t} {C_1} \Bigg)   
    \\
         &\leq    \mathbb{P}^*\left(
     \sup_{f_1,f_2 \in {\mathcal B}(\varepsilon)} 
          \left| 1\{  y_{i(\ell)j(\ell)} \leq y_1 \}  -  1\{  y_{i(\ell)j(\ell)} \leq y_2 \}      \right|          
                      > \frac{t} {3 \, C_1} \right)   
   \\ & \qquad                      
            +      
                  \mathbb{P}^*\left(
     \sup_{f_1,f_2 \in {\mathcal B}(\varepsilon)}  \left|  y_1 - y_2    \right| 
                    > \frac{t} {3 \, C_1} \right)           
            + 
                 \mathbb{P}^*\left(
       \sup_{f_1,f_2 \in {\mathcal B}(\varepsilon)}   1( k_1 \neq k_2 )    > \frac{t} {3 \, C_1} \right)      .                               
   \end{align*}
   Any given $\rho$-ball  ${\mathcal B}(\varepsilon)$ of radius less then $\varepsilon$
   also corresponds to a given ball in ${\mathcal Y}$ of radius less than $(\varepsilon/C)^2$.
   The event $\left[ \sup_{f_1,f_2 \in {\mathcal B}(\varepsilon)} 
          \left| 1\{  y_{i(\ell)j(\ell)} \leq y_1 \}  -  1\{  y_{i(\ell)j(\ell)} \leq y_2 \}      \right|          
                      > \frac{t} {3 \, C_1} \right]$
     can only occurs if $ \frac{t} {3 \, C_1}\leq 1$
     and if  $ y_{i(\ell)j(\ell)}$ is realized in that particular ball in ${\mathcal Y}$ of radius less than $(\varepsilon/C)^2$.
     Since our assumptions guarantee that the pdf of $ y_{i(\ell)j(\ell)}$ is uniformly bounded from below
     by a constant $C_2>0$ we thus find that
     \begin{align*}
            \mathbb{P}^*\left(
     \sup_{f_1,f_2 \in {\mathcal B}(\varepsilon)} 
          \left| 1\{  y_{i(\ell)j(\ell)} \leq y_1 \}  -  1\{  y_{i(\ell)j(\ell)} \leq y_2 \}      \right|          
                      > \frac{t} {3 \, C_1} \right)
              &\leq    2 \, C_2     \left( \frac{ \varepsilon} C \right)^2   
                     \;  1\left(   \frac{t} {3 \, C_1} \leq 1 \right) .
     \end{align*}
     Similarly we find
     \begin{align*}
            \mathbb{P}^*\left(
     \sup_{f_1,f_2 \in {\mathcal B}(\varepsilon)}  \left|  y_1 - y_2    \right| 
                    > \frac{t} {3 \, C_1} \right)  
                    &\leq  1\left(   \frac{t} {3 \, C_1} \leq 2   \left( \frac{ \varepsilon} C \right)^2 \right) ,
           \\      
           \mathbb{P}^*\left(
       \sup_{f_1,f_2 \in {\mathcal B}(\varepsilon)}   1( k_1 \neq k_2 )    > \frac{t} {3 \, C_1} \right)     
       &\leq   1\left(   \frac{t} {3 \, C_1} \leq 1 \; \;  \& \; \;   C \leq \varepsilon \right).
     \end{align*}
    We thus calculate   
   \begin{align*}
      &  \sup_{t>0} \, \sum_{\ell=1}^{n} t^2 \, \mathbb{P}^*\left( \sup_{f_1,f_2 \in {\mathcal B}(\varepsilon)} 
             \left| Z_{n\ell}(f_1) - Z_{n\ell}(f_2) \right| > t \right) 
     \\  
        &\leq    \sup_{t>0} \, \max_{\ell} \, n \, t^2 \, \mathbb{P}^*\left( \sup_{f_1,f_2 \in {\mathcal B}(\varepsilon)} 
             \left| Z_{n\ell}(f_1) - Z_{n\ell}(f_2) \right| > t \right) 
     \\
        &=    \sup_{t>0} \, \max_{\ell} \, t^2 \,   \mathbb{P}^*\left( \sup_{f_1,f_2 \in {\mathcal B}(\varepsilon)} 
             \left| Z_{n\ell}(f_1) - Z_{n\ell}(f_2) \right| > \frac{t} {\sqrt{n}} \right) 
    \\
        &\leq   \sup_{t>0} \, t^2 \,
         \left\{      2 \, C_2     \left( \frac{ \varepsilon} C \right)^2  
                     \;  1\left(   \frac{t} {3 \, C_1} \leq 1 \right)      
                  +     1\left(   \frac{t} {3 \, C_1} \leq 2   \left( \frac{ \varepsilon} C \right)^2 \right) 
                  + 1\left(   \frac{t} {3 \, C_1} \leq 1 \; \;  \& \; \;   C \leq \varepsilon \right) 
               \right\}
       \\
          &\leq \varepsilon^2           ,
   \end{align*}
   for sufficiently large choice of $C$. In the last step we also use that ${\mathcal Y}$ is bounded, which together with 
   ${\mathcal B}(\varepsilon) \subset {\mathcal F}$ implies that the possible values of $\varepsilon$ are bounded,
   so that we can always choose $C$ sufficiently large to guarantee that 
   $ 1\left(   \frac{t} {3 \, C_1} \leq 1 \; \;  \& \; \;   C \leq \varepsilon \right)  = 0$.
   
   Thus, we can apply Lemma~\ref{lemma:Th21111} to find that   
  $ \sum_{\ell=1}^n  \left( Z_{\ell,n} - \Ep\, Z_{\ell,n} \right)   \rightsquigarrow  {\mathcal Z}^{(\beta)}$,
  where ${\mathcal Z}^{(\beta)}$ is a tight zero mean Gaussian process with second moments given above.
  
  The proof 
  of part $(iii)$ of Lemma~\ref{lemma:ScoreAve}
         is analogous.         
\end{proof}

\begin{proof}[\bf Proof of Lemma~\ref{lemma:ScoreAve}, Part $(iv)$ and $(v)$]
     Decomposing  $Q_y s_y = Q^{(1)}_y s_y + Q^{(2)}_y s_y + Q^{(3)}_y s_y + Q^{({\text{rem}})}_y s_y$
     and using \eqref{QsOrder1}
     and \eqref{QsOrder2} we find that
     \begin{align*}
            \sup_{y \in \mY} \max_{(i,j) \in \mD} 
            \left\{  \left[ \left( Q_y s_y  \right)_{ij} \right]^2
           - \left[  \left( Q^{(2)}_y s_y  \right)_{ij}  + \left( Q^{(3)}_y s_y  \right)_{ij}  \right]^2
           \right\} = o_P(n^{-1}) ,
     \end{align*}
     and therefore 
     \begin{align*}
          - \frac 1 2 
        W_y^{-1} \frac 1 {\sqrt{n}}      \sum_{(i,j) \in \mD}   \; 
           \widetilde x_{y,ij}          
           \left( \Lambda^{(1)}_{y,ij} \right)^{-1}
      \Lambda^{(2)}_{y,ij} 
        \left[ \left( Q_y s_y  \right)_{ij} \right]^2
        &=  \frac{I} {\sqrt{n}}    C^{(1,\beta)}_y
           + \frac{J} {\sqrt{n}}    C^{(2,\beta)}_y 
           +  C^{(3,\beta)}_y + o_P(1) ,
     \end{align*}
     where
     \begin{align*}
            C^{(1,\beta)}_y
            &:=  - \frac 1 2 
        W_y^{-1}  \;  \frac 1 I       \sum_{(i,j) \in \mD}   \; 
           \widetilde x_{y,ij}          
           \left( \Lambda^{(1)}_{y,ij} \right)^{-1}
      \Lambda^{(2)}_{y,ij} 
        \left[ \left( Q^{(2)}_y s_y  \right)_{ij} \right]^2 ,
        \\
            C^{(2,\beta)}_y
            &:=  - \frac 1 2 
        W_y^{-1}  \;  \frac 1 I       \sum_{(i,j) \in \mD}   \; 
           \widetilde x_{y,ij}          
           \left( \Lambda^{(1)}_{y,ij} \right)^{-1}
      \Lambda^{(2)}_{y,ij} 
        \left[ \left( Q^{(3)}_y s_y  \right)_{ij} \right]^2 ,
      \\  
       C^{(3,\beta)}_y
            &:=  -  
        W_y^{-1}  \frac 1 {\sqrt{n}}      \sum_{(i,j) \in \mD}   \; 
           \widetilde x_{y,ij}          
           \left( \Lambda^{(1)}_{y,ij} \right)^{-1}
      \Lambda^{(2)}_{y,ij} 
         \left( Q^{(2)}_y s_y  \right)_{ij}    \left( Q^{(3)}_y s_y  \right)_{ij}   .
     \end{align*}
     Using that $\Ep    \left[ \left( Q^{(2)}_y s_y  \right)_{ij} \right]^2  = Q^{(2)}_{y,ij,ij}=
      \Lambda^{(1)}_{y,ij}   \left( \sum_{j' \in \mD_i} \Lambda^{(1)}_{y,ij'}\right)^{-1} $
      and 
      $\Ep    \left[ \left( Q^{(3)}_y s_y  \right)_{ij} \right]^2  = Q^{(3)}_{y,ij,ij}=
      \Lambda^{(1)}_{y,ij}   \left( \sum_{i' \in \mD_j} \Lambda^{(1)}_{y,i'j}\right)^{-1} $
      we find that 
      \begin{align*}
           \Ep   C^{(1,\beta)}_y &=  B^{(\beta)}_y,
           &
            \Ep   C^{(2,\beta)}_y &=  D^{(\beta)}_y.
      \end{align*}
      Furthermore, using 
      the expressions for $\left( Q^{(2)}_y s_y  \right)_{ij}$ and $\left( Q^{(3)}_y s_y  \right)_{ij}$
      in  \eqref{ExpressQ23s} above we can write,  for given $\ell \in \{1,\ldots, d_x\}$,
      \begin{align*}
             \left[ W_y\,  C^{(1,\beta)}_y \right]_\ell
             &= - \frac 1 2 \, \frac 1 I  \sum_{i=1}^I  
                c^{(n)}_{y,i}  
                  \left( \frac 1 {\sqrt{|\mD_i|}} \sum_{j \in \mD_i}   \partial_{\pi} \ell_{y,ij} \right)^2 ,
            \\      
            \left[ W_y\,  C^{(2,\beta)}_y \right]_\ell  
            &= - \frac 1 2 \,  \frac 1 J  \sum_{j=1}^J  
                d^{(n)}_{y,j}  
                  \left( \frac 1 {\sqrt{|\mD_j|}} \sum_{i \in \mD_j}   \partial_{\pi} \ell_{y,ij} \right)^2 ,
      \end{align*}
      where
      \begin{align*}
           c^{(n)}_{y,i}  &=  \frac{ |\mD_i|^{-1} \sum_{j \in \mD_i}  \widetilde x_{y,ij}          \Lambda^{(2)}_{y,ij}  } 
           {\left( |\mD_i|^{-1} \sum_{j \in \mD_i} \Lambda^{(1)}_{y,ij} \right)^2}  ,
           &
           d^{(n)}_{y,j}  &=\frac{ |\mD_j|^{-1} \sum_{i \in \mD_j}  \widetilde x_{y,ij}          \Lambda^{(2)}_{y,ij}  } 
           {\left( |\mD_j|^{-1} \sum_{i \in \mD_j} \Lambda^{(1)}_{y,ij} \right)^2} ,
      \end{align*}
      which are of order $O_P(1)$, uniformly over $y$ and $i$ and $j$.      
      By employing part $(ii)$ of Corollary~\ref{cor:Score2} we thus find that
      \begin{align*}
            C^{(1,\beta)}_y -  \Ep   C^{(1,\beta)}_y &= o_P(1) ,
            &
            C^{(2,\beta)}_y -  \Ep   C^{(2,\beta)}_y &= o_P(1).
      \end{align*}
      Finally, again using \eqref{ExpressQ23s} we can write
      \begin{align*}
            C^{(3,\beta)}_y
            &=  -  
        W_y^{-1}  \frac 1 {\sqrt{n}}      \sum_{(i,j) \in \mD}   
          \frac{ \widetilde x_{y,ij}                   
      \Lambda^{(2)}_{y,ij} }
      {   |\mD_i|^{1/2}  |\mD_j|^{1/2}  }
         \frac{  \left( |\mD_i|^{-1/2}  \sum_{j' \in \mD_i} \partial_{\pi} \ell_{y,ij'} \right)  \left(   |\mD_j|^{-1/2}  \sum_{i' \in \mD_j} \partial_{\pi} \ell_{y,i'j} \right) } {\left( |\mD_i|^{-1} \sum_{j' \in \mD_i} \Lambda^{(1)}_{y,ij'} \right)
            \left(   |\mD_j|^{-1} \sum_{i' \in \mD_j} \Lambda^{(1)}_{y,i'j} \right)}  ,
      \end{align*}
      and therefore      
      \begin{align*}
            \left\| C^{(3,\beta)}_y \right\|
            &\leq   
            \max_{j \in {\bf J}}  \left|  |\mD_j|^{-1/2}  \sum_{i \in \mD_j} \partial_{\pi} \ell_{y,ij}   \right|
            \\ & \quad
        \times \left\|    
        W_y^{-1}  \frac 1 {\sqrt{n}}      \sum_{(i,j) \in \mD}   
          \frac{ \widetilde x_{y,ij}                   
      \Lambda^{(2)}_{y,ij} }
      {   |\mD_i|^{1/2}  |\mD_j|^{1/2}  }
         \frac{  \left( |\mD_i|^{-1/2}  \sum_{j' \in \mD_i} \partial_{\pi} \ell_{y,ij'} \right)   } {\left( |\mD_i|^{-1} \sum_{j' \in \mD_i} \Lambda^{(1)}_{y,ij'} \right)
        \left(  |\mD_j|^{-1} \sum_{i' \in \mD_j} \Lambda^{(1)}_{y,i'j} \right)  }  \right\|
       \\
          &=  \left(   \max_{i \in I}  |\mD_i|^{-1/2} \right)
           \left( \max_{j \in {\bf J}}  \left|  |\mD_j|^{-1/2}  \sum_{i \in \mD_j} \partial_{\pi} \ell_{y,ij}   \right| \right)
             \left\|   \frac 1 {\sqrt{n}} \sum_{(i,j) \in \mD}   
                   e^{(n)}_{y,i} \; \partial_{\pi} \ell_{y,ij}
             \right\|
        ,
      \end{align*} 
      where
      \begin{align*}
             e^{(n)}_{y,i}
             &=  W_y^{-1} 
              |\mD_i|^{-1/2}
           \sum_{j \in \mD_i}     |\mD_j|^{-1/2} 
         \frac{ \widetilde x_{y,ij}                   
      \Lambda^{(2)}_{y,ij}     } {\left( |\mD_i|^{-1} \sum_{j' \in \mD_i} \Lambda^{(1)}_{y,ij'} \right)
        \left(  |\mD_j|^{-1} \sum_{i' \in \mD_j} \Lambda^{(1)}_{y,i'j} \right)  }  ,
      \end{align*}
      which is of order one, uniformly over $y$ and $i$.
      Thus, by applying Corollary~\ref{cor:ScoreFE},
      and Corollary~\ref{cor:Score2}
      with $b^{(n)}_{y,hij}$ equal to the elements of the $d_x$-vector $e^{(n)}_{y,i}$ (i.e. no $j$-dependence), 
      we find
      that $  \left\| C^{(3,\beta)}_y \right\| = O_P(n^{-1/4}) o_P\left(  n^{1/12}  \right) O_P \left(  n^{1/6}  \right) = o_P(1)$.
 Combining the above we conclude
 $$ - \frac 1 2 
        W_y^{-1}  \;   n^{-1/2}       \sum_{(i,j) \in \mD}   \; 
           \widetilde x_{y,ij}          
           \left( \Lambda^{(1)}_{y,ij} \right)^{-1}
      \Lambda^{(2)}_{y,ij} 
        \left[ \left( Q_y s_y  \right)_{ij} \right]^2
      -\left(  n^{-1/2}  I    B^{(\beta)}_y
                   +  n^{-1/2}  J      D^{(\beta)}_y \right)   \rightarrow_P  0 .
        $$
      The proof for  
        $$  \frac  1 {2 \sqrt{n}} \,  \sum_{(i,j) \in \mD}  
         \left( \Lambda^{(1)}_{y,ij} \right)^{-1}
         \left( \Lambda^{(2)}_{y,ij,k} - \Lambda^{(2)}_{y,ij} \Psi_{y,ij,k} \right) \left[ \left( Q_y s_y  \right)_{ij}  \right]^2   
       - \left( n^{-1/2}  I    B^{(\Lambda)}_{y,k}
                   +  n^{-1/2}  J      D^{(\Lambda)}_{y,k} \right)  \rightarrow_P   0
        $$ 
     is analogous.
\end{proof}

\section*{\bf Proof of Lemma~\ref{lemma:bias_estimators}}

\begin{proof}[\bf Proof of Lemma~\ref{lemma:bias_estimators}] Let
   \begin{align*}
           \widetilde x_{y}(\pi_y)
         &=  x_{y}  -   
    \left[w^{(2)},w^{(3)} \right]  \left( \left[w^{(2)},w^{(3)} \right]' \Lambda^{(1)}(\pi_y) \left[w^{(2)},w^{(3)} \right] \right)^\dagger \left[w^{(2)},w^{(3)} \right]'  \left( \Lambda^{(1)}(\pi_y) \right)    x_{y}   ,
   \end{align*}
   and
   \begin{align*}
       W_y(\pi_y) = \frac 1 {n} \sum_{(i,j) \in \mD} \Lambda^{(1)}(\pi_{y,ij}) \, \widetilde x_{y,ij}(\pi_y) \, \widetilde x_{y,ij}'(\pi_y) ,
   \end{align*}
   and\footnote{
   Note that instead of $B_y^{(\beta)}(\pi_y) $
   we could simply write $B^{(\beta)}(\pi_y) $ here, because all the dependence on $y$
   is through the parameter $\pi_y$.
   The only reason to write $B_y^{(\beta)}(\pi_y) $ is to avoid confusion
   with the notation $B^{(\beta)}(y)$ in the main text.
   }
    \begin{align*}
         B_y^{(\beta)}(\pi_y)  =  -  \frac 1 2    W_y(\pi_y)^{-1}  \left[ \frac 1 {I}  \sum_{i=1}^I  
               \frac{J^{-1} \sum_{j \in \mD_i}    \Lambda^{(2)}(\pi_{y,ij}) \,  \widetilde x_{y,ij}(\pi_y)  }
                      {J^{-1} \sum_{j \in \mD_i}    \Lambda^{(1)}(\pi_{y,ij}) } \right] .
    \end{align*}
    Then we can write $ B^{(\beta)}_y =   B_y^{(\beta)}(\pi^0_y)  $
    and   $ \widehat B^{(\beta)}_y  = B_y^{(\beta)}(\widehat \pi_y)  $.
    The consistency result for $\widehat B^{(\beta)}_y= \widehat B^{(\beta)}(y)$  follows from an expansion of $= B_y^{(\beta)}(\widehat \pi_y) $ 
    in $\widehat \pi_y$ around $\pi^0_y$.
    $\Lambda^{(1)}(\pi_{y,ij}) $
     and $\Lambda^{(2)}(\pi_{y,ij}) $
     and $\left( \Lambda^{(1)}(\pi_{y,ij}) \right)^{-1}$ are all uniformly bounded
     over $\pi_{y,ij} \in [\pi_{\min},\pi_{\max}]$,
     for any bounded interval $[\pi_{\min},\pi_{\max}]$.
    Using this one obtains
     \begin{align*}
          b_n
          := \sup_{y \in  \mY} \max_{(i,j) \in \mD} 
          \sup_{\pi_y \in [\pi_{\min},\pi_{\max}]^n}
          \left\|
           \frac{ \partial B_y^{(\beta)}(\pi_y)   }
          {\partial \pi_{y,ij}} \right\|     
          = O_P(n^{-1}) ,
     \end{align*}
     because any individual $\pi_{y,ij}$ only enters via an appropriately normalized sample average into $B_y^{(\beta)}(\pi_y) $.  Indeed for any function of the form
     $$
      B(\pi_y)  =  \frac 1 {I}  \sum_{i=1}^I  
               \frac{J^{-1} \sum_{j \in \mD_i}    f_1(\pi_{y,ij})}
                      {J^{-1} \sum_{j \in \mD_i}    f_2(\pi_{y,ij}) }  ,
     $$
     where $f_1$ and $f_2$ are differentiable with bounded derivatives $f_1'$ and $f_2'$,
     $$
     \frac{\partial B(\pi_y)}{\partial \pi_{y,ij}} =  \frac 1 {IJ} \frac{ f_1'(\pi_{y,ij})J^{-1} \sum_{j' \in \mD_i}    f_2(\pi_{y,ij'}) - J^{-1} \sum_{j' \in \mD_i}    f_1(\pi_{y,ij'})f_2'(\pi_{y,ij})}
                      {\left[J^{-1} \sum_{j' \in \mD_i}    f_2(\pi_{y,ij'})\right]^2 } = O_P(n^{-1}).
     $$
         
      Lemma~\ref{lemma:ScoreExpansionPi} together with 
    Lemma~\ref{lemma:PropQ}$(ii)$
     and Lemma~\ref{lemma:ScoreAve}$(i)$ guarantee that 
     \begin{align*}
          \sup_{y \in  \mY} \max_{(i,j) \in \mD}  \left|  \widehat \pi_{y,ij} - \pi^0_{y,ij} \right| = o_P(n^{-1/6}) = o_P(1).
     \end{align*}
     By a mean value of expansion in $\widehat \pi_y$ around $\pi^0_y$ we thus obtain
     \begin{align*}
           \sup_{y \in  \mY} 
           \left\| B_y^{(\beta)}(\widehat \pi_y) - B_y^{(\beta)}(\pi^0_y) \right\|
          \leq   b_n  \sum_{(i,j) \in \mD}   \left|  \widehat \pi_{y,ij} - \pi^0_{y,ij} \right| = o_P(1).
     \end{align*}
     The proof of consistency for the other  estimators in Lemma~\ref{lemma:bias_estimators} is analogous.
     \end{proof}

\end{document}